\title{List Decoding for Oblivious Arbitrarily Varying MACs: Constrained and Gaussian}
\author{
\IEEEauthorblockN{
Yihan Zhang\IEEEauthorrefmark{1}
}\\
\IEEEauthorblockA{
\IEEEauthorrefmark{1}Dept.\ of Information Engineering, The Chinese University of Hong Kong \\\href{mailto:zy417@ie.cuhk.edu.hk}{zy417@ie.cuhk.edu.hk} 
}
}
\begin{document}
\maketitle

\begin{abstract}
This paper provides upper and lower bounds on list sizes of list decoding for two-user oblivious arbitrarily varying multiple access channels (AVMACs). 
An oblivious AVMAC consists of two users who wish to transmit messages  (without cooperation) to a remote receiver, a malicious jammer who  only has access to the codebooks of both users (which are also known to every party), and a receiver who is required to decode the message pair sent by both users.
The transmitters send codewords which encode messages subject to input constraints.
The jammer, without knowing the transmitted codeword pair, injects adversarial noise subject to state constraints so as to actively corrupt the communication from both users to the receiver. 
It was left as an open question in \cite{cai-2016-list-dec-obli-avmac} to nail down the smallest list sizes for \emph{constrained} AVMACs.
Our inner and outer bounds are based on a judicious notion of symmetrizability for AVMACs introduced by \cite{cai-2016-list-dec-obli-avmac} with twists to incorporate input and state constraints. 
The analysis follows techniques by Csisz\'ar and Narayan \cite{csiszar-narayan-it1988-obliviousavc}.
When no constraints are imposed, our bound collapse to prior results by Cai \cite{cai-2016-list-dec-obli-avmac} which characterized the list-decoding capacity region of \emph{unconstrained} AVMACs. 
Techniques used in this paper can also be extended to the Gaussian case and we \emph{characterize} the list-decoding capacity region for Gaussian AVMACs. 
The converse argument  relies on a  bounding technique recently used by Hosseinigoki and Kosut \cite{hosseinigoki-kosut-2018-oblivious-gaussian-avc-ld}.
\end{abstract}

\section{Introduction}\label{sec:intro}
\emph{Oblivious arbitrarily varying channels} (AVCs), introduced by Blackwell, Breiman and Thomasian \cite{blackwell-avc-1960}, models communication media that is governed by active adversaries with limited knowledge. 
Specifically,  AVCs are channels which takes transmitted signals as inputs and outputs signals according to the \emph{state} of the channel which may vary in an arbitrary manner as the adversary desires. 
The goal of the adversary, who we call James\footnote{He is named so since he can maliciously ``jam'' the channel.}, is to prevent communication from happening from the input end to the output end by introducing carefully designed (not necessarily randomly drawn from certain fixed distribution) noise. 
It turns out that the knowledge that James possesses plays an crucial role in the study of AVCs. 
We say that the  channel (or the adversary) is \emph{oblivious} if the adversary only has access to the codebook(s) used by the transmitter(s), but not the actually transmitted signals. 
Put in other words, the noise James injects cannot depend on the transmitted codeword; or, he is required to fix his jamming vector before the the transmission is instantiated.
On the contrary, if James does not only know the codebook but also the transmitted codeword, then he is said to be \emph{omniscient} \cite{csiszar-korner-1981}.
The study of omniscient AVCs  essentially boils down to zero-error combinatorial questions regarding high-dimensional packing and the capacity for such channels are widely open even for very simple AVCs, e.g., bit-flip channels.
Oblivious AVCs serve as an interpolation between the worst-case model, omniscient AVCs, and the average-case model, Shannon channels, i.e., channels with random noise obeying certain fixed distribution. 
In the  point-to-point scenario, there have been a handful of capacity results.
Empirically, the capacity of point-to-point oblivious AVCs exhibits similar behaviours to the capacity of its Shannon counterpart.
Indeed,  it is provably known \cite{csiszar-narayan-it1988-obliviousavc} that the best strategy for James is essentially to mimic a Shannon channel, i.e., transmitting random noise.

In terms of model, this paper is a continuation of this line of research towards multiuser setting, in particular, the two-user multiple access setting. 
Informally, (two-user) AVMACs model uplink communication with an oblivious adversary. 
Two transmitters who are not allowed to cooperate both want to send messages to a single receiver.
The channel takes two codewords from both users and transforms it according to the channel transition law. 
James gets to control the channel law by choosing a state sequence only based on two users' codebooks (which are public to every party).
The channel follows a different law for each different state.
The receiver, receiving a noisy word output by the channel, aims to estimate both messages reliably.

In terms of communication goal, this paper pushes our understanding beyond unique decoding capacity. 
Instead of insisting on the decoder to exactly reconstruct the transmitted message, we relaxed the goal and allow the decoder to output a \emph{list} of messages required to contain the correct message. 
Such a requirement is known as \emph{list decoding}, introduced by Elias \cite{elias-1957-listdec} and Wozencraft \cite{wozencraft-1958-listdec}. 
It was extensively studied against both worst-case and average-case errors. 
For worst-case notion of list decoding, improving the performance  and constructing explicit list-decodable codes attracted much attention in computer science community.
Despite being interesting in its own right, the concepts and techniques of worst-case list decoding finds numerous applications  in computational complexity \cite{guruswami-2006-list_dec_avgcasecplx}, the theory of pseudorandomness \cite{dmoz-2019-pseudorandom_from_hardness}, learnings theory \cite{dks-2018-list_dec_est_and_learn}, cryptography \cite{goldreich-levin-1989-hardcorepredicate-listdec}, etc. 
As for list decoding for non-omniscient channels, besides being an important subject by itself, list decoding is a useful primitive which allows us to invoke as a proof technique to get intermediate results \cite{chen-et-al}. 
In many cases, it turns out that one can first list decode to a small sized uncertainty set and then disambiguate it using extra information. 

\section{Problem formulation}\label{sec:problem_formulation}
Throughout this paper, consider an oblivious arbitrarily varying multiple access channel (AVMAC) 
\[\cA = (\cX,\cY,\cS,\cZ,f_1,f_2,\Gamma_1,\Gamma_2,g,\Lambda,W_{\bfz|\bfx,\bfy,\bfs}),\]
formally as follows.
The message sets of transmitter one and two are denoted by $ \cM\coloneqq[ L2^{nR_1}] $ and $ \cW\coloneqq[ L2^{nR_2}] $, respectively.
The messages $ \bfm $ and $ \bfw $ to be transmitted by user one and two are assumed to be uniformly distributed in $ \cM $ and $ \cW $, respectively.
For any $ m\in[ L2^{nR_1}],w\in[ L2^{nR_2}] $,
encoder one and two encode them into $\vx_1\in\cX^n$ and $\vy\in\cY^n$ respectively. The adversary designs an adversarial noise $\vs\in\cS^n$ \emph{only} based on his knowledge of the codebooks used by both encoders (not based on any knowledge of the transmitted codewords). Given the  channel output $\vz$, the receiver aims to decode to a list $\cL$ of at most $L$ message pairs which contains the transmitted $(m,w)$. We impose input and state constraints as follows. Let
\[f_1:\cX\to\bR_{\ge0},\;f_2\colon \cY\to\bR_{\ge0}\]
be cost functions of input symbols and let $g:\cS\to\bR_{\ge0}$ be a cost function of state symbols. Further define
\begin{align*}
    f_1(\vx) = &\frac{1}{n}\sum_if_1(\vx(i)),\;f_2(\vy)=\frac{1}{n}\sum_if_2(\vy(i)),\\
    g(\vs) = &\frac{1}{n}\sum_ig(\vs(i)).
\end{align*}
We require all codewords to satisfy
\[f_1(\vx)\le\Gamma_1,\;f_2(\vy)\le\Gamma_2,\]
and every state vector to satisfy $g(\vs)\le\Lambda$. See Fig. \ref{fig:list_dec_obli-avmac} for the system diagram of list decoding for oblivious AVMACs.
\begin{figure}[htbp]
  \centering
  \includegraphics[width=0.95\textwidth]{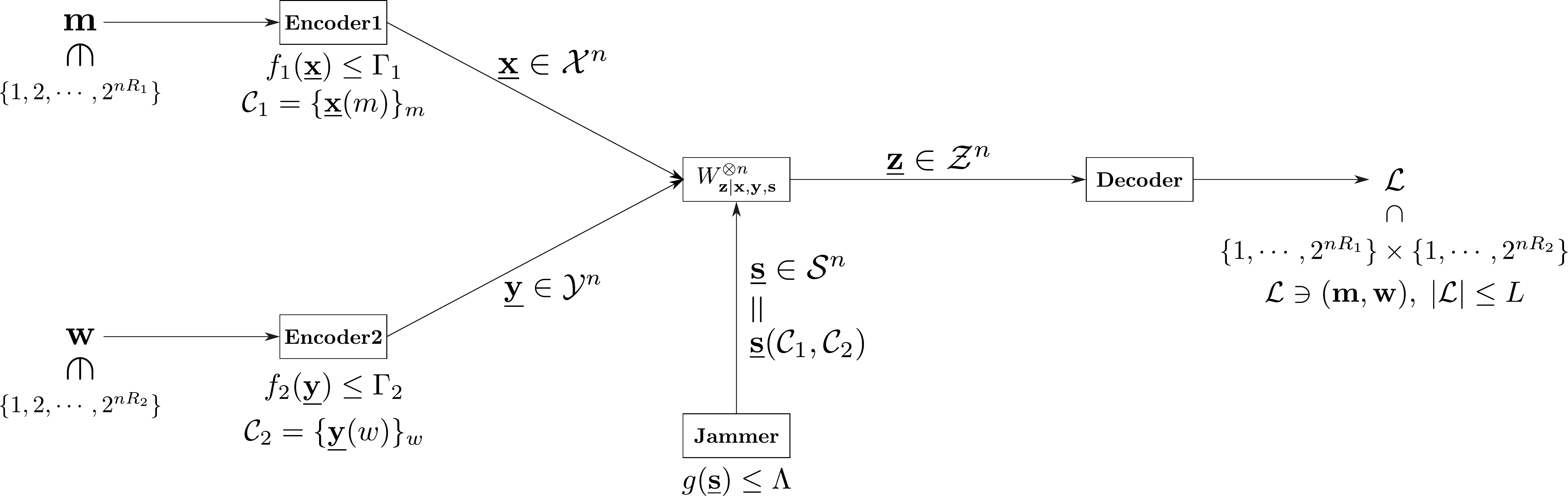}
  \caption{List decoding for oblivious AVMACs.}
  \label{fig:list_dec_obli-avmac}
\end{figure}

We are interested in proving inner and outer bounds on the $L$-list decoding capacity region for oblivious AVMACs described above. 

\section{Organization of the paper}\label{sec:organization}
The rest of the paper is structured as follows.
In Sec. \ref{sec:prior_work}, we survey relevant prior work pertaining oblivious single/multi-user AVCs with discrete/continuous alphabet, in the unique/list decoding setting.
Our main results regarding list-decoding capacity of input-and-state-constrained oblivious AVMACs are stated in Sec. \ref{sec:results}.
Before proceeding with the full proof, we fix our notational convention in Sec. \ref{sec:notation} and provide necessary preliminaries in Sec. \ref{sec:prelim}.
Inner and outer bounds in our main theorem are proved in Sec. \ref{sec:obli_avmac_constrained}.
Analogous results and their proofs for Gaussian channels are stated and sketched in Sec. \ref{sec:gaussian_avmac}.

\section{Prior work}\label{sec:prior_work}
We are only concerned with deterministic code capacity of oblivious adversarial channels. 
\subsection{Discrete alphabet}

The capacity of oblivious AVCs with and without constraints was given by \cite{csiszar-narayan-it1988-obliviousavc}. Hughes \cite{hughes-1997-list-avc} used their techniques to obtain the $L$-list decoding capacity of oblivious AVCs without constraint for any $L$. When state constraints are imposed, upper and lower bounds on $L$-list decoding capacity of oblivious AVCs were given by \cite{sarwate-gastpar-2012-list-dec-avc-state-constr}. They do not match in general for reasons we illustrate later. The $L$-list decoding capacity of the Gaussian counterpart is obtained by \cite{hosseinigoki-kosut-2018-oblivious-gaussian-avc-ld}. The work \cite{sarwate-gastpar-2012-list-dec-avc-state-constr} and \cite{hosseinigoki-kosut-2018-oblivious-gaussian-avc-ld} used essentially  the same techniques as \cite{csiszar-narayan-it1988-obliviousavc}. 

For oblivious AVMACs, the capacity region is given by \cite{ahlswedecai-1999-obli-avmac-no-constr}. However, their result only gave positive rate when the capacity region has nonempty interior. The characterization is obtained when their result is combined with the dichotomy theorem. Their techniques (so-called elimination techniques) do not work in the presence of state constraints. Recently, \cite{pereg-steinberg-2019-obli-avmac-with-wo-constr} gave a characterization of the capacity region of oblivious AVMACs with (and without) constraints using \cite{csiszar-narayan-it1988-obliviousavc}'s techniques. This, in particular, recovers the result by \cite{ahlswedecai-1999-obli-avmac-no-constr} without resorting to the dichotomy theorem.

For list decoding oblivious AVMACs, Cai \cite{cai-2016-list-dec-obli-avmac}  proposed a judicious notion of symmetrizability and used the elimination technique to obtain the $L$-list decoding capacity of oblivious AVMACs without state constraints. 

Apparently, two missing pieces along this line of research is  the $L$-list decoding capacity region of AVMACs \emph{with} constraints and that of  the Gaussian counterpart. 

\subsection{Continuous alphabet}
For point-to-point single-user oblivious Gaussian AVCs, the deterministic code capacity   is determined by Csisz\`ar and Narayan 
\cite{csiszar-narayan-1991-gavc}. The list decoding capacity is recently given by \cite{hosseinigoki-kosut-2018-oblivious-gaussian-avc-ld}. For two-user oblivious Gaussian AVMACs, the deterministic code capacity region is obtained as a corollary in \cite{pereg-steinberg-2019-obli-avmac-with-wo-constr}. In what follows, we aim to nail down the list decoding capacity region of oblivious Gaussian AVMACs.

\section{Main results}\label{sec:results}
We state inner and outer bounds that we are going to prove in the rest of this paper. To this end, we need a sequence of definitions. 

Define the collection of (generic) bipartite graphs:
\begin{align}
    \sB_L\coloneqq&\curbrkt{\cB=(\cI,\cJ,\cE)\colon \begin{array}{l}
         \cI = \curbrkt{1,\cdots,I},\; I\le L,\\
         \cJ = \curbrkt{1,\cdots,J},\; J\le L,\\
         \cE\subset\cI\times\cJ,\;|\cE| \le L  
    \end{array} }.\notag
\end{align}
We assume that vertices in $\cI,\cJ$ are listed in ascending order. Let 
\begin{align}
    \sL_L(m,w)\coloneqq&\curbrkt{\cL = (\cS,\cT,\cF)\colon \begin{array}{l}
         \cS\subset[M],\cT\subset[W],  \\
         \cF\subset\cS\times\cT,\;\card{\cF}\le L, \\
         m\in\cS,w\in\cT,(m,w)\in\cF
    \end{array} }\notag
\end{align}
denote the collection of bipartite graphs realized by messages. We assume that messages in $\cS,\cT$ are listed in ascending order.

Let $\cB = (\cI,\cJ,\cE)$ denote a bipartite graph with left vertex set $\cI = [I]$, right vertex set $\cJ = [J]$ and edge set $\cE \subset \cI\times\cJ$.
Assume $\cB$ has no isolated vertex. 

For a bipartite graph $\cB = (\cI,\cJ,\cE)$, define the set of $\cB$-symmetrizing distributions 
\begin{align*}
    \cQ_\symm(\cB) \coloneqq &\curbrkt{ Q^{(\cB)}_{\bfs|\bfx^{I-1},\bfy^{J-1}}\in\Delta(\cS|\cX^{I-1}\times\cY^{J-1})\colon  \begin{array}{l}
          \sum_{s}W(z|x_i,y_j,s)Q(s|x^{\cI\setminus i},y^{\cJ\setminus j}) \\
          = \sum_s W(z|x_{\sigma(i)},y_{\pi(j)})Q(s|x^{\sigma(\cI\setminus i)},y^{\pi(\cJ\setminus j)}),\\
          \text{for all }(i,j)\in\cE,\\
          \text{for all }\sigma\in S_I,\pi\in S_J\text{ s.t. }(\sigma,\pi)\cE = \cE,\\
          \text{for all }x^\cI\in\cX^I,y^\cJ\in\cY^J,z\in\cZ
    \end{array} }.
\end{align*}
Note that both sides of the equation in the definition is a distribution on $\cZ|\cX^I\times\cY^J$. This is a judicious notion due to Cai \cite{cai-2016-list-dec-obli-avmac}.

We now define two notions of symmetrizability. Symmetrizability is the largest list size the adversary can cause subject to his constraints. For technical reasons that we will illustrate later, we need a strong notion for outer bound and a weak notion for inner bound.

Define the strong symmetrizability $L_{s}(P_{\bfu,\bfx,\bfy})$ w.r.t. $P_{\bfu,\bfx,\bfy}$ as
\begin{align*}
    &L_s(P_{\bfu,\bfx,\bfy})\\
    \coloneqq &\max\curbrkt{ L\colon  \begin{array}{l}
         \exists \cB = (\cI,\cJ,\cE)\text{ s.t. } |\cE| = L, \\
         \displaystyle\max_{\substack{ P_{\bfu,\bfx^{I-1},\bfy^{J-1}}\\P_{\bfu,\bfx_i,\bfy_j} = P_{\bfu,\bfx,\bfy},\;\forall (i,j)\in\cE}}\min_{ \curbrkt{Q^{(u)}_{\bfs|\bfx^{I-1},\bfy^{J-1}}}_u\subset\cQ_\symm(\cB) }  \sum_{u,x^{I-1},y^{J-1},s}P(u)P(x^{I-1},y^{J-1}|u)Q^{(u)}(s|x^{I-1},y^{J-1})g(s)<\Lambda
    \end{array} }.
\end{align*}
Note that the max and min can be reversed since the objective function is linear. Strong symmetrizability will be used to obtain outer bounds. 

Define the weak symmetrizability $L_{w}(P_{\bfu,\bfx,\bfy})$ w.r.t. $P_{\bfu,\bfx,\bfy}$ as
\begin{align*}
    &L_w(P_{\bfu,\bfx,\bfy})\\
    \coloneqq &\max\curbrkt{ L \colon  \begin{array}{l}
         \exists\cB = (\cI,\cJ,\cE)\text{ s.t. } |\cE| = L, \\
         \displaystyle\min_{\curbrkt{ Q^{(u)}_{\bfs|\bfx^{I-1},\bfy^{J-1}}}_u\subset\cQ_\symm(\cB) }  \sum_{u,x^{I-1},y^{J-1},s}P(u)P_{\bfx|\bfu}^{\otimes(I-1)}(x^{I-1}|u)P_{\bfy|\bfu}^{\otimes(J-1)}(y^{J-1}|u)Q^{(u)}(s|x^{I-1},y^{J-1})g(s)<\Lambda
    \end{array} }.
\end{align*}
Weak symmetrizability will be used to obtain inner bounds. The difference from the strong one is that the maximization over all joint distributions of $\bfx^{I-1},\bfy^{J-1}$ is replaced by the product distribution. It is easy to see that $L_s(P_{\bfu,\bfx,\bfy})\le L_w(P_{\bfu,\bfx,\bfy})$ for any $ P_{\bfu,\bfx,\bfy} $.

We are ready to state our inner and outer bounds on the $L$-list decoding capacity region of $\cA$. Fix $L$, the inner bound reads that
\begin{align*}
    C_{\inner} = &\bigcup_{\substack{P_{\bfu,\bfx,\bfy} = P_{\bfu}P_{\bfx|\bfu}P_{\bfy|\bfu}\\ \expt{f_1(\bfx)}\le\Gamma_1,\;\expt{f_2(\bfy)}\le\Gamma_2\\ L_w(P_{\bfu,\bfx,\bfy})<L}}\curbrkt{(R_1,R_2)\colon \begin{array}{rl}
        R_1\le&\inf I(\bfx;\bfz|\bfy,\bfu), \\
        R_2\le&\inf I(\bfy;\bfz|\bfx,\bfu), \\
        R_1+R_2\le&\inf I(\bfx,\bfy;\bfz|\bfu)
    \end{array}  }.
\end{align*}
Both expectations are taken over $ P_{\bfu,\bfx,\bfy} $.
All infimums are taken over jamming distributions $P_{\bfs|\bfu}\in\Delta(\cS|\cU)$ such that
$\sum_{u,s}P(u)P(s|u)g(s)\le\Lambda$. All mutual information is evaluated w.r.t. the distribution $P_{\bfu}P_{\bfx|\bfu}P_{\bfy|\bfu}P_{\bfs|\bfu} W_{\bfz|\bfx,\bfy,\bfs}$.

Replacing the weak symmetrizability $L_w(P_{\bfu,\bfx,\bfy})$ in $C_{\inner}$ with $L_s(P_{\bfu,\bfx,\bfy})$, we get our outer bound $C_{\out}$,
\begin{align*}
    C_{\out} = &\bigcup_{\substack{P_{\bfu,\bfx,\bfy} = P_{\bfu}P_{\bfx|\bfu}P_{\bfy|\bfu}\\ \expt{f_1(\bfx)}\le\Gamma_1,\;\expt{f_2(\bfy)}\le\Gamma_2\\ L_s(P_{\bfu,\bfx,\bfy})<L}}\curbrkt{(R_1,R_2)\colon \begin{array}{rl}
        R_1\le&\inf I(\bfx;\bfz|\bfy,\bfu), \\
        R_2\le&\inf I(\bfy;\bfz|\bfx,\bfu), \\
        R_1+R_2\le&\inf I(\bfx,\bfy;\bfz|\bfu)
    \end{array}  }.
\end{align*}

Define 
\begin{align}
L_s^* \coloneqq& \min\curbrkt{ L_s(P_{\bfu,\bfx,\bfy})\colon \begin{array}{rl}
P_{\bfu,\bfx,\bfy} =& P_{\bfu} P_{\bfx|\bfu} P_{\bfy|\bfu}, \\
\expt{f_1(\bfx)}\le& \Gamma_1, \\
\expt{f_2(\bfy)}\le& \Gamma_2
\end{array} }, \notag \\
L_w^* \coloneqq& \min\curbrkt{ L_w(P_{\bfu,\bfx,\bfy})\colon \begin{array}{rl}
P_{\bfu,\bfx,\bfy} =& P_{\bfu} P_{\bfx|\bfu} P_{\bfy|\bfu}, \\
\expt{f_1(\bfx)}\le& \Gamma_1, \\
\expt{f_2(\bfy)}\le& \Gamma_2
\end{array} }. \notag 
\end{align}
If  $ L\le L_s^* $, then the $L$-list decoding capacity region is $  \curbrkt{ (0,0) } $. 

It is well known that by letting James transmit random noise drawn from certain i.i.d. distribution, the outer bound follows from the strong converse to list decoding (non-adversarial) MACs. 
Hence we omit the proof.

\section{Notation}\label{sec:notation}
\noindent\textbf{Random variables, vectors and matrices.}
Random variables are denoted by lower case letters in boldface or capital letters in plain typeface, e.g., $\bfm,\bfx,\bfs,U,W$, etc. Their realizations are denoted by corresponding lower case letters in plain typeface, e.g., $m,x,s,u,w$, etc. Vectors (random or fixed) of length $n$, where $n$ is the blocklength without further specification, are denoted by lower case letters with  underlines, e.g., $\vbfx,\vbfs,\vx,\vs$, etc. The $i$-th entry of a vector $\vx\in\cX^n$ is denoted by $\vx(i)$  since we can alternatively think $\vx$ as a function from $[n]$ to $\cX$. Same for a random vector $\vbfx$. 
Alternatively, we use $x^k$ to denote a length-$k$ $\cX$-valued vector $ x^k \coloneqq (x_1,\cdots,x_k) $.
For a finite index set $ \cI\subset\bZ_{>0} $, we use $ x^\cI $ to denote an $ \cX $-valued vector of length-$|\cI|$, each component of which is labelled by the corresponding element in $ \cI $.
For example, if $ \cI = \curbrkt{2,3,5,6,9} $, then $ x^{\cI} = (x_2,x_3,x_5,x_6,x_9) $.
Note that $ x^k = x^{[k]} $ in our convention, though we do not pursue the latter notation in this case.
Matrices are denoted by capital letters in boldface, e.g., $\bfP,\mathbf{\Sigma}$, etc. 
We sometimes write $\bfG_{n\times m}$ to explicitly specify its dimension. For square matrices, we write $\bfG_n$ for short. Letter $\bfI$ is reserved for identity matrix.  

\noindent\textbf{Sets.}
For $M\in\bZ_{>0}$, we let $[M]$ denote the set of first $M$ positive integers $\{1,2,\cdots,M\}$.
Sets are denoted by capital letters in calligraphic typeface, e.g., $\cC,\cI$, etc. 
With slight abuse of notation, a singleton set $\{a\}$ is still denoted by $a$. 
The same convention is followed when set operations are performed, e.g., $ \cA\setminus a = \cA\setminus\{a\}, a\cup b = \{a\}\cup\{b\} = \{a,b\} $, etc.
For any finite set $\cX$ and any integer $0\le k\le |\cX|$, we use $\binom{\cX}{k}$ to denote the collection of all subsets of $\cX$ of size $k$, i.e., 
\[\binom{\cX}{k}\coloneqq\curbrkt{\cY\subseteq\cX\colon\card{\cY}=k}.\]
Similarly, let 
\begin{align}
\binom{\cX}{\le k}\coloneqq\curbrkt{\cY\subseteq\cX\colon|\cY|\le k} \notag
\end{align}
denote the collection of all subsets of $\cX$ of size at most $k$.

An $n$-dimensional Euclidean ball centered at $\vx$ of radius $r$ is denoted by
\[\cB^n(\vx,r) \coloneqq \curbrkt{\vy\in\bR^n\colon \normtwo{\vy} \le r}.\]

We use $S_\cA$ or $S_{|\cA|}$ to denote the symmetric group on a finite set $\cA$. 
Permutations are typically denoted by lower case Greek letters.

\noindent\textbf{Functions.}
We use the standard Bachmann--Landau (Big-Oh) notation for asymptotics of real-valued functions in positive integers. 
Throughout the whole paper, $\log$ is to the base 2.
For $x\in\bR$, let $[x]^+\coloneqq\max\curbrkt{x,0}$.
For any $\cA\subseteq\Omega$, the indicator function of $\cA$ is defined as, for any   $x\in\Omega$, 
\[\one_{\cA}(x)\coloneqq\begin{cases}1,&x\in \cA\\0,&x\notin \cA\end{cases}.\]
At times, we will slightly abuse notation by saying that $\one_{\sfA}$ is $1$ when event $\sfA$ happens and 0 otherwise. Note that $\one_{\cA}(\cdot)=\indicator{\cdot\in\cA}$.
Let $ \normtwo{\cdot} $ denote the Euclidean/$L^2$-norm. Specifically, for any $\vx\in\bR^n$,
\[ \normtwo{\vx} \coloneqq\paren{\sum_{i=1}^n\vx_i^2}^{1/2}.\]

\noindent\textbf{Probability.}
The probability mass function (p.m.f.) of a discrete random variable $\bfx$ or a random vector $\vbfx$ is denoted by $P_{\bfx}$ or $P_{\vbfx}$, i.e., 
\[P_\bfx(x) \coloneqq \probover{\bfx\sim P_\bfx}{\bfx = x},\quad P_\vbfx(\vx) = \probover{\vbfx\sim P_\vbfx}{\vbfx = \vx},\]
for any $x\in\cX$ or $\vx\in\cX^n$.
If every entry of $\vbfx$ is independently and identically distributed (i.i.d.) according to $P_{\bfx}$, then we write $\vbfx\sim P_{\bfx}^{\tn}$, where $P_\bfx^\tn$ is a product distribution defined as
\[P_{\vbfx}(\vx)=P_{\bfx}^{\tn}(\vx)\coloneqq\prod_{i=1}^nP_{\bfx}(\vx(i)).\]
For a finite set $\cX$, $\Delta(\cX)$ denotes the probability simplex on $\cX$, i.e., the set of all probability distributions supported on $\cX$,
\[\Delta(\cX)\coloneqq\curbrkt{P_\bfx\in[0,1]^{\card{\cX}}\colon\sum_{x\in\cX}P_\bfx(x) = 1 }.\]
Similarly, $\Delta\paren{\cX\times\cY}$ denotes the probability simplex on $\cX\times\cY$,
\[\Delta\paren{\cX\times\cY}\coloneqq\curbrkt{P_{\bfx,\bfy}\in[0,1]^{\cardX\times\cardY}\colon\sum_{x\in\cX}\sum_{y\in\cY} P_{\bfx,\bfy}(x,y) = 1 }.\]
Let $\Delta(\cY|\cX)$ denote the set of all conditional distributions,
\[\Delta(\cY|\cX)\coloneqq\curbrkt{P_{\bfy|\bfx}\in\bR^{\cardX\times\cardY}\colon P_{\bfy|\bfx}(\cdot|x)\in\Delta(\cY),\;\forall x\in\cX}.\]
The general notion for multiple spaces is defined in the same manner.
For a joint distribution $P_{\bfx,\bfy}\in\Delta(\cX\times\cY)$, let $\sqrbrkt{P_{\bfx,\bfy}}_\bfx\in\Delta(\cX)$ denote the \emph{marginalization} onto the  variable $\bfx$, i.e., for $x\in\cX$,
\[\sqrbrkt{P_{\bfx,\bfy}}_\bfx(x) \coloneqq \sum_{y\in\cY}P_{\bfx,\bfy}(x,y).\]
Sometimes we simply write it as $P_\bfx$ (induced by $ P_{\bfx,\bfy} $) when the notation is not overloaded.


\section{Preliminaries}\label{sec:prelim}
\noindent\textbf{Probability.}
\begin{lemma}\label{eqn:move_permutation}
For any $ P_{\bfx_1,\cdots,\bfx_L}\in\Delta(\cX^L) $, any $ x_1,\cdots,x_L\in\cX $ and any $ \sigma\in S_L $, the following identity holds 
\begin{align}
P_{\bfx_1,\cdots,\bfx_L}(x_{\sigma(1)},\cdots,x_{\sigma(L)}) = P_{\bfx_{\sigma^{-1}(1)}, \cdots,\bfx_{\sigma^{-1}(L)}}(x_1,\cdots,x_L).
\notag
\end{align}
\end{lemma}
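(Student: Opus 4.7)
The plan is to show that both sides of the claimed identity describe the probability of the \emph{same} joint event, merely parametrized differently, so the result is essentially a re-indexing exercise. Concretely, I would first unfold the LHS by the definition of a joint p.m.f.\ on $\cX^L$:
\begin{align*}
P_{\bfx_1,\ldots,\bfx_L}(x_{\sigma(1)},\ldots,x_{\sigma(L)})
= \Pr\sqrbrkt{\, \bfx_i = x_{\sigma(i)} \text{ for all } i\in[L] \,}.
\end{align*}
On the RHS, the notation $P_{\bfx_{\sigma^{-1}(1)},\ldots,\bfx_{\sigma^{-1}(L)}}$ denotes the joint law of the reordered tuple $(\bfx_{\sigma^{-1}(1)},\ldots,\bfx_{\sigma^{-1}(L)})$ (a legitimate random vector in $\cX^L$), whence
\begin{align*}
P_{\bfx_{\sigma^{-1}(1)},\ldots,\bfx_{\sigma^{-1}(L)}}(x_1,\ldots,x_L)
= \Pr\sqrbrkt{\, \bfx_{\sigma^{-1}(i)} = x_i \text{ for all } i\in[L] \,}.
\end{align*}

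Next I would perform the change of indexing variable $j \coloneqq \sigma^{-1}(i)$, equivalently $i = \sigma(j)$. Since $\sigma\in S_L$ is a bijection of $[L]$, the quantifier ``for all $i\in[L]$'' is equivalent to ``for all $j\in[L]$'', and the constraint $\bfx_{\sigma^{-1}(i)} = x_i$ rewrites as $\bfx_j = x_{\sigma(j)}$. Thus the event defining the RHS coincides with the event defining the LHS, and the two probabilities are equal.

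There is no real obstacle here: the content of the lemma is purely combinatorial bookkeeping, asserting that applying $\sigma$ to the \emph{arguments} of the p.m.f.\ is the same as applying $\sigma^{-1}$ to the \emph{labels} of the random variables. The only place where one has to be careful is with the direction of the permutation, i.e., that it is $\sigma^{-1}$ and not $\sigma$ that appears on the subscripts of $\bfx$ on the RHS; this is precisely dictated by the identity $\sigma^{-1}\circ\sigma = \mathrm{id}$ used in the substitution above.
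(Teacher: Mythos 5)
Your argument is correct: both sides are the probability of the same event, and the substitution $j=\sigma^{-1}(i)$ with the bijectivity of $\sigma$ is exactly the right bookkeeping, including getting the direction of the permutation right. The paper states this lemma without proof, treating it as immediate, so your write-up simply supplies the routine verification that was omitted.
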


\begin{lemma}[Markov's inequality]\label{lem:markov}
If $X$ is a nonnegative random variable, then for any $a>0$, $ \prob{X\ge a}\le \expt{X}/a $. 
\end{lemma}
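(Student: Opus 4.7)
The plan is to give the standard one-line proof via a pointwise indicator bound, which is the cleanest route for a nonnegative random variable on a general probability space. First I would introduce the indicator $\one_{\curbrkt{X\ge a}}$ and establish the pointwise inequality $a\cdot\one_{\curbrkt{X\ge a}}\le X$ holding everywhere on the sample space. The verification splits into two trivial cases: on the event $\curbrkt{X\ge a}$ the left-hand side equals $a$ while the right-hand side is at least $a$, and on the complement $\curbrkt{X<a}$ the left-hand side is $0$ while the right-hand side is nonnegative by hypothesis on $X$. This is the only place where nonnegativity of $X$ and positivity of $a$ are used, and it is also where the bound would fail without these hypotheses.

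Next I would take expectations of both sides, using linearity and monotonicity of expectation, to obtain
\begin{align*}
a\cdot\prob{X\ge a} \;=\; \expt{a\cdot\one_{\curbrkt{X\ge a}}} \;\le\; \expt{X},
\end{align*}
and then divide through by $a>0$ to conclude $\prob{X\ge a}\le \expt{X}/a$, as desired.

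There is essentially no obstacle: the result is foundational, the proof is two inequalities, and the only care needed is to justify the pointwise bound on both branches. If one preferred a more measure-theoretic phrasing, I would instead write $\expt{X}=\int X\,\mathrm{d}\mathbb{P}\ge \int_{\curbrkt{X\ge a}} X\,\mathrm{d}\mathbb{P}\ge a\,\prob{X\ge a}$, but this is just a rephrasing of the same indicator argument. No machinery beyond the definition of expectation is required, and no auxiliary lemma from the paper is invoked.
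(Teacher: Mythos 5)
Your proof is correct and is the standard indicator-function argument; the paper states this lemma as a well-known fact without providing a proof, so there is nothing to compare against. The pointwise bound $a\cdot\one_{\curbrkt{X\ge a}}\le X$, followed by taking expectations and dividing by $a>0$, is exactly the canonical derivation and uses the hypotheses (nonnegativity of $X$, positivity of $a$) in precisely the right places.
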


\begin{lemma}[Chebyshev's inequality]\label{lem:cheby}
If $X$ is an integrable random variable with finite expectation and finite nonzero variance, then for any $a>0$, $ \prob{\abs{X_\expt{X}}\ge a}\le\var{X}/a^2 $.
\end{lemma}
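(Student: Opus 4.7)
The plan is to deduce Chebyshev's inequality directly from Markov's inequality (Lemma~\ref{lem:markov}), which is already stated immediately above. The key observation is that the event $\{|X - \expt{X}| \ge a\}$ is identical to the event $\{(X - \expt{X})^2 \ge a^2\}$ for any $a > 0$, since squaring is monotone on $[0,\infty)$.

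First I would introduce the auxiliary random variable $Y \coloneqq (X - \expt{X})^2$, which is nonnegative by construction, and observe that $\expt{Y} = \var{X}$ by the very definition of variance. The integrability hypothesis on $X$ together with finiteness of the variance guarantees $\expt{Y} < \infty$, so Markov's inequality applies. Next I would rewrite the target event as
\begin{align*}
\prob{|X - \expt{X}| \ge a} = \prob{(X - \expt{X})^2 \ge a^2} = \prob{Y \ge a^2}.
\end{align*}
Then I would apply Lemma~\ref{lem:markov} to $Y$ with threshold $a^2 > 0$, obtaining $\prob{Y \ge a^2} \le \expt{Y}/a^2 = \var{X}/a^2$, which is the claimed bound.

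There is essentially no obstacle here: the proof is a one-line reduction to Markov's inequality. The only thing worth being careful about is the typographical point that the statement as written has $|X_\expt{X}|$, which should be read as $|X - \expt{X}|$; the hypothesis that the variance is finite and nonzero is used only to ensure that the right-hand side is a well-defined finite positive number, and no part of the argument actually needs nonzeroness of the variance (if $\var{X} = 0$ the inequality still holds trivially).
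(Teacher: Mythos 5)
Your proof is correct: it is the standard one-line reduction of Chebyshev's inequality to Markov's inequality applied to $(X-\expt{X})^2$, and the paper itself states this lemma as a preliminary without proof, so there is nothing to diverge from. Your reading of the typo $|X_{\expt{X}}|$ as $|X-\expt{X}|$ and your remark that nonzeroness of the variance is not actually needed are both accurate.
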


\begin{lemma}[Sanov's theorem]
\label{thm:sanov}
Let $\cQ\subset\Delta\paren{\cX}$ be a subset of distributions such that it is equal to the closure of its interior. Let $\vbfx\sim P_\bfx^{\otimes n}$ for some $P_\bfx\in\Delta(\cX)$. Note that $\expt{\tau_{\vbfx}} = P_\bfx$. Sanov's theorem determines the first-order exponent of the probability that the vector empirically looks like drawn from some distribution $Q\in\cQ$,
\[-\frac{1}{n}\log\prob{\tau_{\vbfx}\in\cQ}= \inf_{Q\in\cQ}D\paren{Q\|P_\bfx} \pm o_n(1).\]
\end{lemma}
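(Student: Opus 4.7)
The approach is the standard \emph{method of types}. Let $P_n(\cX)$ denote the set of empirical distributions (types) of sequences in $\cX^n$, and for $Q \in P_n(\cX)$ let $T_n(Q) \coloneqq \{\vx \in \cX^n : \tau_{\vx} = Q\}$ denote the corresponding type class. The three ingredients I will invoke are: (i) $|P_n(\cX)| \le (n+1)^{|\cX|}$, (ii) $|T_n(Q)| \le 2^{nH(Q)}$ with $|T_n(Q)| \ge (n+1)^{-|\cX|} 2^{nH(Q)}$ for $Q\in P_n(\cX)$, and (iii) $P_\bfx^{\otimes n}(\vx) = 2^{-n(H(\tau_\vx) + D(\tau_\vx \| P_\bfx))}$ whenever $\vx\in T_n(\tau_\vx)$. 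Combining (ii) and (iii) gives the crucial estimate $P_\bfx^{\otimes n}(T_n(Q)) \le 2^{-nD(Q\|P_\bfx)}$ and $P_\bfx^{\otimes n}(T_n(Q)) \ge (n+1)^{-|\cX|} 2^{-nD(Q\|P_\bfx)}$.

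For the \textbf{upper bound} on $\prob{\tau_\vbfx \in \cQ}$, I would write
\[
\prob{\tau_\vbfx \in \cQ} = \sum_{Q \in \cQ \cap P_n(\cX)} P_\bfx^{\otimes n}(T_n(Q)) \le (n+1)^{|\cX|} \, 2^{-n \inf_{Q \in \cQ \cap P_n(\cX)} D(Q \| P_\bfx)} \le (n+1)^{|\cX|} \, 2^{-n \inf_{Q \in \cQ} D(Q \| P_\bfx)}.
\]
Taking $-\frac{1}{n}\log$ yields $-\frac{1}{n}\log\prob{\tau_\vbfx\in\cQ} \ge \inf_{Q\in\cQ} D(Q\|P_\bfx) - \frac{|\cX|\log(n+1)}{n}$, which is the desired direction up to an $o_n(1)$ term.

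For the \textbf{lower bound}, fix $\eps>0$ and pick any $Q^\star \in \cQ$ with $D(Q^\star\|P_\bfx) \le \inf_{Q\in\cQ} D(Q\|P_\bfx) + \eps$. Here I would use the hypothesis that $\cQ$ equals the closure of its interior: this lets me nudge $Q^\star$ slightly, without loss of generality into $\mathrm{int}(\cQ)$, so that an entire relative neighborhood of $Q^\star$ lies inside $\cQ$. Since types $P_n(\cX)$ are $\frac{|\cX|}{n}$-dense in $\Delta(\cX)$, for all sufficiently large $n$ there exists $Q_n \in P_n(\cX) \cap \cQ$ with $Q_n \to Q^\star$. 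By continuity of the KL divergence at $Q^\star$ (noting $D(Q^\star\|P_\bfx)<\infty$, so $\mathrm{supp}(Q^\star)\subseteq \mathrm{supp}(P_\bfx)$, a regime in which $D(\cdot\|P_\bfx)$ is continuous), one gets $D(Q_n\|P_\bfx) \to D(Q^\star\|P_\bfx)$. Therefore
\[
\prob{\tau_\vbfx \in \cQ} \ge P_\bfx^{\otimes n}(T_n(Q_n)) \ge (n+1)^{-|\cX|}\, 2^{-n D(Q_n\|P_\bfx)},
\]
and taking $-\frac{1}{n}\log$ and then $\eps \to 0$ completes the matching direction.

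The only genuinely delicate step is the \emph{existence of a type $Q_n$ inside $\cQ$ close to $Q^\star$}: without the regularity assumption on $\cQ$, the set $\cQ \cap P_n(\cX)$ could be empty (e.g.\ if $\cQ$ were a lower-dimensional slice), which would break the lower bound. The assumption ``$\cQ$ equals the closure of its interior'' is used exactly to rule this out, giving a full-dimensional neighborhood of $Q^\star$ lying in $\cQ$ that must contain a rational type for large $n$. A subtlety I would be careful about is the degenerate case where $\mathrm{supp}(Q^\star) \not\subseteq \mathrm{supp}(P_\bfx)$, which would force $D(Q^\star\|P_\bfx)=\infty$; here the infimum is vacuously matched, and I would handle it by choosing approximating $Q^\star$'s whose supports are contained in $\mathrm{supp}(P_\bfx)$, again leveraging the interior-closure assumption.
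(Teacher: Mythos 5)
The paper states this lemma as a standard preliminary and gives no proof of its own, so there is nothing to compare against; your method-of-types argument (union bound over polynomially many types for the upper bound, a single near-optimal type class for the lower bound, with the closure-of-interior hypothesis guaranteeing that a type near the near-optimal $Q^\star$ actually lies in $\cQ$) is the standard and correct proof. Your flagged subtleties are the right ones; the only point I would sharpen is that when $P_\bfx$ lacks full support you should intersect the interior neighborhood of $Q^\star$ with the sub-simplex $\Delta(\mathrm{supp}(P_\bfx))$ before invoking density of types, since a generic type in a full-dimensional neighborhood may put mass outside $\mathrm{supp}(P_\bfx)$ and have infinite divergence.
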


\noindent\textbf{Channel coding.}
\begin{definition}[Oblivious AVMAC]
An oblivious AVMAC $ \cA = (\cX,\cY,\cS,\cZ,f_1,f_2,\Gamma_1,\Gamma_2,g,\Lambda,W_{\bfz|\bfx,\bfy,\bfs}) $ is a probability distribution $ W_{\bfz|\bfx,\bfy,\bfs} $ such that for every $ x\in\cX,y\in\cY,s\in\cS,z\in\cZ $,
\begin{align}
\prob{\bfz = z|\bfx = x,\bfy = y,\bfz = z} =& W_{\bfz|\bfx,\bfy,\bfs}(z|x,y,s).
\notag
\end{align}
If the users use the channel for $n\in\bZ_{>0} $ times, the channel acts on the transmitted sequences i.i.d., i.e., for any $ \vx\in\cX^n,\vy\in\cY^n,\vs\in\cS^n,\vz\in\cZ^n $,
\begin{align}
\prob{\vbfz = \vz|\vbfx = \vx,\vbfy = \vy,\vbfs = \vs} =& W_{\bfz|\bfx,\bfy,\bfs}^{\otimes n}(\vz|\vx,\vy,\vs) = \prod_{i=1}^nW_{\bfz|\bfx,\bfy,\bfs}(\vz(i)|\vx(i),\vy(i),\vs(i)).
\notag
\end{align}
Here the state sequence $\vs\in\cS^n $ is the output of James' jamming function $\jam$ which maps the codebook  pair of user one and two to a sequence $ \vs = \vs(\cC_1,\cC_2)\in\cS^n $ such that $ g(\vs)\le\Lambda $.
\end{definition}

\begin{definition}[Deterministic $L$-list-decodable code]
A deterministic $L$-list decodable code $ (\enc_1,\enc_2,\dec) $ for an oblivious AVMAC $ \cA = (\cX,\cY,\cS,\cZ,f_1,f_2,\Gamma_1,\Gamma_2,g,\Lambda,W_{\bfz|\bfx,\bfy,\bfs}) $ consists of 
\begin{itemize}
  \item an encoder for user one:
  \begin{align}
  \begin{array}{rlll}
  \enc_1\colon& \cM& \to& \cX^n \\
  &m&\mapsto& \vx_m
  \end{array},
  \notag
  \end{align}
  where $ \vx_m $ satisfies $ f_1(\vx_m)\le\Gamma_1 $ for all $m\in\cM$;
  \item an encoder for user two:
  \begin{align}
  \begin{array}{rlll}
  \enc_1\colon& \cW& \to& \cY^n \\
  &w&\mapsto& \vy_w
  \end{array},
  \notag
  \end{align}
  where $ \vy_w $ satisfies $ f_2(\vy_w)\le\Gamma_2 $ for all $w\in\cW$;
  \item a list decoder for the receiver:
  \begin{align}
  \begin{array}{rlll}
  \dec\colon& \cZ^n&\to& \binom{\cM\times\cW}{\le L} \\
  &\vz& \mapsto& \cL
  \end{array},
  \notag
  \end{align}
  where $ \cL\ni(m,w) $.
\end{itemize}
The dimension $n$ is called the blocklength of the code.

Let $ M\coloneqq|\cM| $ and $ W\coloneqq|\cW| $. 
The message sets $ \cM $ and $ \cW $ are identified with $ \sqrbrkt{M} $ and $ \sqrbrkt{W} $, respectively. 
The rate of a code $ (\cC_1,\cC_2) $ is defined as a pair $ (R_1,R_2) $ where $ R_1 = R(\cC_1) \coloneqq\frac{1}{n}\log (M/L) $ and $ R_2 = R(\cC_2) \coloneqq\frac{1}{n}\log (W/L) $. 

At times, we also abuse the notation and call the collection of codewords (images of the encoding maps) codebooks, i.e., $ \cC_1\coloneqq\curbrkt{\vx_{m}}_{m = 1}^M $, and $ \cC_2 \coloneqq\curbrkt{\vy_{w}}_{w = 1}^{W} $. 
\end{definition}

\begin{definition}[Average probability of error]
The average probability of error of a codebook pair $ (\cC_1,\cC_2) $ equipped with $ (\enc_1,\enc_2,\dec) $ for an oblivious AVMAC $ \cA = (\cX,\cY,\cS,\cZ,f_1,f_2,\Gamma_1,\Gamma_2,g,\Lambda,W_{\bfz|\bfx,\bfy,\bfs}) $ is defined as
\begin{align}
P_{\e,\avg}(\cC_1,\cC_2) \coloneqq& \max_{\substack{\vs = \vs(\cC_1,\cC_2)\\g(\vs)\le\Lambda}} \probover{\substack{\bfm\sim\cM\\\bfw\sim\cW}}{(\wh\bfm,\wh\bfw)\ne(\bfm,\bfw)} \notag \\
=& \max_{\substack{\vs = \vs(\cC_1,\cC_2)\\g(\vs)\le\Lambda}}\probover{\substack{\bfm\sim\cM\\\bfw\sim\cW}}{\dec(\vbfz)\ne(\bfm,\bfw)} \notag \\
=&  \max_{\substack{\vs = \vs(\cC_1,\cC_2)\\g(\vs)\le\Lambda}}\frac{1}{MW}\sum_{\substack{m\in\cM\\w\in\cW}}W_{\bfz|\bfx,\bfy,\bfs}^{\otimes n}(\vz|\vx_m,\vy_w,\vs)\indicator{(m,w)\notin\dec(\vz)}, \notag
\end{align}
where the probability is taken over uniform selection of $ \bfm $ and $ \bfw $.
\end{definition}

\begin{definition}[Achievable rate]
A rate pair $ (R_1,R_2) $ is said to be achievable  for an oblivious AVMAC  if for any constant $ \delta_1,\delta_2 >0 $ and $ \eps_1,\eps_2>0 $, there exists a sequence of codes $ \curbrkt{(\cC_{1,n},\cC_{2,n})}_{n} $ equipped with $ (\enc_{1,n},\enc_{2,n},\dec_n) $ for infinitely many $n  $ such that, there is an $ n_0 $, for every $ n>n_0 $,
\begin{itemize}
  \item   $ R_{1,n} \ge R_1 - \delta_1 $ and $ R_{2,n} \ge R_2 - \delta_2 $;
  \item the probabilities of user one's and user two's decoding errors vanish in $ n $, 
  \begin{align}
  P_{\e,A}(\cC_{1,n},\cC_{2,n}) \le& \eps_1, \notag \\
  P_{\e,B}(\cC_{1,n},\cC_{2,n}) \le& \eps_2. \notag 
  \end{align}
\end{itemize}
\end{definition}

\begin{definition}[$L$-list-decoding capacity]
The capacity $ (C_1,C_2) $ of an oblivious AVMAC is defined as the supremum of all achievable rates,
\begin{align}
C_1 \coloneqq& \limsup_{\eps\downarrow0}\limsup_{n\uparrow\infty}\max_{\substack{\cC_{1,n},\cC_{2,n}\\P_{\e,\avg}(\cC_1,\cC_2)\le\eps}}R(\cC_{1,n}), \notag \\
C_2 \coloneqq& \limsup_{\eps\downarrow0}\limsup_{n\uparrow\infty}\max_{\substack{\cC_{1,n},\cC_{2,n}\\P_{\e,\avg}(\cC_1,\cC_2)\le\eps}}R(\cC_{2,n}), \notag
\end{align}
where $ \cC_{1,n} $ and $ \cC_{2,n} $ satisfy power constraints.
\end{definition}

\noindent\textbf{Method of types.}
Without loss of generality, we write $\cX=\curbrkt{1,\cdots,{\card{\cX}}}$. For $\vx\in\cX^n$ and $x\in\cX$, let
\[N_x(\vx)\coloneqq\card{\curbrkt{i\in[n]\colon \vx(i)=x}},\]
which counts the number of occurrences of a symbol $x$ in a vector $\vx$. Similarly, define
\[N_{x,y}\paren{\vx,\vy}\coloneqq\card{\curbrkt{i\in[n]\colon \vx(i) = x,\;\vy(i) = y}}. \]

\begin{definition}[Types]
For a length-$n$ vector $\vx$ over a finite alphabet $\cX$, the type $\tau_{\vx}$ of $\vx$ is a length-$\card{\cX}$ (empirical) probability vector (or the histogram of $\vx$), i.e., $\tau_{\vx}\in[0,1]^{\card{\cX}}$ has entries
$\tau_{\vx}(x)\coloneqq{N_x(\vx)}/{n}$
for all $x\in\cX$.
\end{definition}

\begin{definition}[Joint types and conditional types]
The \emph{joint type} $\tau_{\vx,\vy}\in[0,1]^{\card{\cX}\times\card{\cY}}$ of two vectors $\vx\in\cX^n$ and $\vy\in\cY^n$ is defined as
$\tau_{\vx,\vy}(x,y)={N_{x,y}(\vx,\vy)}/{n}$
for $x\in\cX$ and $y\in\cY$.

The conditional type $\tau_{\vy|\vx}\in[0,1]^{\card{\cX}\times\card{\cY}}$ of a vector $\vy\in\cY^n$ given another vector $\vx\in\cX^n$ is defined as 
$\tau_{\vy|\vx}(y|x) = {N_{x,y}\paren{\vx,\vy}}/{N_x\paren{\vx}}$.
\end{definition}

\begin{remark}
We will also write $\tau_\bfx,\tau_{\bfx,\bfy},\tau_{\bfy|\vx}, \tau_{\bfy|\bfx}$ etc. for generic types that are taken from the corresponding sets of types even if they do not come from  instantiated vectors. For instance, $\tau_\bfx$ is a type  corresponding to any $\vx$ of that type. The particular choice of $\vx$ is not important and will not be specified. These notations are for  explicitly distinguishing  types from distributions.
\end{remark}

\begin{lemma}\label{lem:poly_many_types}
For $L$ ($L$ is a constant) finite sets $ \cX_1,\cdots,\cX_L $ of sizes independent of $n$, the number of types of  $L$-tuple of length-$n$ vectors $ (\vx_1,\cdots,\vx_L) $, where $ \vx_i\in\cX_i^n $ ($ 1\le i\le L $), is $ n^{\cO(1)} $.
\end{lemma}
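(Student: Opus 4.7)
} The plan is to reduce the counting question to a standard stars-and-bars estimate after observing that a joint type is nothing but a probability distribution on a product alphabet whose entries are integer multiples of $1/n$. First I would unpack the definition: a joint type of an $L$-tuple $(\vx_1,\ldots,\vx_L)$ with $\vx_i\in\cX_i^n$ is a vector $\tau\in[0,1]^{\cX_1\times\cdots\times\cX_L}$ whose coordinate at $(x_1,\ldots,x_L)$ equals $N_{x_1,\ldots,x_L}(\vx_1,\ldots,\vx_L)/n$, where $N_{x_1,\ldots,x_L}$ counts the number of coordinates $i\in[n]$ at which $\vx_j(i)=x_j$ for every $j\in[L]$. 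In particular, each entry of $\tau$ lies in $\{0,1/n,2/n,\ldots,n/n\}$, and the entries sum to $1$.

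Next I would set $K\coloneqq\prod_{j=1}^L|\cX_j|$, which is a constant independent of $n$ because both $L$ and each $|\cX_j|$ are constants by assumption. The number of joint types is then at most the number of ways of writing $n$ as an ordered sum of $K$ nonnegative integers, which by a standard stars-and-bars argument equals $\binom{n+K-1}{K-1}$. Since $K-1$ is constant, $\binom{n+K-1}{K-1}=O(n^{K-1})=n^{O(1)}$, as required. A coarser but equally valid bound is $(n+1)^K$, obtained by independently choosing the numerator of each of the $K$ coordinates from $\{0,1,\ldots,n\}$ and discarding those that do not sum to $n$.

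There is no real obstacle here; the only mildly subtle point is making sure that the constant hidden in $n^{\cO(1)}$ is allowed to depend on $L$ and on the alphabet sizes $|\cX_1|,\ldots,|\cX_L|$, which is explicitly permitted by the phrasing of the lemma. I would therefore state the conclusion in the slightly sharper form that the number of joint types is at most $(n+1)^{\prod_j|\cX_j|}$ and remark that, as a function of $n$ with $L$ and the $|\cX_j|$ held fixed, this is $n^{\cO(1)}$.
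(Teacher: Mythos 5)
Your proof is correct and is exactly the standard type-counting argument the paper has in mind (the lemma is stated without proof in the preliminaries precisely because it is this classical fact). The stars-and-bars bound $\binom{n+K-1}{K-1}$ with $K=\prod_j|\cX_j|$ constant, or the coarser $(n+1)^K$, is all that is needed.
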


\section{List decoding oblivious AVMACs with input and state constraints} \label{sec:obli_avmac_constrained}

In this section, we prove our main theorems.
\begin{theorem}[Achievability/inner bound]
If $ L>L_{w}^* $, then   any rate pair $ (R_1,R_2) $ in the interior of $ C_{\inner} $ is achievable. 
That is, for any $ \delta_1,\delta_2>0 $, there exists an $ L $-list decodable  code (sequence) $ (\cC_1,\cC_2) = ( \enc_1,\enc_2,\dec ) $ of rate $ (R_1 - \delta_1, R_2 - \delta_2) $ and vanishing (in $n$) average probability of error such that $ \card{\dec(\vz)}\le L $ for any $ \vz\in\cZ^n $. 
\end{theorem}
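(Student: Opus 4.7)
The plan is to follow the random-coding-plus-expurgation paradigm of Csisz\'ar and Narayan~\cite{csiszar-narayan-it1988-obliviousavc}, adapted to the two-user list decoding setup using Cai's bipartite-graph notion of symmetrizability~\cite{cai-2016-list-dec-obli-avmac}. Fix an admissible $P_{\bfu,\bfx,\bfy} = P_\bfu P_{\bfx|\bfu} P_{\bfy|\bfu}$ with $L_w(P_{\bfu,\bfx,\bfy}) < L$ and rate pair $(R_1,R_2)$ strictly inside the corresponding rate region. Choose a deterministic time-sharing sequence $\vu$ of type close to $P_\bfu$, and independently draw $M = L\, 2^{n(R_1+\delta_1/2)}$ codewords $\vbfx_m \sim P_{\bfx|\bfu}^{\otimes n}(\cdot|\vu)$ and $W = L\, 2^{n(R_2+\delta_2/2)}$ codewords $\vbfy_w \sim P_{\bfy|\bfu}^{\otimes n}(\cdot|\vu)$. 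First I would establish, by standard Chernoff/method-of-types estimates together with Lemma~\ref{lem:poly_many_types}, that with probability bounded away from zero the random codebook pair simultaneously satisfies: (a) every codeword obeys the input constraint (enforced by a vanishing-fraction expurgation since $\expt{f_1(\bfx)}\le \Gamma_1 - \epsilon$, etc.); (b) for every $I,J \le L$, every distinct $m_1,\dots,m_I$ and $w_1,\dots,w_J$, and every joint type on $\cU \times \cX^I \times \cY^J$, the conditional empirical distribution of $(\vx_{m_1},\dots,\vx_{m_I},\vy_{w_1},\dots,\vy_{w_J})$ given $\vu$ is close to $P_{\bfx|\bfu}^{\otimes I} P_{\bfy|\bfu}^{\otimes J}$; (c) Cai's packing/list-recovery properties hold, ensuring that few codeword tuples are atypically correlated with candidate output types.

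Next I would equip the receiver with a minimum-conditional-entropy (equivalently, joint-typicality) list decoder: output $(m,w)$ iff there exists a conditional state distribution $P_{\bfs|\bfu}$ with $\sum_{u,s} P_\bfu(u) P_{\bfs|\bfu}(s|u) g(s) \le \Lambda + \epsilon$ such that the joint type of $(\vu,\vx_m,\vy_w,\vz)$ is close to the distribution induced by $P_\bfu P_{\bfx|\bfu} P_{\bfy|\bfu} P_{\bfs|\bfu} W_{\bfz|\bfx,\bfy,\bfs}$. The probability that the true $(m,w)$ is rejected is exponentially small since, for any deterministic $\vs$ with $g(\vs)\le\Lambda$, the oblivious structure forces the empirical joint type of $(\vu,\vbfx_m,\vbfy_w,\vs,\vbfz)$ to concentrate on a distribution of the above product form by the law of large numbers.

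The heart of the proof is showing $|\dec(\vz)| \le L$ for every admissible $\vs$. Suppose for contradiction some $\vz$ puts $L+1$ pairs on the list and prune to a minimal confusable sub-collection forming a bipartite graph $\cB = (\cI,\cJ,\cE)$ with $|\cE| = L$. By property (b), the empirical distribution of the corresponding $(I-1,J-1)$-tuples is (up to vanishing slack) the product $P_{\bfx|\bfu}^{\otimes(I-1)} P_{\bfy|\bfu}^{\otimes(J-1)}$ conditioned on $\vu$; by property (c) and the decoder rule, every edge $(i,j)\in\cE$ admits a state-type witness giving rise to the same output statistics. Averaging these witnesses over the relevant permutations fixing $\cE$ and over $\vu$-strata produces a family $\{Q^{(u)}_{\bfs|\bfx^{I-1},\bfy^{J-1}}\}_u \subset \cQ_\symm(\cB)$ whose expected cost is at most $\Lambda$, and the defining constraint $P_{\bfu,\bfx_i,\bfy_j}=P_{\bfu,\bfx,\bfy}$ is automatic because the relevant distribution is the product one. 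This exhibits a feasible configuration for the weak symmetrizability definition at parameter $L$, contradicting $L_w(P_{\bfu,\bfx,\bfy}) < L$. The rate bounds $R_1 < \inf I(\bfx;\bfz|\bfy,\bfu)$, $R_2 < \inf I(\bfy;\bfz|\bfx,\bfu)$, $R_1+R_2 < \inf I(\bfx,\bfy;\bfz|\bfu)$ ensure that the total measure of confusable tuples is exponentially small in each of the three error regimes (wrong $\bfx$ only, wrong $\bfy$ only, both wrong), paralleling the non-adversarial MAC analysis with the worst-case $P_{\bfs|\bfu}$ playing the role of the channel noise.

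The main obstacle will be step three: correctly extracting from $L+1$ confusable codeword pairs a genuine $\cB$-symmetrizing family that satisfies all the permutation-invariance constraints built into $\cQ_\symm(\cB)$ and whose averaged cost is at most $\Lambda$. This is where property (b) is used in an essential way, and it is precisely because the involved empirical joint distribution is forced to be the product $P_{\bfx|\bfu}^{\otimes(I-1)} P_{\bfy|\bfu}^{\otimes(J-1)}$ (rather than an arbitrary joint) that only the \emph{weak} symmetrizability $L_w$ arises in the inner bound, while an outer-bound argument (with worst-case joint types available to the adversary) necessitates the \emph{strong} symmetrizability $L_s$. Carrying out this symmetrization correctly in the bipartite-graph setting, with the right group actions and handling of the time-sharing variable $\bfu$, is the main technical content.
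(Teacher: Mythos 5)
Your overall architecture --- random coding with packing properties, a joint-typicality list decoder, a symmetrization-based contradiction to bound the list size by $L$, and a Csisz\'ar--Narayan style error analysis for the rate constraints --- is the same as the paper's. There is, however, a genuine gap in the step you yourself identify as the heart of the proof, and it sits in your property~(b). You cannot guarantee that \emph{every} tuple of $I\le L$ and $J\le L$ distinct codewords has conditional empirical distribution close to the product $P_{\bfx|\bfu}^{\otimes I}P_{\bfy|\bfu}^{\otimes J}$: for any joint type with the correct marginals and conditional mutual information $I(\bfx_1;\bfx_2|\bfu)<2R_1$, a random codebook of rate $R_1$ contains exponentially many codeword pairs realizing that type, so strongly correlated tuples necessarily survive any expurgation that keeps the rate positive. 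The codebook construction lemma in the paper (Eqns.~\eqref{eqn:cw_selection_xy_s}--\eqref{eqn:cw_selection_xy_slistx}) accordingly only bounds the \emph{number} of such atypical tuples, which suffices for the average-error analysis but not for the list-size bound, which must hold for every received $\vz$ and hence for whatever confusable tuple actually occurs.

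Consequently your extraction of a member of $\cQ_\symm(\cB)$ with cost at most $\Lambda$ does not go through as written: the joint distribution of $\bfx^{\cI\setminus i},\bfy^{\cJ\setminus j}$ given $\bfu$ arising from a confusable collection is a priori arbitrary (subject only to single-letter marginal constraints), and the decoder's near-independence condition $I(\bfx_i,\bfy_j,\bfz;\bfx^{\cI\setminus i},\bfy^{\cJ\setminus j}|\bfu,\bfs_{i,j})\le\eta'$ makes the other codewords nearly independent of the pair $(\bfx_i,\bfy_j)$ but not of \emph{one another}. The missing idea is that the product structure must be \emph{derived} rather than assumed: after averaging the witness distributions over the permutations one obtains a symmetric $Q_{\bfx^{I-1},\bfy^{J-1},\bfs|\bfu}$, the approximate agreement of the induced output statistics across edges forces (upon marginalizing out $\bfz$ and $\bfs$) the identity $Q(x^{\cI\setminus i},y^{\cJ\setminus j}|u)P(x_i|u)P(y_j|u)=Q(x^{\cI\setminus i'},y^{\cJ\setminus j'}|u)P(x_{i'}|u)P(y_{j'}|u)$, and an induction on $I,J$ (Lemma~\ref{lem:prod_distr_lem} in the paper) then shows $Q(x^{\cI\setminus i},y^{\cJ\setminus j}|u)=P_{\bfx|\bfu}^{\otimes(I-1)}(x^{\cI\setminus i}|u)P_{\bfy|\bfu}^{\otimes(J-1)}(y^{\cJ\setminus j}|u)$. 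Only after this step can one peel off the conditional $Q(s|u,x^{I-1},y^{J-1})$ as a bona fide $\cB$-symmetrizing family whose cost under the \emph{product} input distribution is at most $\Lambda$, contradicting $L_w(P_{\bfu,\bfx,\bfy})<L$. Without this lemma or a substitute for it, your step three does not produce a feasible configuration for the weak-symmetrizability definition.
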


\begin{theorem}[Converse]
If $ L\le L_s^* $, then the $L$-list decoding capacity region is $ \curbrkt{(0,0)} $. 
That is, for any $ \eps_1,\eps_2>0 $ any  code $ (\cC_1,\cC_2) = (\enc_1,\enc_2,\dec) $ of  rate $ (\eps_1,\eps_2) $ such that $ \card{\dec(\vz)}\le L $ for any $ \vz\in\cZ^n $ must have average probability of error at least some positive constant. 
\end{theorem}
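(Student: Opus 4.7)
\quad The plan is to adapt the classical Csisz\'ar--Narayan converse argument \cite{csiszar-narayan-it1988-obliviousavc,hughes-1997-list-avc} to Cai's bipartite symmetrizability. I fix any deterministic code $(\cC_1,\cC_2)$ of positive rates $(R_1,R_2)=(\eps_1,\eps_2)$, and aim to construct an adversarial state vector $\vs$ with $g(\vs)\le\Lambda$ that drives the average error above a positive constant. Since $|\cC_1|$ and $|\cC_2|$ are exponential in $n$ while the number of joint types over $\cU\times\cX\times\cY$ is polynomial in $n$ (Lemma~\ref{lem:poly_many_types}), a pigeonhole argument extracts a time-sharing sequence $\vu\in\cU^n$ and exponentially large sub-codebooks $\cC_1'\subseteq\cC_1$, $\cC_2'\subseteq\cC_2$, all of whose codewords share some fixed conditional types with $\vu$. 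This yields an effective product joint type $P_{\bfu,\bfx,\bfy}=P_\bfu P_{\bfx|\bfu}P_{\bfy|\bfu}$, and passing to the sub-codebooks costs only $o(1)$ in the rates.

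Next, I would invoke strong symmetrizability. Because $L\le L_s^*\le L_s(P_{\bfu,\bfx,\bfy})$, the definition furnishes a bipartite graph $\cB=(\cI,\cJ,\cE)$ with $|\cE|=L$, a joint distribution $P_{\bfu,\bfx^{I-1},\bfy^{J-1}}$ whose per-edge marginal matches $P_{\bfu,\bfx,\bfy}$, and symmetrizing kernels $\{Q^{(u)}\}_u\in\cQ_\symm(\cB)$ with expected state cost strictly below $\Lambda$. James's randomized attack is then: sample a tuple of dummies $(\vbfX_1,\ldots,\vbfX_{I-1};\vbfY_1,\ldots,\vbfY_{J-1})\in(\cC_1')^{I-1}\times(\cC_2')^{J-1}$ uniformly from those whose conditional joint type with $\vu$ matches $P_{\bfu,\bfx^{I-1},\bfy^{J-1}}$, then generate $\vbfS$ componentwise by $\vbfS(t)\sim Q^{(u(t))}(\cdot\,|\,\vbfX_1(t),\ldots,\vbfY_{J-1}(t))$. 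Chebyshev's inequality (Lemma~\ref{lem:cheby}) combined with the strict slack in the cost budget forces $g(\vbfS)\le\Lambda$ with probability tending to $1$, and a standard deletion step converts this into a deterministic $\vs$ meeting the constraint.

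The heart of the argument is indistinguishability plus counting. Using the identities defining $\cQ_\symm(\cB)$ together with the exchangeability of the i.i.d.\ dummies and the uniformly random message pair $(\bfm,\bfw)$, I would show that the joint distribution of $\vbfZ$ alongside the $L=|\cE|$ edge-indexed candidate message pairs, extended by the true pair placed at a distinguished slot, is invariant under the stabiliser $\Pi_\cE=\{(\sigma,\pi)\in S_I\times S_J:(\sigma,\pi)\cE=\cE\}$ acting so as to swap the true slot into each orbit-position of $\cE$. Mimicking Hughes's list-averaging trick, symmetry then forces $\Pr[c\in\cL(\vbfZ)]$ to be the same across all $L+1$ candidates $c$; summing these indicators and using $|\cL(\vbfZ)|\le L$ bounds the success probability by $L/(L+1)$, so the average error is at least $1/(L+1)>0$. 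Averaging over James's random seed produces a deterministic attack with this error, contradicting achievability. The main obstacle is precisely this indistinguishability step: Cai's bipartite symmetrizability a priori only swaps edges within $\cE$, so one must embed the unknown true slot into the orbit structure as the $(L+1)$-th equally-plausible candidate by coupling the uniform dummies with the random true pair through a joint-type refinement, lifting the bipartite symmetry of $\cQ_\symm(\cB)$ to a symmetric posterior over all $L+1$ candidates.
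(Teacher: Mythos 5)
Your overall strategy coincides with the paper's: use strong symmetrizability to extract a bipartite graph $\cB=(\cI,\cJ,\cE)$ with $|\cE|=L$ together with symmetrizing kernels $\curbrkt{Q^{(u)}}_u$ of expected cost strictly below $\Lambda$, have James sample $I-1$ dummy codewords from $\cC_1$ and $J-1$ from $\cC_2$ and apply the kernels componentwise, control $g(\vbfs)$ by Chebyshev plus a fallback state vector, and then use the swap symmetry to produce $L+1$ mutually confusable candidates that an $L$-list decoder cannot all output. Up to this skeleton you have reproduced the paper's argument.

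There is, however, a genuine gap in the quantitative conclusion. You restrict James's dummies to type-homogeneous sub-codebooks $\cC_1'\subseteq\cC_1$, $\cC_2'\subseteq\cC_2$ obtained by pigeonholing over the polynomially many joint types, and your indistinguishability step rests on ``exchangeability of the i.i.d.\ dummies and the uniformly random message pair $(\bfm,\bfw)$''. Exchangeability requires the true codeword pair to be distributed identically to the dummies, i.e.\ to lie in $\cC_1'\times\cC_2'$ with the matching conditional type; for a true pair outside the sub-codebooks, the re-indexing that swaps it into a dummy slot lands on a dummy configuration James never samples, and the symmetric-posterior claim fails. Consequently your argument forces error at least $1/(L+1)$ only \emph{conditioned} on $(\bfm,\bfw)\in\cC_1'\times\cC_2'$, an event of probability $1/n^{O(1)}$, so the unconditional average error you obtain is $1/n^{O(1)}\to 0$. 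A vanishing lower bound does not contradict achievability, so the theorem (a constant lower bound on $P_{\e,\avg}$, hence capacity region $\curbrkt{(0,0)}$) is not established. The paper avoids this by sampling $\cS\sim\binom{[M]}{I-1}$ and $\cT\sim\binom{[W]}{J-1}$ uniformly over the \emph{full} codebooks and exploiting that the defining identity of $\cQ_\symm(\cB)$ holds pointwise for all $x^{\cI},y^{\cJ}$ (no type matching is needed for the swap), so every message pair participates in the confusion and the average error is at least roughly $1/(IJ)\ge 1/L^2$. To repair your version while keeping a type restriction for the cost constraint, you must either let the true pair range over everything (invoking the pointwise swap identity rather than exchangeability) or coarsen the pigeonholing to constantly many approximate type classes so that the surviving fraction of message pairs is a constant.
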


\subsection{Decoding rules}  
Given a codebook pair $\cC_1 = \curbrkt{\vx_{m}}_{m = 1}^{ L2^{nR_1}}$ and $\cC_2 = \curbrkt{\vy_w}_{w = 1}^{ L2^{nR_2}}$, and a time-sharing sequence $\vu$. 
Fix slack factors $\eta,\eta'>0$. 
For $\eta>0$, define the set of joint distributions that are consistent with the physical transmission across the channel
\[\cP_\eta \coloneqq \curbrkt{P_{\bfu,\bfx,\bfy,\bfs,\bfz}\in\Delta(\cU\times\cX\times\cY\times\cS\times\cZ)\colon \begin{array}{rl}
     D\paren{ P_{\bfu,\bfx,\bfy,\bfs,\bfz} \|  P_\bfu P_{\bfx|\bfu}P_{\bfy|\bfu}P_{\bfs|\bfu} W_{\bfz|\bfx,\bfy,\bfs} }\le&\eta,  \\
     \expt{g(\bfs)}\le&\Lambda
\end{array}  }.\]

Observing $\vz$, output all $(m,w)$ such that there exists $\vs$ with $g(\vs)\le\Lambda$ satisfying:
\begin{enumerate}
\item For $(\bfu,\bfx,\bfy,\bfs,\bfz)\sim \tau_{\vu,\vx_m,\vy_w,\vs,\vz}$, we have $P_{\bfu,\bfx,\bfy,\bfs,\bfz}\in\cP_\eta$;
\item For any bipartite graph $\cL = (\cS,\cT,\cF)\in\sL_L(m,w)$  and the corresponding list $\curbrkt{(\vx_{m'},\vy_{w'})\colon (m',w')\in\cF}$ such that for each $(m',w')\in\cF$, there exists $\vs_{m',w'}$ with $g(\vs_{m',w'})\le\Lambda$,  $P_{\bfu,\bfx_{m'},\bfy_{w'},\bfs_{m',w'},\bfz}\in\cP_\eta$, we have that
\begin{equation*}
    I\paren{ \left.\bfx,\bfy,\bfz;\bfx^{\cS\setminus m},\bfy^{\cT\setminus w}\right|\bfu,\bfs }\le\eta'.
\end{equation*}
\end{enumerate}

\subsection{Codebook construction} 
Codewords with the following desired properties can be obtained via random selection.
The proof is along the line of \cite{csiszar-narayan-it1988-obliviousavc} and we omit the details.

Let
\begin{align*}
    \cP_1 \coloneqq &\curbrkt{P_\bfx\in\Delta(\cX)\colon \expt{f_1(\bfx)} = \sum_x P_\bfx(x)f_1(x)\le\Gamma_1},\\
    \cP_2 \coloneqq &\curbrkt{P_{\bfy}\in\Delta(\cY)\colon\expt{f_2(\bfy)} = \sum_yP_\bfy(y)f_2(y)\le\Gamma_2}.
\end{align*}
\begin{lemma}
Fix any $\eps>0$, sufficiently large $n$, rate pair $R_1>\eps,R_2>\eps$,  types $P_\bfu,P_{\bfx|\bfu},P_{\bfy|\bfu}$ with $\sqrbrkt{P_\bfu P_{\bfx|\bfu}}_\bfx\in\cP_1,\sqrbrkt{P_\bfu P_{\bfy|\bfu}}_\bfy\in\cP_2$ and bipartite graph $\cB = (\cI,\cJ,\cE)\in\sB_L$, there exist a time-sharing sequence $\vu$ of type $P_\bfu$ and a codebook pair $\cC_1 = \curbrkt{ \vx_m }_{m=1}^{ L2^{nR_1}}, \cC_2 = \curbrkt{\vy_w}_{w=1}^{ L2^{nR_2}}$ of type $\tau_{\vx_m,\vy_w|\vu} = P_{\bfx|\bfu}P_{\bfy|\bfu}$ ($1\le m\le  L2^{nR_1},1\le w\le  L2^{nR_2}$) such that for every $\vx = \vx_{m_0}\in\cC_1$, $\vy = \vy_{w_0}\in\cC_2$, $\vs$ with $g(\vs)\le\Lambda$ and every joint type $P_{\bfu,\bfx^{\cI},\bfy^{\cJ},\bfs}$ with $P_{\bfx_i,\bfy_j|\bfu} = P_{\bfx|\bfu}P_{\bfy|\bfu}$ for every $(i,j)\in\cE$, the following properties hold. For every $(i,j)\in\cE$,
\begin{equation}
    \card{\curbrkt{  (m,w)\in\sqrbrkt{ L2^{nR_1}}\times\sqrbrkt{ L2^{nR_2}}\colon \tau_{\vu,\vx_m,\vy_w,\vs} = P_{\bfu,\bfx,\bfy,\bfs} }}\le 2^{n(R_1+R_2 - \eps/2)},
    \label{eqn:cw_selection_xy_s}
\end{equation}
if $I(\bfx,\bfy;\bfs|\bfu)\ge\eps$;
\begin{align}
    \card{\curbrkt{ (m',w')\colon \tau_{\vu,\vx,\vy,\vx_{m'},\vy_{w'},\vs} = P_{\bfu,\bfx,\bfy,\bfx_i,\bfy_{j},\bfs} }}\le&2^{n\paren{\sqrbrkt{R_1+R_2 - I(\bfx_i,\bfy_j;\bfx,\bfy,\bfs|\bfu)}^++\eps}};
    \label{eqn:cw_selection_no_mprimewprime}\\
    \card{\curbrkt{m'\in[M]\colon \tau_{\vu,\vx,\vy,\vx_{m'},\vs} = P_{\bfu,\bfx,\bfy,\bfx_i,\bfs}}}\le&2^{n\paren{\sqrbrkt{R_1-I(\bfx_i;\bfx,\bfy,\bfs|\bfu)}^++\eps}};\label{eqn:cw_selection_no_mprime}\\
    \card{\curbrkt{w'\in[W]\colon \tau_{\vu,\vx,\vy,\vy_{w'},\vs} = P_{\bfu,\bfx,\bfy,\bfy_j,\bfs} }}\le&2^{\paren{\sqrbrkt{R_1 - I(\bfy_j;\bfx,\bfy,\bfs|\bfu)}^++\eps}};\label{eqn:cw_selection_no_wprime}
\end{align}
and
\begin{equation}
    \card{\curbrkt{ (m,w)\colon \tau_{\vu,\vx_m,\vy_w,\vx_{m'},\vy_{w'},\vs} = P_{\bfu,\bfx,\bfy,\bfx_i,\bfy_j,\bfs},\text{ for some }m'\ne m,w'\ne w }}\le2^{n(R_1+R_2-\eps/2)},
    \label{eqn:cw_selection_takepositivepart}
\end{equation}
if $I(\bfx,\bfy;\bfx_i,\bfy_j,\bfs|\bfu) - \sqrbrkt{R_1+R_2 - I(\bfx_i,\bfy_j;\bfs|\bfu)}^+\ge\eps$;
\begin{align}
    \card{\curbrkt{ (m,w)\colon \tau_{\vu,\vx_m,\vy_w,\vx_{m'},\vs} = P_{\bfu,\bfx,\bfy,\bfx_i,\bfs},\text{ for some }m'\ne m }}\le&2^{n(R_1-\eps/2)},\label{eqn:cw_selection_takepositivepart_usr1}
\end{align}
if $I(\bfx,\bfy;\bfx_i,\bfs|\bfu) - \sqrbrkt{R_1 - I(\bfx_i;\bfs|\bfu)}^+\ge\eps$;
\begin{align}
    \card{\curbrkt{ (m,w)\colon \tau_{\vu,\vx_m,\vy_w,\vy_{w'},\vs} = P_{\bfu,\bfx,\bfy,\bfy_j,\bfs},\text{ for some }w'\ne w }}\le&2^{n(R_2-\eps/2)},
\end{align}
if $I(\bfx,\bfy;\bfy_j,\bfs|\bfu)-\sqrbrkt{R_2 - I(\bfy_j;\bfs|\bfu)}^+\ge\eps$.

Furthermore, if $R_1+R_2<\min_{(i,j)\in\cE}I(\bfx_i,\bfy_j;\bfs|\bfu)$, then 
\begin{align}
    \card{\curbrkt{ \cL = (\cS,\cT,\cF)\in\sL_L(m_0,w_0)\colon \begin{array}{l}
         \cL\text{ has the same underlying graph as }\cB,  \\
         \tau_{\vu,\vx_{m_0},\vy_{w_0},\vx^{\cS\setminus m_0},\vy^{\cT\setminus w_0},\vs} = P_{\bfu,\bfx,\bfy,\bfx^{I-1},\bfy^{J-1},\bfs}
    \end{array} }}\le&2^{n\eps}; \label{eqn:number_of_bipgh}
\end{align}
and
\begin{align}
    \card{\curbrkt{ (m,w)\in[M]\times[W]\colon \begin{array}{l}
         \tau_{\vu,\vx_m,\vy_w,\vx^{\cS\setminus m},\vy^{\cT\setminus w},\vs} = P_{\bfu,\bfx,\bfy,\bfx^{I-1},\bfy^{J-1},\bfs},  \\
          \text{ for some }\cL \in\sL_L(m,w)\text{ with the same graph structure as }\cB
    \end{array} }}\le&2^{n(R_1+R_2-\eps/2)},\label{eqn:cw_selection_xy_slistxy}
\end{align}
if $I(\bfx,\bfy;\bfx^{I-1},\bfy^{J-1},\bfs|\bfu)\ge\eps$.

If $R_1<\min_{i\in[L]} I(\bfx_i;\bfs|\bfu)$, then
\begin{align}
    \card{\curbrkt{ \cS\in\binom{[M]}{L}\colon \cS\ni m_0,\;\tau_{\vu,\vx_{m_0},\vy_{w_0},\vx^{\cS\setminus m_0},\vs} = P_{\bfu,\bfx,\bfy,\bfx^{L-1},\bfs} }}\le&2^{n\eps};\label{eqn:number_of_s}
\end{align}
and 
\begin{align}
    \card{\curbrkt{ (m,w)\colon \tau_{\vu,\vx_m,\vy_w,\vx^{\cS\setminus m},\vs} = P_{\bfu,\bfx,\bfy,\bfx^{L-1},\bfs},\text{ for some }\cS\text{ with }|\cS| =  L,\;\cS\ni m_0 }}\le&2^{n(R_1 - \eps/2)},\label{eqn:cw_selection_xy_slistx}
\end{align}
if $I(\bfx,\bfy;\bfx^{L-1},\bfs|\bfu)\ge\eps$.

If $R_2<\min_{j\in[L]}I(\bfy_j;\bfs|\bfu)$, then
\begin{align}
    \card{\curbrkt{ \cT\in\binom{[W]}{L}\colon \cT\ni w_0,\;\tau_{\vu,\vx_{m_0},\vy_{w_0},\vy^{\cT\setminus w_0},\vs} = P_{\bfu,\bfx,\bfy,\bfy^{L-1},\bfs} }}\le&2^{n\eps};
\end{align}
and
\begin{align}
    \card{\curbrkt{ (m,w)\colon\tau_{\vu,\vx_m,\vy_w,\vy^{\cT\setminus w,},\vs} = P_{\bfu,\bfx,\bfy,\bfy^{L-1},\bfs},\text{ for some }\cT\text{ with }|\cT| =  L,\;\cT\ni w_0 }}\le&2^{n(R_2-\eps/2)},
\end{align}
if $I(\bfx,\bfy;\bfy^{L-1},\bfs|\bfu)\ge\eps$.
\end{lemma}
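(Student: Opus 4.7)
The plan is a standard random coding argument using the method of types, in the style of Csisz\'ar--Narayan \cite{csiszar-narayan-it1988-obliviousavc}. First pick any $\vu$ of type $P_\bfu$ (which exists since $P_\bfu$ is a type); then independently and uniformly draw each $\vx_m$ ($m \in [L 2^{nR_1}]$) from the conditional-type class of $P_{\bfx|\bfu}$ given $\vu$, and each $\vy_w$ ($w \in [L 2^{nR_2}]$) from that of $P_{\bfy|\bfu}$ given $\vu$. The goal is to show that with probability tending to one this random codebook simultaneously satisfies every listed property; a realization of this event is the desired codebook pair. Because Lemma~\ref{lem:poly_many_types} guarantees there are only $n^{\cO(1)}$ joint types over any bounded collection of length-$n$ vectors (including $\vs$, whose alphabet is fixed), and $|\sB_L|$ is a constant, it suffices to verify each individual bound fails with probability $2^{-\Omega(n)}$ and then union-bound over joint types, over the (polynomially many) types of $\vs$ with $g(\vs) \le \Lambda$, over $\cB \in \sB_L$, and (after expurgation) over $(m_0, w_0)$.

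The key probabilistic input is the standard type-class estimate: for independent uniform codeword draws as above, and any fixed $\vs$ of type $P_\bfs$, the probability that a specified tuple of codewords together with $\vu, \vs$ has a prescribed joint type $P_{\bfu,\bfx^\cI,\bfy^\cJ,\bfs}$ (extending $P_{\bfx|\bfu} P_{\bfy|\bfu}$ on the relevant pairs) equals $2^{-n(I \pm o(1))}$, where $I$ is the mutual information between the freshly drawn codeword(s) and the remaining variables, computed under $P$. For the ``average'' bounds such as \eqref{eqn:cw_selection_xy_s}, \eqref{eqn:cw_selection_takepositivepart}, \eqref{eqn:cw_selection_xy_slistxy}, and \eqref{eqn:cw_selection_xy_slistx}, the expected size of the relevant $(m,w)$-set is at most $2^{n(R_1+R_2-\eps)}$ under the stated mutual-information hypothesis, so Markov's inequality (Lemma~\ref{lem:markov}) yields the claimed $2^{n(R_1+R_2-\eps/2)}$ bound with failure probability $2^{-n\eps/2}$. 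For the per-codeword bounds \eqref{eqn:cw_selection_no_mprimewprime}--\eqref{eqn:cw_selection_no_wprime}, the expected count for a fixed $(\vx,\vy)$ is $\approx 2^{n([R - I]^+ + \eps/2)}$: when $R < I$ a direct Markov bound plus expurgation of a vanishing fraction of $(m,w)$ pairs suffices; when $R \ge I$ the matching concentration is supplied by a Chebyshev / second-moment argument (Lemma~\ref{lem:cheby}) using the pairwise independence of the codewords. Bounds \eqref{eqn:number_of_bipgh} and \eqref{eqn:number_of_s} arise from the strict rate hypothesis, which forces the expected count of such lists anchored at any fixed $(m_0, w_0)$ to be $o(1)$; multiplying by the $n^{\cO(1)}$ possible joint types and applying Markov yields the uniform $2^{n\eps}$ bound.

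The main obstacle I expect is the combinatorial bookkeeping for the list-structured inequalities \eqref{eqn:cw_selection_xy_slistxy}--\eqref{eqn:cw_selection_xy_slistx}: one must correctly match realized lists $\cL \in \sL_L(m,w)$ to the abstract graph $\cB$, quotient by the automorphism group $\{(\sigma,\pi) \in S_I \times S_J : (\sigma,\pi)\cE = \cE\}$ so that automorphic realizations are not double-counted, and verify that the mutual-information exponent under the product codeword distribution really delivers the claimed $\eps/2$ slack. A second delicate step is the expurgation for the per-codeword bounds: one must show that the set of $(m,w)$ violating some per-codeword inequality has expected measure $2^{-\Omega(n)}$, so it can be removed without disturbing the asymptotic rate pair $(R_1, R_2)$; this requires tracking all failure events uniformly over the $n^{\cO(1)}$ joint types and over all admissible $\vs$. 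Once this accounting is handled, the remaining calculations reduce to the routine method-of-types bounds already standard in \cite{csiszar-narayan-it1988-obliviousavc}.
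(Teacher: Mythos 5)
The paper itself gives no proof of this lemma --- it states only that the codewords ``can be obtained via random selection'' along the lines of Csisz\'ar--Narayan and omits all details --- so your sketch has to be judged against what that technique actually requires. Your overall architecture (uniform draws from conditional type classes, union bound over the polynomially many joint types and the constantly many graphs $\cB$) is the right one, but there is a genuine gap in the concentration step. Every inequality in the lemma is quantified over \emph{every} state sequence $\vs$ with $g(\vs)\le\Lambda$, and the sets being counted, e.g.\ $\{(m,w)\colon \tau_{\vu,\vx_m,\vy_w,\vs}=P_{\bfu,\bfx,\bfy,\bfs}\}$, depend on the actual sequence $\vs$ and not merely on its type: two distinct $\vs,\vs'$ of the same type give different sets and hence different failure events. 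Your plan to ``union-bound over the (polynomially many) types of $\vs$'' therefore does not cover the quantifier; one must union bound over all $|\cS|^n$ admissible state sequences (as well as over all $MW$ anchor pairs $(m_0,w_0)$ for the per-codeword bounds). Markov's inequality gives failure probability $2^{-n\eps/2}$ per event and Chebyshev at best $2^{-\Theta(n)}$; neither survives a union bound over exponentially many $\vs$.

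The missing ingredient is the Chernoff/martingale-type large-deviations bound that is the actual engine of the Csisz\'ar--Narayan codeword-selection lemma (their Lemma A1): for functions $f_i(Z_1,\dots,Z_i)\in[0,1]$ of the sequentially drawn codewords with conditional means at most $a$, the empirical average exceeds $t$ with probability at most $\exp\{-N(t-a\log e)\}$, where $N$ is the number of codewords (pairs). With $N=MW=L^2 2^{n(R_1+R_2)}$ and $t=2^{-n\eps/2}$ this yields a \emph{doubly} exponentially small failure probability, which is what makes the union bound over all $\vs\in\cS^n$, all joint types, all graphs, and all anchors go through. Your expurgation idea for the per-codeword bounds is also unnecessary once this tool is in hand (and would in any case need the same doubly exponential control to be carried out uniformly in $\vs$). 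Replacing the Markov/Chebyshev steps with this bound, and keeping the rest of your bookkeeping, gives a correct proof.
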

\begin{remark}
When we say two graphs have the same structure, the equivalence is sensitive to vertex relabelling. 
Two bipartite graphs $ \cB = (\cI,\cJ,\cE)\in\sB_L $ and $ \cL = (\cS,\cT,\cF)\in\sL_L(m,w) $ have the same structure if $\cL$ is identical to $\cB$ after relabelling $\cS$ and $ \cT $ using $ \cI $ and $ \cJ $, respectively. Recall that we require that vertices in $ \cI,\cJ $ are consecutive increasing positive integers; vertices in $ \cS,\cT $ are messages of increasing indices. 
For example, in Fig. \ref{fig:graphs_with_same_struct}, $ \cB = (\cI,\cJ,\cE)\in\sB_4 $, where $ \cI = [3] $, $ \cJ=[2] $ and $ \cE = \curbrkt{(1,2),(2,1),(2,2),(3,2)} $, and $ \cL = (\cS,\cT,\cF)\in\sL_4(m_3,w_1) $, where $ \cS = \curbrkt{m_2,m_3,m_7} $, $ \cT = \curbrkt{w_1,w_3} $ and $ \cF = \curbrkt{(m_2,w_3),(m_3,w_1),(m_3,w_3),(m_7,w_3)} $, have the same structure.
However $ \cB $ does not have the same structure as $ \cL' = (\cS',\cT',\cF')\in\sL_4(m_3,w_1) $, where $ \cS' = \curbrkt{m_1,m_3,m_5} $, $ \cT' = \curbrkt{w_1,w_2} $ and $ \cF' = \curbrkt{(m_1,w_1),(m_3,w_1),(m_3,w_2),(m_5,w_1)} $, though they are isomorphic.
\begin{figure}[htbp]
  \centering
  \includegraphics[width=0.5\textwidth]{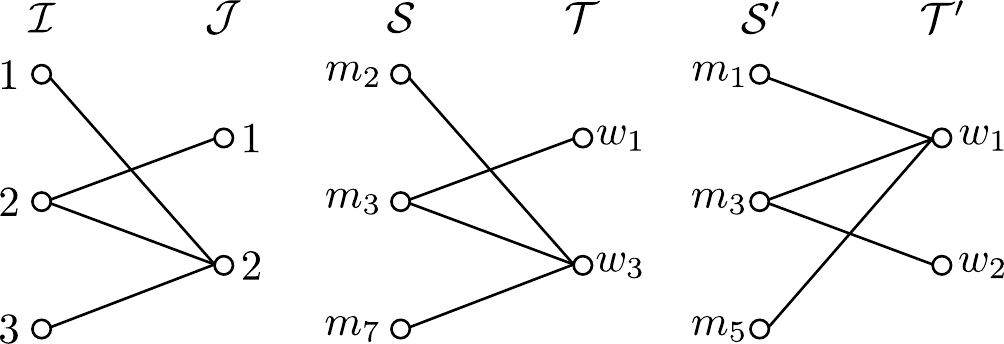}
  \caption{Graphs with the same underlying structure.}
  \label{fig:graphs_with_same_struct}
\end{figure}
\end{remark}

\subsection{Unambiguity of decoding}

\begin{lemma}\label{lem:unamb_dec}
Fix types $P_\bfu,P_{\bfx|\bfu},P_{\bfy|\bfu}$ with $\sqrbrkt{P_\bfu P_{\bfx|\bfu}}_\bfx\in\cP_1,\sqrbrkt{P_\bfu P_{\bfy|\bfu}}_\bfy\in\cP_2$. Fix any time-sharing sequence $\vu$ of type $P_\bfu$ and any codebook pair $\cC_1=\curbrkt{\vx_m}_{m=1}^{ L2^{nR_1}},\cC_1=\curbrkt{\vy_w}_{w=1}^{ L2^{nR_2}}$ such that $\tau_{\vx_m|\vu} = P_{\bfx|\bfu},\tau_{\vy_w|\vu} = P_{\bfy|\bfu}$. Assume that $P_\bfu(u)>0$ for all $u$ and $\overline\Lambda(\cB,P_\bfu P_{\bfx|\bfu}P_{\bfy|\bfu})>\Lambda$ for all bipartite graphs $\cB =(\cI,\cJ,\cE)\in\sB_L$. 

Suppose  $L> L_w(P_{\bfu,\bfx,\bfy})$.\footnote{In fact, it suffices to prove the lemma for $L=L_w(P_{\bfu,\bfx,\bfy})+1$.} Then the decoder defined above always outputs a list  of at most $L$ message pairs. That is, there is no bipartite graph $\cB = (\cI,\cJ,\cE)\in\sB_{L+1}$ and no joint distribution $P_{\bfu,\bfx^\cI,\bfy^\cJ,\bfs^\cE,\bfz}$ simultaneously satisfying
\begin{enumerate}
    \item $P_{\bfx_i|\bfu} = P_{\bfx|\bfu},P_{\bfy_j|\bfu} = P_{\bfy|\bfu}$ for all $i\in\cI,j\in\cJ$;
    \item $\expt{g(\bfs_{i,j})}\le\Lambda$ for all $(i,j)\in\cE$;
    \item $P_{\bfu,\bfx_i,\bfy_j,\bfs_{i,j},\bfz}\in\cP_\eta$ for all $(i,j)\in\cE$;
    \item $I\paren{\left. \bfx_i,\bfy_j,\bfz;\bfx^{\cI\setminus i},\bfy^{\cJ\setminus j} \right| \bfu,\bfs_{i,j} }\le\eta'$ for all $(i,j)\in\cE$.
\end{enumerate}
\end{lemma}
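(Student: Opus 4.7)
The proof is by contradiction. Suppose there exist $\cB = (\cI, \cJ, \cE) \in \sB_{L+1}$ with $|\cE| = L+1$ and a joint distribution $P_{\bfu, \bfx^\cI, \bfy^\cJ, \bfs^\cE, \bfz}$ satisfying conditions (1)--(4). The plan is to extract from this a family of symmetrizing kernels $\curbrkt{Q^{(u)}}_u \subset \cQ_\symm(\cB)$ whose expected cost under the product input distribution is strictly less than $\Lambda$; this witnesses $L_w(P_{\bfu,\bfx,\bfy}) \ge L + 1$ and contradicts $L > L_w(P_{\bfu,\bfx,\bfy})$.

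\emph{Per-edge kernels and the approximate symmetrizing identity.} For each $(i,j) \in \cE$, Pinsker's inequality applied to condition (3) gives $P_{\bfz|\bfu,\bfx_i,\bfy_j,\bfs_{i,j}}$ close to $W$ in total variation; condition (4) then upgrades this to
\[P_{\bfz|\bfu, \bfs_{i,j}, \bfx^\cI, \bfy^\cJ}(z|\cdot) \approx W(z|x_i, y_j, s_{i,j}).\]
The chain rule of mutual information yields
\[I(\bfx_i, \bfy_j; \bfs_{i,j}|\bfu, \bfx^{\cI\setminus i}, \bfy^{\cJ\setminus j}) \le I(\bfx_i, \bfy_j; \bfs_{i,j}|\bfu) + I(\bfx_i, \bfy_j; \bfx^{\cI\setminus i}, \bfy^{\cJ\setminus j}|\bfu, \bfs_{i,j}) = O(\eta+\eta'),\]
where the first summand on the RHS is $O(\eta)$ by condition (3) (under the target product law $\bfs_{i,j}$ is conditionally independent of $(\bfx_i, \bfy_j)$ given $\bfu$) and the second is $\le \eta'$ by condition (4). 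Hence $\bfs_{i,j}$ is nearly a function of $(\bfu, \bfx^{\cI\setminus i}, \bfy^{\cJ\setminus j})$. Define
\[Q^{(u)}_{i,j}(s|x^{\cI\setminus i}, y^{\cJ\setminus j}) \coloneqq P_{\bfs_{i,j}|\bfu, \bfx^{\cI\setminus i}, \bfy^{\cJ\setminus j}}(s|u, x^{\cI\setminus i}, y^{\cJ\setminus j});\]
marginalizing $\bfs_{i,j}$ then gives, for every $(i,j) \in \cE$,
\[P_{\bfz|\bfu, \bfx^\cI, \bfy^\cJ}(z|u, x^\cI, y^\cJ) \approx \sum_s Q^{(u)}_{i,j}(s|x^{\cI\setminus i}, y^{\cJ\setminus j}) W(z|x_i, y_j, s).\]
Since the LHS does not depend on $(i,j)$, these RHSs agree across edges up to $O(\eta+\eta')$---this is the approximate $\cB$-symmetrizing identity.

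\emph{A single symmetrizer and the cost bound.} Conditions (1)--(4) are preserved by averaging $P$ over the automorphism group $\mathrm{Aut}(\cB) \coloneqq \curbrkt{(\sigma, \pi) \in S_I \times S_J \colon (\sigma, \pi)\cE = \cE}$ (using convexity of KL divergence and linearity of the remaining constraints), so WLOG $P$ is $\mathrm{Aut}(\cB)$-invariant; the family $\curbrkt{Q^{(u)}_{i,j}}_{(i,j) \in \cE}$ is then $\mathrm{Aut}(\cB)$-equivariant. A canonical mixture of the $Q^{(u)}_{i,j}$'s across edges (aligning coordinate labels via orbit representatives) satisfies the symmetrizability equations defining $\cQ_\symm(\cB)$ up to $O(\eta+\eta')$; projecting onto the nonempty affine subspace those linear equations cut out yields an exact symmetrizer $Q^{(u)}$ nearby. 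For the cost, condition (2) gives $\expt{g(\bfs_{i,j})}\le\Lambda$; the no-isolated-vertex hypothesis lets us apply condition (3) edge-by-edge to pin every coordinate of $(\bfx^\cI, \bfy^\cJ)$ to its target marginal, bounding $d_{\mathrm{TV}}\paren{P_{\bfx^\cI, \bfy^\cJ|\bfu},\, P_{\bfx|\bfu}^{\otimes I} P_{\bfy|\bfu}^{\otimes J}} = O(\eta)$. Hence the expected cost of $Q^{(u)}$ under the product input exceeds $\expt{g(\bfs_{i,j})}$ by at most $O(\eta+\eta')$ and is strictly below $\Lambda$ for sufficiently small slack $\eta, \eta'$, closing the contradiction.

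The principal difficulty is the passage from the per-edge candidates $Q^{(u)}_{i,j}$---which agree only after convolution with $W$---to a single kernel in the polytope $\cQ_\symm(\cB)$: the $\mathrm{Aut}(\cB)$-symmetrization handles constraints internal to each edge-orbit automatically, while cross-orbit constraints require the projection/perturbation step justified by the approximate identity above. A secondary subtlety is the chain-rule control of the joint marginal $P_{\bfx^\cI, \bfy^\cJ|\bfu}$ in the cost estimate, where the no-isolated-vertex hypothesis enters essentially to prevent any coordinate from drifting away from its target distribution.
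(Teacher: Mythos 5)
Your first half---the contradiction setup, Pinsker's inequality applied to condition (3), the chain-rule use of condition (4), and the conclusion that every edge $(i,j)\in\cE$ induces approximately the same expression $\sum_s Q^{(u)}_{i,j}(s|x^{\cI\setminus i},y^{\cJ\setminus j})W(z|x_i,y_j,s)$---matches the paper's argument. The second half, however, has two genuine gaps. The step ``projecting onto the nonempty affine subspace \dots yields an exact symmetrizer nearby'' assumes what is in question: if $L>L_w(P_{\bfu,\bfx,\bfy})$, then for the graph $\cB$ at hand either $\cQ_\symm(\cB)=\emptyset$ (so there is nothing to project onto) or every member of $\cQ_\symm(\cB)$ has product-input cost at least $\Lambda$, in which case an exact symmetrizer of cost $\Lambda+O(\eta+\eta')$ yields no contradiction. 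The paper instead runs a compactness/continuity argument: it defines a defect functional $f$ (the quantity in Eqn.~\eqref{eqn:wts_positive} optimized over edges and over the relevant distributions), argues that its extremal value $\zeta$ must be strictly positive because $\zeta=0$ would produce an exact symmetrizing family with cost at most $\Lambda$ (contradicting the non-symmetrizability hypothesis), and then chooses $\eta,\eta'$ small relative to $\zeta$ to contradict the derived upper bound $2c\sqrt{\eta+\eta'}/p_u^*$. Your projection step must be replaced by such a uniform positive lower bound on the defect.

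The second, more serious, gap is your cost estimate, which rests on the claim that conditions (1)--(4) force the full joint $P_{\bfx^\cI,\bfy^\cJ|\bfu}$ to be within $O(\eta)$ of $P_{\bfx|\bfu}^{\otimes I}P_{\bfy|\bfu}^{\otimes J}$ in total variation. That is false: condition (3) controls only the pair $(\bfx_i,\bfy_j)$ for each edge, and condition (4) gives near-independence of $(\bfx_i,\bfy_j,\bfz)$ from the remaining coordinates only \emph{conditioned on} $\bfs_{i,j}$; nothing prevents, say, two coordinates $\bfx_{i'},\bfx_{i''}$ with $i',i''\ne i$ from being perfectly correlated given $\bfu$, since $\bfs_{i,j}$ may encode them. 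This matters because the weak-symmetrizability cost is evaluated under the \emph{product} input law, while condition (2) bounds the cost only under the \emph{actual} joint. The paper bridges this by averaging over all of $S_{I-1}\times S_{J-1}$ and then proving (Lemma~\ref{lem:prod_distr_lem}) that the exact symmetrizing identities, after marginalizing out $\bfz$, force the averaged input marginal to equal the product distribution; only then does Bayes' rule identify the conditional kernel $\wt Q_u$ and transfer $\expt{g(\bfs_{i,j})}\le\Lambda$ to the product-input cost. This product-structure lemma is the key ingredient missing from your proposal.
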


\begin{proof}
The proof is by contradiction. Suppose that there is a bipartite graph $\cB = (\cI,\cJ,\cE)\in\sB_{L+1}$ and a joint distribution $P_{\bfu,\bfx^\cI,\bfy^\cJ,\bfs^\cE,\bfz}$ satisfying the above conditions. 

Now consider the divergences, for $(i,j)\in\cE$,
\begin{equation*}
    D\paren{ P_{\bfu,\bfx_i,\bfy_j,\bfx^{\cI\setminus i},\bfy^{\cJ\setminus j},\bfs_{i,j},\bfz} \left\| P_\bfu P_{\bfx_i|\bfu}P_{\bfy_j|\bfu}P_{\bfx^{\cI\setminus i},\bfy^{\cJ\setminus j},\bfs_{i,j}|\bfu}W_{\bfz|\bfx_i,\bfy_j,\bfs_{i,j}} \right.}.
\end{equation*}

One can verify that the above divergence is the sum of 
\begin{align*}
    \eta\ge &D\paren{ P_{\bfu,\bfx_i,\bfy_j,\bfs_{i,j},\bfz} \left\| P_{\bfu}P_{\bfx_i|\bfu}P_{\bfy_j|\bfu}P_{\bfs_{i,j}W_{\bfz|\bfx_i,\bfy_j,\bfs_{i,j}}} \right.}\\
    =&\sum_{u,x^\cI,y^\cJ,s,z}P(u,x_i,y_j,x^{\cI\setminus i},y^{\cJ\setminus j},s,z)\log\frac{P(u,x_i,y_j,s,z)}{P(u)P(x_i|u)P(y_j|u)P(s)W(z|x_i,y_j,s)}
\end{align*}
and
\begin{align*}
    \eta'\ge&I\paren{\left. \bfx_i,\bfy_j,\bfz;\bfx^{\cI\setminus i},\bfy^{\cJ\setminus j} \right|\bfu,\bfs_{i,j}  }\\
    =&\sum_{u,x^\cI,y^\cJ,s,z} P(u,x_i,y_j,x^{\cI\setminus i},y^{\cJ\setminus j},s,z) \log\frac{P(x^{\cI\setminus i},y^{\cJ\setminus j}|u,s,x_i,y_j,z)}{P(x^{\cI\setminus i},y^{\cJ\setminus j}|u,s)}.
\end{align*}
Hence each of  the above divergences is at most $\eta+\eta'$. Since marginalization does not increase divergence, we have
\begin{equation*}
    D\paren{\left.  P_{\bfu,\bfx_i,\bfy_j,\bfx^{\cI\setminus i},\bfy^{\cJ\setminus j},\bfz}  \right\| P_{\bfu}P_{\bfx_i|\bfu}P_{\bfy_j|\bfu}V_{\bfx^{\cI\setminus i},\bfy^{\cJ\setminus j},\bfz|\bfu,\bfx_i,\bfy_j} }\le\eta+\eta',
\end{equation*}
where
\begin{equation*}
    V_{\bfx^{\cI\setminus i},\bfy^{\cJ\setminus j},\bfz|\bfu,\bfx_i,\bfy_j}(x^{\cI\setminus i},y^{\cJ\setminus j},z|u,x_i,y_j) \coloneqq\sum_s P_{\bfx^{\cI\setminus i},\bfy^{\cJ\setminus j},\bfs_{i,j}|\bfu}(x^{\cI\setminus i},y^{\cJ\setminus j},s|u)W_{\bfz|\bfx_i,\bfy_j,\bfs_{i,j}}(z|x_i,y_j,s).
\end{equation*}
By Pinsker's inequality, the divergence is  lower bounded by the total variation distance (multiplied by some universal constant). We hence have
\begin{align*}
    c\sqrt{\eta+\eta'}\ge&\sum_{u,x^\cI,y^\cJ,z}\abs{ P(u,x_i,y_j,x^{\cI\setminus i},y^{\cJ\setminus j},z) - P(u)P(x_i|u)P(y_j|u)V(x^{\cI\setminus i},y^{\cJ\setminus j},z|u,x_i,y_j) }, 
\end{align*}
where $c=\sqrt{2\ln2}$.

Similarly, for $(i',j')\in\cE$, we have the same inequality
\begin{align*}
    c\sqrt{\eta+\eta'}\ge&\sum_{u,x^\cI,y^\cJ,z}\abs{ P(u,x_{i'},y_{j'},x^{\cI\setminus i'},y^{\cJ\setminus j'},z) - P(u)P(x_{i'}|u)P(y_{j'}|u)V(x^{\cI\setminus i'},y^{\cJ\setminus j'},z|u,x_{i'},y_{j'}) },
\end{align*}
where
\begin{align*}
    V_{\bfx^{\cI\setminus i'},\bfy^{\cJ\setminus j'},\bfz|\bfu,\bfx_{i'},\bfy_{j'}}'(x^{\cI\setminus i'},y^{\cJ\setminus j'},z|u,x_{i'},y_{j'}) \coloneqq\sum_s P_{\bfx^{\cI\setminus i'},\bfy^{\cJ\setminus j'},\bfs_{i',j'}|\bfu}(x^{\cI\setminus i'},y^{\cJ\setminus j'},s|u)W_{\bfz|\bfx_{i'},\bfy_{j'},\bfs_{i',j'}}(z|x_{i'},y_{j'},s).
\end{align*}
By triangle inequality,
\begin{align*}
    &2c\sqrt{\eta+\eta'}\\
    \ge&\sum_{u,x^\cI,y^\cJ,z}\abs{ P(u)P(x_i|u)P(y_j|u)V(x^{\cI\setminus i},y^{\cJ\setminus j},z|u,x_i,y_j) - P(u)P(x_{i'}|u)P(y_{j'}|u)V'(x^{\cI\setminus i'},y^{\cJ\setminus j'},z|u,x_{i'},y_{j'}) }\\
    =&\sum_{u,x^\cI,y^\cJ,z}\left|\sum_s P(u)P(x_i|u)P(y_j|u)P(x^{\cI\setminus i},y^{\cJ\setminus j},s|u)W(z|x_i,y_j,s)- P(u)P(x_{i'}|u)P(y_{j'}|u)P(x^{\cI\setminus i'},y^{\cJ\setminus j'},s|u)W(z|x_{i'},y_{j'},s) \right|.
\end{align*}
 Let $p_u^*\coloneqq\min_uP_\bfu(u)$. Since $P_\bfu$ is assumed to have no zero atom, $p_u^*>0$. By Markov's inequality,
\begin{align}
    &\frac{2c\sqrt{\eta+\eta'}}{p_u^*}\notag\\
    \ge&\sum_{x^\cI,y^\cJ,z}\left|\sum_{s} P(x_i|u)P(y_j|u)P(x^{\cI\setminus i},y^{\cJ\setminus j},s|u)W(z|x_i,y_j,s)- P(x_{i'}|u)P(y_{j'}|u)P(x^{\cI\setminus i'},y^{\cJ\setminus j'},s|u)W(z|x_{i'},y_{j'},s) \right|\label{eqn:summation_of_interest}.
\end{align}
Note that for any $\sigma\in S_{I-1},\pi\in S_{J-1}$, the summation \eqref{eqn:summation_of_interest} equals
\begin{align*}
    &\sum_{x^\cI,y^\cJ,z}\left|\sum_{s} P(x_i|u)P(y_j|u)P(x^{\sigma(\cI\setminus i)},y^{\pi(\cJ\setminus j)},s|u)W(z|x_i,y_j,s)- P(x_{i'}|u)P(y_{j'}|u)P(x^{\sigma(\cI\setminus i')},y^{\pi(\cJ\setminus j')},s|u)W(z|x_{i'},y_{j'},s) \right|.
\end{align*}
Hence the RHS of Eqn. \eqref{eqn:summation_of_interest} equals
\begin{align}
    &\frac{1}{(I-1)!(J-1)!}\sum_{\sigma\in S_{I-1},\pi\in S_{J-1}}\sum_{x^\cI,y^\cJ,z}\left|\sum_{s} P(x_{i}|u)P(y_{j}|u)P(x^{\sigma(\cI\setminus i)},y^{\pi(\cJ\setminus j)},s|u)W(z|x_{i},y_{j},s)\right.\notag\\
    &\left.-\sum_s P(x_{i'}|u)P(y_{j'}|u)P(x^{\sigma(\cI\setminus i')},y^{\pi(\cJ\setminus j')},s|u)W(z|x_{i'},y_{j'},s) \right|\notag\\
    \ge&\sum_{x^\cI,y^\cJ,z}\left|\sum_{s} P(x_i|u)P(y_j|u)Q(x^{\cI\setminus i},y^{\cJ\setminus j},s|u)W(z|x_i,y_j,s)- P(x_{i'}|u)P(y_{j'}|u)Q(x^{\cI\setminus i'},y^{\cJ\setminus j'},s|u)W(z|x_{i'},y_{j'},s) \right|,\label{eqn:wts_positive}
\end{align}
where 
\begin{align*}
    Q_{\bfx^{I-1},\bfy^{J-1},\bfs|\bfu}\coloneqq&\frac{1}{(I-1)!(J-1)!}\sum_{\sigma\in S_{I-1},\pi\in S_{J-1}}P_{\bfx^{\sigma^{-1}(\cI\setminus i)},\bfy^{\pi^{-1}(\cJ\setminus j)},\bfs_{i,j}|\bfu} .
\end{align*}
One can check that $ Q_{\bfx^{I-1},\bfy^{J-1},\bfs|\bfu} $ is symmetric in $ x^{I-1},y^{J-1} $ for every $ u,s $.
Indeed, for any $u,s,x^{I-1},y^{J-1} $ and $ \sigma'\in S_{I-1},\pi'\in S_{J-1} $,
\begin{align}
Q(x^{\sigma'(I-1)},y^{\pi'(J-1)},s|u) =& \frac{1}{(I-1)!(J-1)!}\sum_{\sigma\in S_{I-1},\pi\in S_{J-1}} P(x^{\sigma(\sigma'(I-1))},y^{\pi(\pi'(J-1))},s|u) \notag \\
=& \frac{1}{(I-1)!(J-1)!}\sum_{\sigma\in S_{I-1},\pi\in S_{J-1}}P(x^{\sigma(I-1)},y^{\pi(J-1)},s|u) \notag \\
=& Q(x^{I-1},y^{J-1},s|u). \notag
\end{align}
Let $f(Q_{\bfx^{I-1},\bfy^{J-1},\bfs|\bfu},P_{\bfx|\bfu},P_{\bfy|\bfu})$ denote the RHS of Eqn. \eqref{eqn:wts_positive} maximized over edges $(i',j')\ne(i,j)$.
Suppose that, via  distributions  $Q_{\bfx^{I-1},\bfy^{J-1},\bfs|\bfu}^*,P_{\bfx|\bfu}^*,P_{\bfy|\bfu}^*$, $f$ attains its maxima $f(Q^*,P_{\bfx|\bfu}^*,P_{\bfy|\bfu}^*)\eqcolon\zeta$. We will argue that $\zeta>0$. Assume otherwise $\zeta=0$. Then for all $(i',j')\ne(i,j),x^\cI,y^\cJ,z$,
\begin{align}
    \sum_{s} P^*(x_i|u)P^*(y_j|u)Q^*(x^{\cI\setminus i},y^{\cJ\setminus j},s|u)W(z|x_i,y_j,s) =&\sum_{s} P^*(x_{i'}|u)P^*(y_{j'}|u)Q^*(x^{\cI\setminus i'},y^{\cJ\setminus j'},s|u)W(z|x_{i'},y_{j'},s).\label{eqn:towards_contradiction_symm}
\end{align}
Marginalizing $z$ out, we have
\begin{align}
    P^*(x_i|u)P^*(y_j|u)Q^*(x^{\cI\setminus i},y^{\cJ\setminus j}|u)=&P^*(x_{i'}|u)P^*(y_{j'}|u)Q^*(x^{\cI\setminus i'},y^{\cJ\setminus j'}|u).
\end{align}
In fact, $Q$ satisfying the above identity must be a product distribution.
\begin{align}
    Q^*(x^{\cI\setminus i},y^{\cJ\setminus j}|u) = &(P^*)^{\otimes(I-1)}(x^{\cI\setminus i}|u)(P^*)^{\otimes(J-1)}(y^{\cJ\setminus j}|u),\label{eqn:prod_distr}
\end{align}
which is obviously symmetric
The proof of the above identity is deferred to Lemma \ref{lem:prod_distr_lem}.
Substituting this back to Eqn. \eqref{eqn:towards_contradiction_symm}
\begin{align*}
    &\sum_{s} P^*(x_i|u)P^*(y_j|u)Q^*(x^{\cI\setminus i},y^{\cJ\setminus j}|u)Q^*(s|u,x^{\cI\setminus i},y^{\cJ\setminus j})W(z|x_i,y_j,s)\\
    =&\sum_{s} P^*(x_{i'}|u)P^*(y_{j'}|u)Q^*(x^{\cI\setminus i'},y^{\cJ\setminus j'}|u)Q^*(s|u,x^{\cI\setminus i'},y^{\cJ\setminus j'})W(z|x_{i'},y_{j'},s),
\end{align*}
we have
\begin{align*}
    &\sum_s(P^*)^{\otimes I}(x^\cI|u)(P^*)^{\otimes J}(y^{\cJ}|u)Q^*(s|u,x^{\cI\setminus i},y^{\cJ\setminus j})W(z|x_i,y_j,s)\\
    =&\sum_s(P^*)^{\otimes I}(x^\cI|u)(P^*)^{\otimes J}(y^{\cJ}|u)Q^*(s|u,x^{\cI\setminus i'},y^{\cJ\setminus j'})W(z|x_{i'},y_{j'},s).
\end{align*}
Cancelling out $(P^*)^{\otimes I}(x^\cI|u)(P^*)^{\otimes J}(y^{\cJ}|u)$ which is independent of $s$, we get 
\begin{align*}
    \sum_sQ^*(s|u,x^{\cI\setminus i},y^{\cJ\setminus j})W(z|x_i,y_j,s)=&\sum_sQ^*(s|u,x^{\cI\setminus i'},y^{\cJ\setminus j'})W(z|x_{i'},y_{j'},s).
\end{align*}
Since 
\begin{align}
Q^*(x^{I-1},y^{J-1},s|u) =& Q^*(x^{I-1},y^{J-1}|u) Q^*(s|x^{I-1},y^{J-1},u),
\end{align}
and both $ Q^*(x^{I-1},y^{J-1},s|u) $ and $ Q^*(x^{I-1},y^{J-1}|u) $ are symmetric in $ x^{I-1},y^{J-1} $, $ Q^*(s|x^{I-1},y^{J-1},u) $ is also symmetric. 
Therefore, $\wt Q_u(s|x^{I-1},y^{J-1}) \coloneqq Q^*(s|u,x^{I-1},y^{J-1})$ is a symmetrizing distribution for every $u$. Note also that 
\begin{align}
    & \sum_{u,x^{I-1},y^{J-1}}P(u)(P^*)^{\otimes (I-1)}(x^{I-1}|u)(P^*)^{\otimes (J-1)}(y^{J-1}|u)\wt Q_u(s|x^{I-1},y^{J-1})g(s)\notag\\
    =&\sum_{u,x^{I-1},y^{J-1},s}P(u)P^*(x^{I-1}|u)P^*(y^{J-1}|u)Q^*(s|u,x^{I-1},y^{J-1})g(s)\notag\\
    =&\sum_{u,x^{I-1},y^{J-1},s}P(u)Q^*(x^{I-1},y^{J-1},s|u)g(s)\label{eqn:by_bayes}\\
    =&\frac{1}{(I-1)!(J-1)!}\sum_{\sigma\in S_{I-1},\pi\in S_{J-1}}\sum_{u,x^{I-1},y^{J-1},s}P(u)P_{\bfx^{\sigma^{-1}(\cI\setminus i)},\bfy^{\pi^{-1}(\cJ\setminus j)},\bfs_{i,j}|\bfu}(x^{I-1},y^{J-1},s|u)g(s)\notag\\
    =&\sum_{u,x^{I-1},y^{J-1},s}P(u)P_{\bfx^{\cI\setminus i},\bfy^{\cJ\setminus j},\bfs_{i,j}|\bfu}(x^{I-1},y^{J-1},s|u)g(s)\label{eqn:same_sum_for_diff_perm}\\
    =&\sum_sP_{\bfs_{i,j}}(s)g(s)\notag\\
    =&\expt{g(\bfs_{i,j})}\notag\\
    \le&\Lambda,\label{eqn:by_assumption}
\end{align}
where Eqn. \eqref{eqn:by_bayes} is by Bayes' theorem, 
\begin{align*}
    Q^*(s|u,x^{I-1},y^{J-1}) = &\frac{P(u)Q^*(x^{I-1},y^{J-1},s|u)}{P(u,x^{I-1},y^{J-1})}\\
    =&\frac{P(u)Q^*(x^{I-1},y^{J-1},s|u)}{P(u)P^*(x^{I-1}|u)P^*(y^{J-1}|u)}\\
    =&\frac{Q^*(x^{I-1},y^{J-1},s|u)}{P^*(x^{I-1}|u)P^*(y^{J-1}|u)}.
\end{align*}
Eqn. \eqref{eqn:same_sum_for_diff_perm} follows since the inner summation is invariant under every permutation pair $(\sigma,\pi)$. Eqn. \eqref{eqn:by_assumption} is by the assumption of this lemma.
We thus have found a family of symmetrizing distributions subject to power constraints which have an underlying graph $\cB\in\sB_{L+1}$ with $|\cE(\cB)|=L+1$ edges, which means $L_w(P_{\bfu,\bfx,\bfy})\ge L+1$. This contradicts the assumption $L_w(P_{\bfu,\bfx,\bfy})<L$ and finishes the proof.
\end{proof}

It remains to check Eqn. \eqref{eqn:prod_distr}.
\begin{lemma}\label{lem:prod_distr_lem}
For $I\ge2$, $J\ge2$, let $Q\in\Delta(\cX^{I-1}\times\cY^{J-1}|\cU)$ and $P_1\in\Delta(\cX|\cU),P_2\in\Delta(\cY|\cU)$ be such that
\begin{align}
    Q(x^{\cI\setminus i},y^{\cJ\setminus j}|u)P_1(x_i|u)P_2(y_j|u) = &Q(x^{\cI\setminus i'},y^{\cJ\setminus j'}|u)P_1(x_{i'}|u)P_2(y_{j'}|u)\label{eqn:prod_distr_assump}
\end{align}
for all $(i,j)\ne(i',j'),x^{\cI},y^{\cJ},u$. Then
\begin{align}
    Q(x^{\cI\setminus i},y^{\cJ\setminus j}|u)P_1(x_i|u)P_2(y_j|u) = P_1^{\otimes I}(x^\cI|u)P_2^{\otimes J}(y^\cJ|u),\label{eqn:prod_distr_to_show}
\end{align}
for all $i,j,x^\cI,y^\cJ$.
\end{lemma}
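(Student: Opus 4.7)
The plan is to recognize that the hypothesis \eqref{eqn:prod_distr_assump} is exactly the statement that the quantity
\[
R(x^\cI,y^\cJ\mid u)\;\coloneqq\;Q(x^{\cI\setminus i},y^{\cJ\setminus j}\mid u)\,P_1(x_i\mid u)\,P_2(y_j\mid u)
\]
does not depend on the choice of $(i,j)\in\cI\times\cJ$. A one-line check using $\sum Q=\sum P_1=\sum P_2=1$ shows $\sum_{x^\cI,y^\cJ}R(x^\cI,y^\cJ\mid u)=1$, so $R(\cdot,\cdot\mid u)$ is a bona fide distribution on $\cX^I\times\cY^J$. The lemma then reduces to showing $R=P_1^{\otimes I}\cdot P_2^{\otimes J}$.

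For each choice of $(i,j)$, the defining equation displays $R$ as a product of three factors over the three disjoint variable blocks $\{x_i\}$, $\{y_j\}$, $\{x^{\cI\setminus i},y^{\cJ\setminus j}\}$ whose union is all of $(x^\cI,y^\cJ)$. Hence under $R$ those three blocks are mutually independent, and since each factor is already a probability distribution in its own variables, marginal matching forces the marginals to be $P_1(\cdot\mid u)$, $P_2(\cdot\mid u)$, and $Q(\cdot\mid u)$, respectively. Letting $(i,j)$ range I obtain two families of independencies under $R$: for every $i\in\cI$, $x_i$ is independent of $(x^{\cI\setminus i},y^\cJ)$ with marginal $P_1(\cdot\mid u)$; and for every $j\in\cJ$, $y_j$ is independent of $(x^\cI,y^{\cJ\setminus j})$ with marginal $P_2(\cdot\mid u)$.

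With these independencies in hand, I would peel off variables one at a time. The $i=1$ independence gives $R(x^\cI,y^\cJ\mid u)=P_1(x_1\mid u)\,R^{(1)}(x^{\cI\setminus 1},y^\cJ\mid u)$, where $R^{(1)}$ is the corresponding marginal. The independence $x_2\perp(x^{\cI\setminus 2},y^\cJ)$ under $R$ descends, after marginalizing out $x_1$, to the independence $x_2\perp(x^{\{3,\ldots,I\}},y^\cJ)$ under $R^{(1)}$, with the same marginal $P_1(\cdot\mid u)$ for $x_2$. Iterating this peeling over $i=1,\ldots,I$ yields $R(x^\cI,y^\cJ\mid u)=\prod_{i\in\cI}P_1(x_i\mid u)\cdot R_{y^\cJ}(y^\cJ\mid u)$. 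Running the same argument on the $y$-marginal $R_{y^\cJ}$, using the independencies $y_j\perp(x^\cI,y^{\cJ\setminus j})$ (which survive marginalizing out all $x$'s), gives $R_{y^\cJ}(y^\cJ\mid u)=\prod_{j\in\cJ}P_2(y_j\mid u)$, establishing \eqref{eqn:prod_distr_to_show}.

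I do not expect any serious obstacle: the argument ultimately reduces to the elementary fact that a joint law that admits a three-block factorization in many different ways (one for each $(i,j)$) is forced to be a complete product. Zero atoms of $P_1$ or $P_2$ require no separate treatment, since for any such $(x_i,y_j)$ both sides of the target identity vanish identically. The only minor bookkeeping point is to keep track of which marginal is being taken at each peeling step, but this is purely mechanical.
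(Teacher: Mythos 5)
Your proof is correct, but it takes a genuinely different route from the paper's. The paper argues by induction on $(I,J)$ jointly: it sums \eqref{eqn:prod_distr_assump} over $x_I$ and $y_J$ to obtain the same hypothesis for $(I-1,J-1)$, invokes the induction hypothesis, and then uses the pair $(i',j')=(I,J)$ to identify $Q(x^{[I-1]},y^{[J-1]}|u)$ with the product, starting from the base case $I=J=2$ where a single marginalization suffices. You instead observe that the hypothesis says the quantity $R(x^\cI,y^\cJ|u)$ is independent of the chosen edge $(i,j)$, that each choice of $(i,j)$ exhibits a three-block factorization of the single distribution $R$, and that the resulting family of ``one coordinate versus the rest'' independencies (each with the prescribed marginal $P_1(\cdot|u)$ or $P_2(\cdot|u)$) forces $R$ to be a full product via iterated peeling. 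The key step you rely on --- that the independence $x_2\perp(x^{\cI\setminus 2},y^\cJ)$ survives marginalizing out $x_1$ --- is exactly right and is the standard fact that pairwise ``singleton versus complement'' independence implies mutual independence. Your version is arguably cleaner: it treats all $I,J\ge 2$ uniformly and sidesteps a bookkeeping wrinkle in the paper's double induction (which decrements $I$ and $J$ simultaneously from the single base case $I=J=2$, so cases like $I=2$, $J>2$ need the separate observation that one can still marginalize only $y_J$); the paper's induction buys a shorter write-up at the cost of that bookkeeping. Your remark that zero atoms of $P_1,P_2$ need no special treatment is also correct, since the target identity is between the full products and not a statement about $Q$ alone.
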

\begin{proof}
The proof is by induction on $ I $ and $ J $.
When $ I =  J=2$, i.e., $\cI = \curbrkt{i,i'},\cJ = \curbrkt{j,j'}$, Eqn. \eqref{eqn:prod_distr_assump} reduces to 
\begin{align*}
    Q(x_{i'},y_{j'}|u)P_1(x_i|u)P_2(y_j|u) = &Q(x_{i},y_{j}|u)P_1(x_{i'}|u)P_2(y_{j'}|u).
\end{align*}
Summing over $x_i,y_j$ on both sides, we get $Q(x_{i'},y_{j'}|u) = P_1(x_{i'}|u)P_2(y_{j'}|u)$. This proves Eqn. \eqref{eqn:prod_distr_to_show} for $ I =  J = 2$.

Assume that Eqn. \eqref{eqn:prod_distr_to_show} holds for $ I - 1$ and $J- 1$. 

For $(i,j)\ne(i',j')$ and $i,i'\ne I,j,j'\ne J$, summing over $x_{ I}$ and $y_{ J}$ on both sides of Eqn. \eqref{eqn:prod_distr_assump} yields
\begin{align*}
    Q(x^{\cI\setminus(i\cup I)},y^{\cJ\setminus (j\cup J)})P_1(x_i|u)P_2(y_j|u) = &Q(x^{\cI\setminus(i'\cup J)},y^{\cJ\setminus (j'\cup J)})P_1(x_{i'}|u)P_2(y_{j'}|u),
\end{align*}
which can also be written as
\begin{align*}
    Q(x^{[I-1]\setminus i},y^{[J-1]\setminus j})P_1(x_i|u)P_2(y_j|u) = &Q(x^{[I-1]\setminus i'},y^{[J-1]\setminus  j'})P_1(x_{i'}|u)P_2(y_{j'}|u).
\end{align*}
By induction hypothesis, 
\begin{align}
    Q(x^{[I-1]\setminus i},y^{[J-1]\setminus j})P_1(x_i|u)P_2(y_j|u)=&P_1^{\otimes (I-1)}(x^{I-1}|u)P_2^{\otimes(J-1)}(y^{J-1}|u).\label{eqn:induction_hyp_implies}
\end{align}

For $i'=I$ and $j'=J$, doing the same thing gives that, for $i\ne i',j\ne j'$,
\begin{align}
    Q(x^{[I-1]\setminus i},y^{[J-1]\setminus j}|u)P_1(x_i|u)P_2(y_j|u)
    =&Q(x^{\cI\setminus(i\cup I)},y^{\cJ\setminus(j\cup J)}|u)P_1(x_i|u)P_2(y_j|u) \notag\\
    =&Q(x^{[I-1]},y^{[J-1]}|u).\label{eqn:marginalize_last_coord}
\end{align}
Combining Eqn. \eqref{eqn:induction_hyp_implies} and Eqn. \eqref{eqn:marginalize_last_coord}, we have
\begin{align*}
    Q(x^{[I-1]},y^{[J-1]}|u)=&P_1^{\otimes (I-1)}(x^{I-1}|u)P_2^{\otimes(J-1)}(y^{J-1}|u),
\end{align*}
which finishes the proof.
\end{proof}

\subsection{Achievability}  
Define, for some bipartite graph $\cB = (\cI,\cJ,\cE)\in\sB_{L-1}$ and distribution $P_{\bfx,\bfy} = P_\bfx P_\bfy$,
\begin{equation*}
    \wt\Lambda(\cB,P_{\bfx,\bfy}) \coloneqq \min_{Q_{\bfs|\bfx^{I-1},\bfy^{J-1}}\in\cQ_\symm(\cB)}\sum_{x^{I-1},y^{J-1},s}P_{\bfx}^{\otimes (I-1)}(x^{I-1})P_{\bfy}^{\otimes (J-1)}(y^{J-1})Q_{\bfs|\bfx^{I-1},\bfy^{J-1}}(s|x^{I-1},y^{J-1})g(s).
\end{equation*}
For $P_{\bfy,\bfx,\bfy} = P_\bfu P_{\bfx|\bfu}P_{\bfy|\bfu}$, define
\begin{align*}
    \overline{\Lambda}(\cB,P_{\bfu,\bfx,\bfy})\coloneqq&\sum_uP_\bfu(u)\wt\Lambda(\cB,P_{\bfx,\bfy|u})\\
    =&\min_{\curbrkt{Q_{\bfs|\bfx^{I-1},\bfy^{J-1}}^{(u)}}\subset\cQ_\symm(\cB)}\sum_{u,x^{I-1},y^{J-1},s}P_\bfu(u)P_{\bfx|u}^{\otimes (I-1)}(x^{I-1})P_{\bfy|u}^{\otimes (J-1)}(y^{J-1})Q_{\bfs|\bfx^{I-1},\bfy^{J-1}}^{(u)}(s|x^{I-1},y^{J-1})g(s).
\end{align*}

Fix  $P_{\bfu},P_{\bfx|\bfu},P_{\bfy|\bfu}$ with $\sqrbrkt{P_\bfu P_{\bfx|\bfu}}_\bfx\in\cP_1,\sqrbrkt{P_\bfu P_{\bfy|\bfu}}_\bfy\in\cP_2$. 
Assume $P_\bfu(u)>0$ for every $u$.
We write $L_w$ and $L_s$ instead of $L_w(P_{\bfu}P_{\bfx|\bfy}P_{\bfy|\bfu})$ and $L_s(P_{\bfu}P_{\bfx|\bfy}P_{\bfy|\bfu})$ for brevity.
Assume $L>L_w$, e.g., $L = L_w+1$. 
We know, by non-symmetrizability, that $\wt\Lambda(\cB,P_{\bfu,\bfx,\bfy})>\Lambda$ for any $\cB\in\sB_L$. Fix $\eta,\eta'$ such that $\psi(\vz)\le L$ for all received $\vz$ under our decoder $\psi$ and 
\begin{align*}
    I(\bfx',\bfy';\bfz|\bfu)\ge&I(\bfx,\bfy;\bfz|\bfu)-\delta/3,\\
    I(\bfx';\bfz|\bfu)\ge&I(\bfx;\bfz|\bfu)-\delta/3,\\
    I(\bfy';\bfz|\bfu)\ge&I(\bfy;\bfz|\bfu)-\delta/3,
\end{align*}
for any $P_{\bfu,\bfx',\bfy',\bfs',\bfz}\in\cP_\eta$. 

Choose $R_1,R_2$ such that
\begin{align*}
    R_1 =& I(\bfx;\bfz|\bfu) - 5\delta/6,\\
    R_2 =& I(\bfy;\bfz|\bfu) - 5\delta/6,\\
    R_1+R_2=&I(\bfx,\bfy;\bfz|\bfu) - 5\delta/6.
\end{align*}
Let $M\coloneqq L2^{nR_1},W\coloneqq L2^{nR_2}$.
Let $\vu$ be a time-sharing sequence of type $P_\bfu$. Let $\eps = \min\curbrkt{\delta/5,2\eta/5}$. Pick a codebook pair $\cC_1 = \curbrkt{\vx_{m}}_{m=1}^{ L2^{nR_1}},\cC_2 = \curbrkt{\vy_w}_{w = 1}^{ L2^{nR_2}}$ which are $P_{\bfx|\vu}$- and $P_{\bfy|\vu}$-constant composition, respectively, and satisfy properties mentioned above.

Let $(m,w)$ be the transmitted message pair. Fix $\vs$ with $g(\vs)\le\Lambda$. Conditioning on $\cC_1,\cC_2,\vu,\vs,m,w$ is omitted for brevity.  We use boldface lower-case letters to denote random variables distributed according to the types of the corresponding vectors. We write $\bfx,\bfy$ instead of $\bfx_m,\bfy_w$, for short.

We now bound the average probability of error $ P_{e,\avg}(\vs)$ under the action of $\vs$.
A decoding error occurs either if $P_{\bfu,\bfx,\bfy,\bfs,\bfz}\notin\cP_\eta$ or if there is a bipartite graph $\cB\in\sB_L$ and a joint distribution $P_{\bfu,\bfx,\bfy,\bfx^{I-1},\bfy^{J-1},\bfs,\bfz}$ such that
\begin{enumerate}
    \item $P_{\bfu,\bfx,\bfy,\bfs,\bfz}\in\cP_\eta$;
    \item for each $(i,j)\in\cE$, there is an $\bfs_{i,j}$ with $\expt{\bfs_{i,j}}\le\Lambda$ such that $P_{\bfu,\bfx_i,\bfy_j,\bfs_{i,j},\bfz}\in\cP_\eta$;
    \item $I(\bfx,\bfy,\bfz;\bfx^{I-1},\bfy^{J-1}|\bfu,\bfs)>\eta'$.
\end{enumerate}

Define error events
\begin{align*}
    \cG_\atyp \coloneqq & \curbrkt{ D\paren{P_{\bfu,\bfx,\bfy,\bfs,\bfz}\|P_\bfu P_{\bfx|\bfu}P_{\bfy|\bfu}P_\bfs W_{\bfz|\bfx,\bfy,\bfs}}>\eta };\\
    \cG  \coloneqq& \curbrkt{I(\bfx,\bfy;\bfs|\bfu)\ge\eps};
\end{align*}
For $\cB\in\sB_L$ with $I>1,J>1$, let
\begin{align}
    \cD_{\eta,\eta'}(\cB)\coloneqq&\curbrkt{ \tau_{\bfu,\bfx,\bfy,\bfx^{I-1},\bfy^{J-1},\bfs,\bfz}\colon \begin{array}{rl}
        \tau_{\bfu,\bfx,\bfy,\bfs,\bfz}\in&\cP_\eta, \\
        \forall(i,j)\in\cE,\;\exists\bfs_{i,j},\;\expt{\bfs_{i,j}}\le\Lambda,\;\tau_{\bfu,\bfx_i,\bfy_j,\bfs_{i,j},\bfz}\in&\cP_\eta,\\
        I(\bfx,\bfy,\bfz;\bfx^{I-1},\bfy^{J-1}|\bfu,\bfs)>&\eta'
    \end{array} }.\notag
\end{align}
Also define
\begin{align}
    \cD_{\eta,\eta'}'\coloneqq&\curbrkt{ \tau_{\bfu,\bfx,\bfy,\bfx^{L-1},\bfs,\bfz}\colon \begin{array}{rl}
        \tau_{\bfu,\bfx,\bfy,\bfs,\bfz}\in&\cP_\eta, \\
        \forall i\in[L],\;\exists\bfs_{i},\;\expt{\bfs_{i}}\le\Lambda,\;\tau_{\bfu,\bfx_i,\bfy,\bfs_{i},\bfz}\in&\cP_\eta,\\
        I(\bfx,\bfy,\bfz;\bfx^{I-1}|\bfu,\bfs)>&\eta'
    \end{array} },\notag\\
    \cD_{\eta,\eta'}''\coloneqq&\curbrkt{ \tau_{\bfu,\bfx,\bfy,\bfy^{L-1},\bfs,\bfz}\colon \begin{array}{rl}
        \tau_{\bfu,\bfx,\bfy,\bfs,\bfz}\in&\cP_\eta, \\
        \forall j\in[L],\;\exists\bfs_{j},\;\expt{\bfs_{j}}\le\Lambda,\;\tau_{\bfu,\bfx,\bfy_j,\bfs_{j},\bfz}\in&\cP_\eta,\\
        I(\bfx,\bfy,\bfz;\bfy^{L-1}|\bfu,\bfs)>&\eta'
    \end{array} }.\notag
\end{align}
We write $\cD(\cB),\cD',\cD''$ for short. 

Let 
\begin{align}
    \cD=\cD_{\eta,\eta'}\coloneqq&\bigcup_{\substack{\cB\in\sB_L\\I>1,J>1}}\cD_{\eta,\eta'}(\cB),\notag\\
    e(m,w,\vs)\coloneqq&\sum_{\vz\colon (m,w)\notin\psi(\vz)}W(\vz|\vx_m,\vy_w,\vs).\notag
\end{align}
The average probability of error under the action of  a given $\vs$ is 
\begin{align}
    P_{e,\avg}(\vs) = &\frac{1}{MW}\sum_{(m,w)}e(m,w,\vs)\notag\\
    \le&\frac{1}{MW}\sum_{(m,w)}e(m,w,\vs)\one_\cG+\frac{1}{MW}\sum_{(m,w)\colon \cG^c\holds}e(m,w,\vs)\notag\\
    \le&\frac{1}{MW}\card{\curbrkt{(m,w)\colon I(\bfu,\bfx,\bfy;\bfs)\ge\eps}}\label{eqn:first_term}\\
    & +\frac{1}{MW}\sum_{(m,w)\colon\cG^c \holds}\sum_{\vz\colon\tau_{\vu,\vx_m,\vy_w,\vs,\vz}\notin\cP_\eta}W(\vz|\vx_m,\vy_w,\vs)\label{eqn:second_term} \\
    &+\frac{1}{MW}\sum_{(m,w)\colon\cG^c \holds}\sum_{\tau\in\cD}e_\tau(m,w,\vs)\label{eqn:third_term}\\
    &+\frac{1}{MW}\sum_{(m,w)\colon\cG^c \holds}\sum_{\tau\in\cD'}e_\tau'(m,w,\vs)\label{eqn:third_term_usr1}\\
    &+\frac{1}{MW}\sum_{(m,w)\colon\cG^c \holds}\sum_{\tau\in\cD''}e_\tau''(m,w,\vs),\label{eqn:third_term_usr2}
\end{align}
where
\begin{align}
    e_\tau(m,w,\vs)\coloneqq&\sum_{\substack{\vz\colon\tau_{\vu,\vx_m,\vy_w,\vx^{\cS\setminus m},\vy^{\cT\setminus w},\vs,\vz}=\tau\\\text{for some }\cL\in\sL_L(m,w)}}W(\vz|\vx_m,\vy_w,\vs),\notag\\
    e_\tau'(m,w,\vs) \coloneqq &\sum_{\substack{\vz\colon\tau_{\vu,\vx_m,\vy_w,\vx^{\cS\setminus m},\vs,\vz} = \tau\\\text{for some }\cS\in\binom{[M]}{L},\;\cS\ni m}}W(\vz|\vx_m,\vy_w,\vs),\notag\\
    e_\tau''(m,w,\vs)\coloneqq&\sum_{\substack{\vz\colon\tau_{\vu,\vx_m,\vy_w,\vy^{\cT\setminus w},\vs,\vz} = \tau\\\text{for some }\cT\in\binom{[W]}{L},\;\cT\ni w}}W(\vz|\vx_m,\vy_w,\vs).\notag
\end{align}
Here $\cL$ has the same underlying graph structure as $\cB_\tau$. 

The first term \eqref{eqn:first_term} is at most $2^{-n\eps/2}$ (up to polynomial factors) by Eqn. \eqref{eqn:cw_selection_xy_s}.

By Sanov's theorem, the second term \eqref{eqn:second_term} dot equals
\begin{align}
    & \sup_{P_{\bfu,\bfx,\bfy,\bfs,\bfz}\colon \cG_\atyp\cap\cG ^c\text{ holds}} 2^{-D\paren{P_{\bfu,\bfx,\bfy,\bfs,\bfz}\|P_{\bfu,\bfx,\bfy,\bfs,}W_{\bfz|\bfx,\bfy,\bfs}}}\le2^{-n(\eta-\eps)}\notag
\end{align}
where the inequality follows since 
\begin{align}
    D\paren{P_{\bfu,\bfx,\bfy,\bfs,\bfz}\|P_{\bfu,\bfx,\bfy,\bfs,}W_{\bfz|\bfx,\bfy,\bfs}} = &D(P_{\bfu,\bfx,\bfy,\bfs,\bfz}\|P_{\bfu}P_{\bfx|\bfu}P_{\bfy|\bfu}P_\bfs W_{\bfz|\bfx,\bfy,\bfs}) - I(\bfx,\bfy;\bfs|\bfu)\notag\\
    >&\eta-\eps.\notag
\end{align}

We now proceed bounding  \eqref{eqn:third_term}, \eqref{eqn:third_term_usr1} and \eqref{eqn:third_term_usr2} separately.

\subsubsection{Bounds on term \eqref{eqn:third_term}}
We can further bound term \eqref{eqn:third_term} as
\begin{align}
    \frac{1}{MW}\sum_{(m,w)\colon\cG^c \holds}\sum_{\tau\in\cD}e_\tau(m,w,\vs)\le&\frac{1}{MW}\sum_{(m,w)\colon\cG^c \holds}\sum_{\substack{\cB\in\sB_L\\I>1,J>1}}\sum_{\tau\in\cD(\cB)}e_\tau(m,w,\vs).\label{eqn:third_term_sumrate}
\end{align}
Fix any bipartite graph $\cB\in\sB_L$ with $I>1,J>1$. Given $\tau\in\cD(\cB)$ with underlying graph structure $\cB$, let
\begin{align}
    \cH_1\coloneqq&\curbrkt{R_1+R_2<\min\curbrkt{I(\bfx_i,\bfy_j;\bfs|\bfu)\colon(i,j)\in\cE(\cB)}}.\notag
\end{align}
Then
\begin{align}
    \frac{1}{MW}\sum_{\tau\in\cD(\cB)\colon\cG^c \holds}\sum_{(m,w)}e_\tau(m,w,\vs)
    =&\frac{1}{MW}\sum_{\tau\in\cD(\cB)\colon\cG^c \holds}\sum_{(m,w)}e_\tau(m,w,\vs)\one_{\cH_1}\label{eqn:third_a}\\
    &+\frac{1}{MW}\sum_{\tau\in\cD(\cB)\colon\cG^c \holds}\sum_{(m,w)}e_\tau(m,w,\vs)\one_{\cH_1^c}.\label{eqn:third-b}
\end{align}
We will bound term \eqref{eqn:third_a} and \eqref{eqn:third-b} separately and hence obtain a bound on term \eqref{eqn:third_term_sumrate}.

Let
\begin{align}
    \cH_2\coloneqq&\curbrkt{I(\bfx,\bfy;\bfx^{I-1},\bfy^{J-1},\bfs|\bfu)\ge\eps}.\notag
\end{align}
Then the term \eqref{eqn:third_a}  can be decomposed as
\begin{align}
    &\frac{1}{MW}\sum_{\tau\in\cD(\cB)\colon\cG^c\cap\cH_1\cap\cH_2 \holds}\sum_{(m,w)}e_\tau(m,w,\vs)\label{eqn:third-a-i}\\
    &+\frac{1}{MW}\sum_{\tau\in\cD(\cB)\colon\colon\cG^c \cap\cH_1\cap\cH_2^c\holds}\sum_{(m,w)}e_\tau(m,w,\vs)\label{eqn:third-a-ii}.
\end{align}
The term \eqref{eqn:third-a-i} is at most $2^{-n\eps/2}$ (up to polynomial factors) by Eqn. \eqref{eqn:cw_selection_xy_slistxy}. The term \eqref{eqn:third-a-ii} is at most
\begin{align}
    &\frac{1}{MW}\sum_{\tau\in\cD(\cB)\colon\cG^c\cap\cH_1\cap\cH_2^c\holds}\sum_{(m,w)}e_\tau(m,w,\vs)\notag\\
    \le&\frac{1}{MW}\sum_{\tau\in\cD(\cB)\colon\cG^c\cap\cH_1\cap\cH_2^c\holds}\sum_{(m,w)}\sum_{\substack{\cL\in\sL_L(m,w)\\\tau_{\vu,\vx_m,\vy_w,\vx^{\cS\setminus m},\vy^{\cT\setminus w},\vs}=[\tau]_{\bfu,\bfx,\bfy,\bfx^{I-1},\bfy^{J-1},\bfs}}}\sum_{\vz\colon\tau_{\vu,\vx_m,\vy_w,\vx^{\cS\setminus m},\vy^{\cT\setminus w},\vs,\vz}=\tau}W(\vz|\vx_m,\vy_w,\vs)\notag
\end{align}
The inner sum is at most
\begin{align}
    \sum_{\vz\colon\tau_{\vu,\vx_m,\vy_w,\vx^{\cS\setminus m},\vy^{\cT\setminus w},\vs,\vz}=\tau}W(\vz|\vx_m,\vy_w,\vs)&\le 2^{-nI\paren{\bfz;\bfx^{I-1},\bfy^{J-1}|\bfu,\bfx,\bfy,\bfs} }.\notag
\end{align}
Note that
\begin{align}
    I\paren{\bfz;\bfx^{I-1},\bfy^{J-1}|\bfu,\bfx,\bfy,\bfs} = & I\paren{\bfx,\bfy,\bfz;\bfx^{I-1},\bfy^{J-1}|\bfu,\bfs} - I\paren{\bfx,\bfy;\bfx^{I-1},\bfy^{J-1}|\bfu,\bfs}\notag\\
    >&\eta-I\paren{\bfx,\bfy;\bfx^{I-1},\bfy^{J-1},\bfs|\bfu}\label{eqn:by_tau_in_d}\\
    >&\eta-\eps.\label{eqn:by_h2c}
\end{align}
Eqn. \eqref{eqn:by_tau_in_d} is by $\tau\in\cD(\cB)$ and Eqn. \eqref{eqn:by_h2c} is by $\cH_2^c$.
Combining it with  \eqref{eqn:number_of_bipgh}, we have that term \eqref{eqn:third-a-ii} is at most
$
    2^{-n(\eta-2\eps)}
$. Hence \eqref{eqn:third_a} is at most the sum of Eqn. \eqref{eqn:third-a-i} and Eqn. \eqref{eqn:third-a-ii} which is in turn at most  (up to polynomial factors)
\begin{align}
    2^{-n\eps/2} + 2^{-n(\eta-2\eps)}\le&2^{-n\eps/2} +2^{-n\eps/2} = 2\cdot2^{-n\eps/2}.
\end{align}

We now bound the term \eqref{eqn:third-b}. Let us fix any $(i,j)\in\cE(\cB)$ such that $R_1+R_2\ge I(\bfx_i,\bfy_j;\bfs|\bfu)$. Let
\begin{align}
    \cH_3\coloneqq&\curbrkt{I(\bfx,\bfy;\bfx_i,\bfy_j,\bfs|\bfu) - \sqrbrkt{R_1+R_2-I(\bfx_i,\bfy_j;\bfs|\bfu)}^+\ge\eps}.\notag
\end{align}
Then
\begin{align}
    e_\tau(m,w,\vs)=&\sum_{\substack{\vz\colon\tau_{\vu,\vx_m,\vy_w,\vx^{\cS\setminus m},\vy^{\cT\setminus w},\vs,\vz}=\tau\\\text{for some }\cL\in\sL_L(m,w)}}W(\vz|\vx_m,\vy_w,\vs)\notag\\
    \le&\sum_{\substack{\vz\colon\tau_{\vu,\vx_m,\vy_w,\vx_{m'},\vy_{w'},\vs,\vz} = [\tau]_{\bfu,\bfx,\bfy,\bfx_i,\bfy_j,\bfs,\bfz}\\\text{for some }m'\ne m,w'\ne w}}W(\vz|\vx_m,\vy_w,\vs).\label{eqn:e_tau_prime}
\end{align}
Let $\wt e_\tau(m,w,\vs)$ denote the RHS \eqref{eqn:e_tau_prime}. Then \eqref{eqn:third-b} can be upper bounded as follows.
\begin{align}
    &\frac{1}{MW}\sum_{\tau\in\cD(\cB)\colon\cG^c\cap\cH_1^c\cap\cH_3\holds}\sum_{(m,w)}\wt e_\tau(m,w,\vs)\label{eqn:third_b_i}\\
    &+\frac{1}{MW}\sum_{\tau\in\cD(\cB)\colon\cG^c\cap\cH_1^c\cap\cH_3^c\holds}\sum_{(m,w)}\wt e_\tau(m,w,\vs).\label{eqn:third_b_ii}
\end{align}

By Eqn. \eqref{eqn:cw_selection_takepositivepart}, Eqn. \eqref{eqn:third_b_i} is at most $2^{-n\eps/2}$ (up to polynomial factors). 

Assume also that $P_{\bfx_i} = P_{\bfx},P_{\bfy_j} = P_\bfy$. Term \eqref{eqn:third_b_ii} is at most
\begin{align}
    &\frac{1}{MW}\sum_{\tau\in\cD(\cB)\colon\cG^c\cap\cH_1^c\cap\cH_3^c\holds}\sum_{(m,w)}\sum_{\substack{m'\ne m,w'\ne w\\ \tau_{\vu,\vx_m,\vy_w,\vx_{m'},\vy_{w'},\vs} = [\tau]_{\bfu,\bfx,\bfy,\bfx_i,\bfy_j,\bfs}}}\sum_{\vz\colon\tau_{\vu,\vx_{m},\vy_w,\vx_{m'},\vy_{w'},\vs,\vz} = \tau}W(\vz|\vx_m,\vy_w,\vs)\notag\\
    \le&2^{n\paren{\sqrbrkt{R_1+R_2-I\paren{\bfx_i,\bfy_j;\bfx,\bfy,\bfs|\bfu}}^++\eps}}2^{-nI\paren{\bfz;\bfx_i,\bfy_j|\bfu,\bfx,\bfy,\bfs}}\label{eqn:by_cw_selection}\\
    =&2^{-n\paren{I\paren{\bfz;\bfx_i,\bfy_j|\bfu,\bfx,\bfy,\bfs} - \sqrbrkt{R_1+R_2-I\paren{\bfx_i,\bfy_j;\bfx,\bfy,\bfs|\bfu}}^+-\eps}}.\label{eqn:to_be_continued}
\end{align}
The Eqn. \eqref{eqn:by_cw_selection} is by Eqn. \eqref{eqn:cw_selection_no_mprimewprime}.
Note that $\cH_1^c\cap\cH_3^c$ implies 
\begin{align}
     R_1+R_2>&I(\bfx,\bfy;\bfx_i,\bfy_j,\bfs|\bfu)+I(\bfx_i,\bfy_j;\bfs|\bfu)-\eps\notag\\
     >&I(\bfx,\bfy;\bfx_i,\bfy_j|\bfu,\bfs)+I(\bfx_i,\bfy_j;\bfs|\bfu)-\eps\notag\\
     =&I(\bfx_i,\bfy_j;\bfx,\bfy,\bfs|\bfu)-\eps.\notag
\end{align}
Therefore
\begin{align}
     \sqrbrkt{R_1+R_2-I(\bfx_i,\bfy_j;\bfx,\bfy,\bfs|\bfu)}^+\le&R_1+R_2-I(\bfx_i,\bfy_j;\bfx,\bfy,\bfs|\bfu)+\eps.\notag
\end{align}
Continuing with Eqn. \eqref{eqn:to_be_continued}, the term \eqref{eqn:third_b_ii} is  at most
\begin{align}
     &2^{-n\paren{I(\bfz;\bfx_i,\bfy_j|\bfu,\bfx,\bfy,\bfs)-(R_1+R_2)+I(\bfx_i,\bfy_j;\bfx,\bfy,\bfs|\bfu)-2\eps}}\notag\\
     =&2^{-n\paren{I(\bfx,\bfy,\bfs,\bfz;\bfx_i,\bfy_j|\bfu)-(R_1+R_2)-2\eps}}\notag\\
     \le&2^{-n\paren{I(\bfz;\bfx_i,\bfy_j|\bfu)-(R_1+R_2)-2\eps}}\notag\\
     \le&2^{-n((I(\bfz;\bfx,\bfy|\bfu) - \delta/3) - (I(\bfz;\bfx,\bfy|\bfu) - 5\delta/6)-2\eps)}\label{eqn:by_choice_of_eta_sumrate}\\
     =&2^{-n(\delta/2-2\eps)}\notag\\
     \le&2^{-n\eps/2},\label{eqn:by_choice_of_eps_usr12}
\end{align}
where Eqn. \eqref{eqn:by_choice_of_eta_sumrate} is due to the choice of $\eta$ and $R_1,R_2$, and Eqn. \eqref{eqn:by_choice_of_eps_usr12} is due to the choice of $\eps$.

Finally the term \eqref{eqn:third-b} is bounded by the sum of term \eqref{eqn:third_b_i} and term \eqref{eqn:third_b_ii} which is in turn at most $2^{-n\eps/2}+2^{-n\eps/2} = 2\cdot2^{-n\eps/2}$.
 
\subsubsection{Bounds on term \eqref{eqn:third_term_usr1}} Term \eqref{eqn:third_term_usr1} can be bounded in a similar manner to term \eqref{eqn:third_term_sumrate}. We provide the calculations for completeness. Term \eqref{eqn:third_term_usr2} is symmetric to term \eqref{eqn:third_term_usr1} and we omit the details.

Given $\tau\in\cD'$, let
\begin{align}
    \cH_1'\coloneqq&\curbrkt{R_1<\min\curbrkt{I(\bfx_i;\bfs|\bfu)\colon i\in[L]}}.\notag
\end{align}

Then
\begin{align}
    \frac{1}{MW}\sum_{\tau\in\cD'\colon\cG^c \holds}\sum_{(m,w)}e_\tau'(m,w,\vs)
    =&\frac{1}{MW}\sum_{\tau\in\cD'\colon\cG^c \holds}\sum_{(m,w)}e_\tau'(m,w,\vs)\one_{\cH_1'}\label{eqn:third-a-usr1}\\
    &+\frac{1}{MW}\sum_{\tau\in\cD'\colon\cG^c \holds}\sum_{(m,w)}e_\tau'(m,w,\vs)\one_{\cH_1'^c}.\label{eqn:third-b-usr1}
\end{align}
We will bound term \eqref{eqn:third-a-usr1} and \eqref{eqn:third-b-usr1} separately and hence obtain a bound on term \eqref{eqn:third_term_usr1}.

Let
\begin{align}
    \cH_2'\coloneqq&\curbrkt{I(\bfx,\bfy;\bfx^{L-1},\bfs|\bfu)\ge\eps}.\notag
\end{align}
Then the term \eqref{eqn:third-a-usr1}  can be decomposed as
\begin{align}
    &\frac{1}{MW}\sum_{\tau\in\cD'\colon\cG^c\cap\cH_1'\cap\cH_2' \holds}\sum_{(m,w)}e_\tau'(m,w,\vs)\label{eqn:third-a-i-usr1}\\
    &+\frac{1}{MW}\sum_{\tau\in\cD'\colon\colon\cG^c \cap\cH_1'\cap\cH_2'^c\holds}\sum_{(m,w)}e_\tau'(m,w,\vs)\label{eqn:third-a-ii-usr1}.
\end{align}
The term \eqref{eqn:third-a-i-usr1} is at most $2^{-n\eps/2}$ (up to polynomial factors) by Eqn. \eqref{eqn:cw_selection_xy_slistx}. The term \eqref{eqn:third-a-ii-usr1} is at most
\begin{align}
    &\frac{1}{MW}\sum_{\tau\in\cD'\colon\cG^c\cap\cH_1'\cap\cH_2'^c\holds}\sum_{(m,w)}e_\tau'(m,w,\vs)\notag\\
    \le&\frac{1}{MW}\sum_{\tau\in\cD'\colon\cG^c\cap\cH_1'\cap\cH_2'^c\holds}\sum_{(m,w)}\sum_{\substack{\cS\in\binom{[M]}{L},\;\cS\ni m\\\tau_{\vu,\vx_m,\vy_w,\vx^{\cS\setminus m},\vs}=[\tau]_{\bfu,\bfx,\bfy,\bfx^{L-1},\bfs}}}\sum_{\vz\colon\tau_{\vu,\vx_m,\vy_w,\vx^{\cS\setminus m},\vs,\vz}=\tau}W(\vz|\vx_m,\vy_w,\vs)\notag
\end{align}
The inner sum is at most
\begin{align}
    \sum_{\vz\colon\tau_{\vu,\vx_m,\vy_w,\vx^{\cS\setminus m},\vs,\vz}=\tau}W(\vz|\vx_m,\vy_w,\vs)&\le 2^{-nI\paren{\bfz;\bfx^{L-1}|\bfu,\bfx,\bfy,\bfs} }.\notag
\end{align}
Note that
\begin{align}
    I\paren{\bfz;\bfx^{L-1}|\bfu,\bfx,\bfy,\bfs} = & I\paren{\bfx,\bfy,\bfz;\bfx^{L-1}|\bfu,\bfs} - I\paren{\bfx,\bfy;\bfx^{L-1}|\bfu,\bfs}\notag\\
    >&\eta-I\paren{\bfx,\bfy;\bfx^{L-1},\bfs|\bfu}\label{eqn:by_tau_in_d_usr1}\\
    >&\eta-\eps.\label{eqn:by_h2c_usr1}
\end{align}
Eqn. \eqref{eqn:by_tau_in_d_usr1} is by $\tau\in\cD'$ and Eqn. \eqref{eqn:by_h2c_usr1} is by $\cH_2'^c$.
Combining it with  \eqref{eqn:number_of_s}, we have that term \eqref{eqn:third-a-ii-usr1} is at most
$
    2^{-n(\eta-2\eps)}
$. Hence \eqref{eqn:third-a-usr1} is at most the sum of Eqn. \eqref{eqn:third-a-i-usr1} and Eqn. \eqref{eqn:third-a-ii-usr1}, which is in turn at most $2^{-n\eps/2} + 2^{-n(\eta-2\eps)}$ (up to polynomial factors). 

We now bound the term \eqref{eqn:third-b-usr1}. Let us fix any $i\in[L]$ such that $R_1\ge I(\bfx_i;\bfs|\bfu)$. Let
\begin{align}
    \cH_3'\coloneqq&\curbrkt{I(\bfx,\bfy;\bfx_i,\bfs|\bfu) - \sqrbrkt{R_1-I(\bfx_i;\bfs|\bfu)}^+\ge\eps}.\notag
\end{align}
Then
\begin{align}
    e_\tau'(m,w,\vs)=&\sum_{\substack{\vz\colon\tau_{\vu,\vx_m,\vy_w,\vx^{\cS\setminus m},\vs,\vz}=\tau\\\text{for some }\cS\in\binom{[M]}{L},\;\cS\ni m}}W(\vz|\vx_m,\vy_w,\vs)\notag\\
    \le&\sum_{\substack{\vz\colon\tau_{\vu,\vx_m,\vy_w,\vx_{m'},\vs,\vz} = [\tau]_{\bfu,\bfx,\bfy,\bfx_i,\bfs,\bfz}\\\text{for some }m'\ne m}}W(\vz|\vx_m,\vy_w,\vs).\label{eqn:e_tau_prime_usr1}
\end{align}
Let $\wt e_\tau'(m,w,\vs)$ denote the RHS \eqref{eqn:e_tau_prime_usr1}. Then \eqref{eqn:third-b-usr1} can be upper bounded as follows.
\begin{align}
    &\frac{1}{MW}\sum_{\tau\in\cD'\colon\cG^c\cap\cH_1'^c\cap\cH_3'\holds}\sum_{(m,w)}\wt e_\tau'(m,w,\vs)\label{eqn:third-b-i-usr1}\\
    &+\frac{1}{MW}\sum_{\tau\in\cD'\colon\cG^c\cap\cH_1'^c\cap\cH_3'^c\holds}\sum_{(m,w)}\wt e_\tau'(m,w,\vs).\label{eqn:third-b-ii-usr1}
\end{align}

By Eqn. \eqref{eqn:cw_selection_takepositivepart_usr1}, Eqn. \eqref{eqn:third-b-i-usr1} is at most $2^{-n\eps/2}$ (up to polynomial factors). 

Assume also that $P_{\bfx_i} = P_{\bfx},P_{\bfy_j} = P_\bfy$. Term \eqref{eqn:third-b-ii-usr1} is at most
\begin{align}
    &\frac{1}{MW}\sum_{\tau\in\cD'\colon\cG^c\cap\cH_1'^c\cap\cH_3'^c\holds}\sum_{(m,w)}\sum_{\substack{m'\ne m\\ \tau_{\vu,\vx_m,\vy_w,\vx_{m'},\vs} = [\tau]_{\bfu,\bfx,\bfy,\bfx_i,\bfs}}}\sum_{\vz\colon\tau_{\vu,\vx_{m},\vy_w,\vx_{m'},\vs,\vz} = \tau}W(\vz|\vx_m,\vy_w,\vs)\notag\\
    \le&2^{n\paren{\sqrbrkt{R_1-I\paren{\bfx_i;\bfx,\bfy,\bfs|\bfu}}^++\eps}}2^{-nI\paren{\bfz;\bfx_i|\bfu,\bfx,\bfy,\bfs}}\label{eqn:by_cw_selection_usr1}\\
    =&2^{-n\paren{I\paren{\bfz;\bfx_i|\bfu,\bfx,\bfy,\bfs} - \sqrbrkt{R_1-I\paren{\bfx_i;\bfx,\bfy,\bfs|\bfu}}^+-\eps}}.\label{eqn:to_be_continued_usr1}
\end{align}
The Eqn. \eqref{eqn:by_cw_selection_usr1} is by Eqn. \eqref{eqn:cw_selection_no_mprime}.
Note that $\cH_1'^c\cap\cH_3'^c$ implies 
 \begin{align}
     R_1>&I(\bfx,\bfy;\bfx_i,\bfs|\bfu)+I(\bfx_i;\bfs|\bfu)-\eps\notag\\
     >&I(\bfx,\bfy;\bfx_i|\bfu,\bfs)+I(\bfx_i;\bfs|\bfu)-\eps\notag\\
     =&I(\bfx_i;\bfx,\bfy,\bfs|\bfu)-\eps.\notag
 \end{align}
 Therefore
 \begin{align}
     \sqrbrkt{R_1-I(\bfx_i;\bfx,\bfy,\bfs|\bfu)}^+\le&R_1-I(\bfx_i;\bfx,\bfy,\bfs|\bfu)+\eps.\notag
 \end{align}
Continuing with Eqn. \eqref{eqn:to_be_continued_usr1}, the term \eqref{eqn:third-b-ii-usr1} is  at most
\begin{align}
    &2^{-n\paren{I(\bfz;\bfx_i|\bfu,\bfx,\bfy,\bfs)-R_1+I(\bfx_i;\bfx,\bfy,\bfs|\bfu)-2\eps}}\notag\\
    =&2^{-n\paren{I(\bfx,\bfy,\bfs,\bfz;\bfx_i|\bfu)-R_1-2\eps}}\notag\\
    \le&2^{-n\paren{I(\bfz;\bfx_i|\bfu)-R_1-2\eps}}\notag\\
    \le&2^{-n((I(\bfz;\bfx|\bfu) - \delta/3) - (I(\bfz;\bfx|\bfu) - 5\delta/6)-2\eps)}\\
     =&2^{-n(\delta/2-2\eps)}\notag\\
     \le&2^{-n\eps/2}.
\end{align}

Finally the term \eqref{eqn:third-b-usr1} is bounded by the sum of term \eqref{eqn:third-b-i-usr1} and term \eqref{eqn:third-b-ii-usr1} which is in turn at most $2^{-n\eps/2}+2^{-n\eps/2} = 2\cdot2^{-n\eps/2}$.


\subsection{Converse} Assume $L\le L_s$. Without loss of generality, it suffices to set $L=L_s$. We want to show that any code has strictly positive average probability of error.

By non-symmetrizability, there exist a bipartite graph $\cB = (\cI,\cJ,\cE)\in\sB_{L}$, distributions $P_\bfu,P_{\bfx,\bfy|\bfu}$, and a collection of symmetrizing distributions $\curbrkt{Q_{\bfs|\bfx^{I-1},\bfy^{J-1}}^{(u)}}_u\subset\cQ_\symm(\cB)$ such that for any $P_{\bfx^{I-1},\bfy^{J-1}|\bfu}$ with  $\sqrbrkt{P_{\bfx^{I-1},\bfy^{J-1}|\bfu}}_{\bfx_i,\bfy_j|\bfu} = P_{\bfx,\bfy|\bfu}$, 
\begin{align}
    \overline{\Lambda}_s(\cB, P_{\bfu,\bfx^{I-1},\bfy^{J-1}})\coloneqq&\sum_{u,x^{I-1},y^{J-1},s}P(u)P(x^{I-1},y^{J-1}|u)Q^{(u)}(s|x^{I-1},y^{J-1})g(s)\notag\\
    <&\Lambda,\notag
\end{align}
where $P_{\bfu,\bfx^{I-1},\bfy^{J-1}} = P_\bfu P_{\bfx^{I-1},\bfy^{J-1}|\bfu}$. Let us assume $\overline{\Lambda}_s(\cB,P_{\bfu,\bfx^{I-1},\bfy^{J-1}}) = \Lambda-\delta$ for some constant $\delta>0$.

Consider the following jamming strategy of James. Fix any $\vu$ of type $P_\bfu$. Sample $\cS\sim\binom{[M]}{I-1}$ and $\cT\sim\binom{[W]}{J-1}$ independently, uniformly at random. Generate $\vbfs$ according to the distribution 
\begin{align}
    U_{\vbfs|\vx^\cS,\vy^\cT}(\vs|\vu,\vx^{\cS},\vy^\cT) = &\prod_u\prod_{\substack{i\in[n]\\\vu(i)=u}}Q^{(u)}(\vs(i)|\vx^\cS(i),\vy^\cT(i)).\notag
\end{align}
If $g(\vbfs)>\Lambda$, then transmit a fixed vector $\wt\vs =[s_0,\cdots,s_0]$ where $s_0=\argmax{s}g(s)$. Therefore $g(\wt\vs) = g_{\min}$ where $g_{\min}=\min_sg(s)=g(s_0)$. The jamming vector transmitted by James satisfies the state constraint with probability 1.

Given $\cS,\cT$, the expected cost of $\vbfs$  is
\begin{align}
    \exptover{\vbfs\sim U}{g(\vbfs)|\cS,\cT}=&\frac{1}{n}\sum_{i=1}^n\expt{g(\vbfs(i)|\cS,\cT)}\notag\\
    =&\frac{1}{n}\sum_{i}\sum_s U(s|\vx^{\cS}(i),\vy^{\cT}(i))g(s)\notag\\
    =&\frac{1}{n}\sum_s\sum_{u,x^\cS,y^\cT}\sum_{\substack{i\in[n]\\\vu(i)=u\\\vx^\cS(i)=x^\cS\\\vy^\cT(i)=y^\cT}}g(s)Q^{(u)}(s|x^\cS,y^\cT)\notag\\
    =&\sum_{u,x^{\cS},y^\cT,s}Q^{(u)}(s|x^\cS,y^\cT)g(s)\frac{\card{\curbrkt{i\in[n]\colon \begin{array}{l}
         \vu(i)=u,\\
         \vx^\cS(i) = x^\cS,\\
         \vy^\cT(i) = y^\cT  
    \end{array} }}}{n}\notag\\
    =&\sum_{u,x^{I-1},y^{J-1},s}Q^{(u)}(s|x^\cS,y^\cT)g(s)P(u,x^{I-1},y^{J-1})\notag\\
    =&\overline{\Lambda}_s(\cB,P_{\bfu,\bfx^{I-1},\bfy^{J-1}})\notag\\
    =&\Lambda-\delta.\notag
\end{align}
We can also bound the variance (conditioned on $\cS,\cT$) of the cost of James' jamming vector.
\begin{align}
    \varover{\vbfs\sim U}{g(\vbfs)|\cS,\cT} = &\var{\left. \frac{1}{n}\sum_ig(\vbfs(i))\right|\cS,\cT }\notag\\
    =&\frac{1}{n^2}\sum_i\var{g(\vbfs(i))|\cS,\cT}\label{eqn:s_indep_coord}\\
    \le&\frac{1}{n^2}\sum_i\expt{g^2(\vbfs(i))|\cS,\cT}\notag\\
    \le&\frac{1}{n^2}ng_*^2\notag\\
    =&{g_*^2}/{n},\notag
\end{align}
where $g_*\coloneqq\max_sg(s)$. Eqn. \eqref{eqn:s_indep_coord} follows since each coordinate of $\vbfs$ is independent. Now,
\begin{align}
    \prob{g(\vbfs)>\Lambda|\cS,\cT}\le&\prob{|g(\vbfs) - \expt{g(\vbfs)}|>\Lambda - \expt{g(\vbfs)}|\cS,\cT}\label{eqn:to_apply_cheb}
\end{align}
Note that
\begin{align}
    \Lambda-\expt{g(\vbfs)|\cS,\cT} = &\overline{\Lambda}_s(\cB,P_{\bfu,\bfx^{I-1},\bfy^{J-1}})-\Lambda\notag\\
    =&\delta>0.\notag
\end{align}
Hence by Chebyshev's inequality, the probability \eqref{eqn:to_apply_cheb} is at most
\begin{align}
    \frac{\var{g(\vbfs)|\cS,\cT}}{\paren{\Lambda - \expt{g(\vbfs)|\cS,\cT}}^2}
    \le&\frac{g_*^2/n}{\delta^2}
    =\frac{g_*^2}{n\delta^2}.\notag
\end{align}

Note that for any $(m,w)\notin\cS\times\cT$ and any $(m',w')\in\cS\times\cT$, we have
\begin{align}
    \exptover{\vbfs\sim U}{W^\tn(\vz|\vx_m,\vy_w,\vbfs)|\cS,\cT} 
    =& \prod_i\expt{\left.W(\vz(i)|\vx_m(i),\vy_w(i),\vbfs(i))\right|\cS,\cT}\label{eqn:first_eq}\\
    =&\prod_i\sum_sW(\vz(i)|\vx_m(i),\vy_w(i),s)U(s|\vx^{\cS}(i),\vy^\cT(i))\notag\\
    =&\prod_u\prod_{\substack{i\in[n]\\\vu(i)=u}}\sum_sW(\vz(i)|\vx_m(i),\vy_w(i),s)Q^{(u)}(s|\vx^\cS(i),\vy^\cT(i))\notag\\
    =&\prod_u\prod_{\substack{i\in[n]\\\vu(i)=u}}\sum_sW(\vz(i)|\vx_{m'}(i),\vy_{w'}(i),s)Q^{(u)}\paren{s\left|\vx^{(\cS\setminus m')\cup m}(i),\vy^{(\cT\setminus w')\cup w}(i)\right.}\label{eqn:identity_by_symm}\\
    =&\expt{\left.W^\tn(\vz|\vx_{m'},\vy_{w'},\vbfs)\right|(\cS\setminus m')\cup m,(\cT\setminus w')\cup w},\label{eqn:roll_back}
\end{align}
where Eqn. \eqref{eqn:identity_by_symm} follows since  $\curbrkt{Q^{(u)}}_u$ are symmetrizing distributions. Eqn. \eqref{eqn:roll_back} follows by rolling equalities \eqref{eqn:first_eq} to \eqref{eqn:identity_by_symm} back.
We thus have that, for any $\cL = (\cS',\cT',\cF')\in\sL_{L+1}(m,w)$ with the same underlying graph structure as $\cB$ and for some $(m_0,w_0)\in\cF'$,
\begin{align}
    \sum_{(m,w)\in\cF'}\exptover{\vbfs}{e(m,w,\vbfs)|\cS'\setminus m,\cT'\setminus w} = &\sum_{(m,w)\in\cF'}\paren{1 - \sum_{\vz\colon(m,w)\in\psi(\vz)}\expt{W(\vz|\vx_m,\vy_w,\vbfs)|\cS'\setminus m,\cT'\setminus w}}\notag\\
    =&(L+1)-\sum_{\substack{(m,w)\in\cF'\\ (m,w)\in\psi(\vz)}}\sum_{\vz}\expt{W(\vz|\vx_{m_0},\vy_{w_0},\vs)|\cS'\setminus m_0,\cT'\setminus w_0}\notag\\
    \ge&(L+1) - L\label{eqn:ineq_by_dec}\\
    =&1,\label{eqn:obs_to_use}
\end{align}
where inequality \eqref{eqn:ineq_by_dec} follows since, by list decodability requirement, $|\psi(\vz)|\le L$ for any received $\vz$.
Using the above observations, the expected (over jamming strategy) average probability of error can be lower bounded as follows.
\begin{align}
    \exptover{\cS,\cT,\vbfs}{P_{e,\avg}(\vbfs)} = &\frac{1}{\binom{M}{I-1}}\frac{1}{\binom{W}{J-1}}\frac{1}{MW}\sum_{\cS,\cT}\sum_{(m,w)}\exptover{\vbfs}{e(m,w,\vbfs)|\cS,\cT}\notag\\
    \ge&\frac{1}{\binom{M}{I-1}}\frac{1}{\binom{W}{J-1}}\frac{1}{MW}\sum_{\cS',\cT'}\sum_{(m,w)\in\cF'}\expt{e(m,w,\vbfs)|\cS'\setminus m,\cT'\setminus w}\label{eqn:by_obs}\\
    \ge&\frac{1}{\binom{M}{I-1}}\frac{1}{\binom{W}{J-1}}\frac{1}{MW}\binom{M}{I}\binom{W}{J}\notag\\
    =&\paren{\frac{1}{I}-\frac{1}{M}+\frac{1}{MI}}\paren{\frac{1}{J} - \frac{1}{W}+\frac{1}{WJ}},\notag
\end{align}
where inequality \eqref{eqn:by_obs} follows since the inner summation is at least 1 by inequality \eqref{eqn:obs_to_use}. Since the above bound holds averaged over James' stochastic jamming strategy, there exists $\vs$ generated from $U$ \footnote{Note again that $\vs$ sampled from $U$ must satisfy state constraints.} such that
\begin{align}
    P_{e,\avg}\ge &P_{e,\avg}(\vs)\notag\\
    \ge&\paren{\frac{1}{I}-\frac{1}{M}+\frac{1}{MI}}\paren{\frac{1}{J} - \frac{1}{W}+\frac{1}{WJ}}.\notag
\end{align}
For any codebook pair of positive rate $R_1,R_2$, $M\xrightarrow{n\to\infty}\infty,W\xrightarrow{n\to\infty}\infty$. Noe that for a bipartite graph $\cB\in\sB_L$ with $I$ left vertices and $J$ right vertices to have no isolated vertex, $L=|\cE(\cB)|\ge\max\curbrkt{I,J}$ Hence
$P_{e,\avg}\xrightarrow{n\to\infty}\frac{1}{IJ}\ge1/L^2$.

\subsection{Recovering \cite{cai-2016-list-dec-obli-avmac}}
Our results recovers the list decoding results for \emph{unconstrained} AVMACs by Cai \cite{cai-2016-list-dec-obli-avmac}. 
By setting 
\begin{align}
\Gamma_1 \coloneqq& \max_{x\in\cX} f_1(x), \quad
\Gamma_2 \coloneqq \max_{y\in\cY} f_2(y), \notag \\
\Lambda \coloneqq& \max_{s\in\cS} g(s), \notag
\end{align}
we have, for every $ P_{\bfu,\bfx,\bfy} $,
\begin{align}
L_{s}(P_{\bfu,\bfx,\bfy}) = L_{w}(P_{\bfu,\bfx,\bfy}) = L_{\symm} \coloneqq& \max\curbrkt{ L\colon \exists \cB = (\cI,\cJ,\cE)\text{ s.t. }  |\cE| = L,\; \cQ_{\symm}(\cB)\ne\emptyset }. \notag
\end{align}
our inner and outer bounds collapse to the same region which matches Cai's characterization stated below.
If $ L>L_{\symm} $, then the $L$-list decoding capacity of unconstrained oblivious AVMAC is given by
\begin{align}
C =& \bigcup_{P_{\bfu,\bfx,\bfy} = P_{\bfu}P_{\bfx|\bfu}P_{\bfy|\bfu}} \curbrkt{(R_1,R_2)\colon \begin{array}{rl}
R_1 \le& \inf I(\bfx;\bfz|\bfy), \\
R_2\le& \inf I(\bfy;\bfz|\bfx), \\
R_1+R_2\le& \inf I(\bfx,\bfy;\bfz)
\end{array} }, \notag
\end{align}
where the infimum is taken over jamming distribution $ P_{\bfs|\bfu}\in\Delta(\cS|\cU) $ and the mutual information is evaluated w.r.t. distribution $ P_{\bfu}P_{\bfx|\bfu}P_{\bfy|\bfu}P_{\bfs|\bfu}W_{\bfz|\bfx,\bfy,\bfs} $.

If $ L\le L_{\symm} $ then the capacity region has empty interior. 

\begin{remark}
Cai's results were originally stated in terms of closure of convex hulls of multiple regions. 
Here we adopt an equivalent formulation by introducing a time-sharing variable $ \bfu $.
\end{remark}

\section{List decoding Gaussian AVMACs}\label{sec:gaussian_avmac}
\subsection{Model} Suppose user one has $M \coloneqq  L2^{nR_1}$ messages and user two has $W \coloneqq  L2^{nR_2}$ messages. They both have access to a MAC which is governed by an adversary. To transmit a message pair $(m,w)\in[M]\times[W]$ which is uniformly distributed, two users (who are not allowed to cooperate)  encode their messages to  length-$n$ real-valued codewords $\vx$ and $\vy$, respectively, subject to the input power constraints
\begin{align}
\normtwo{\vx}\le\sqrt{nP_1},\quad\normtwo{\vy}\le\sqrt{nP_2}. \notag
\end{align}
The adversary can introduce adversarial noise $\vs$ subject to the state power constraint
\begin{align}
\normtwo{\vs}\le& \sqrt{nN}\notag
\end{align}
only based on his knowledge of two users' codebooks. The channels add up $\vx$, $\vy$, $\vs$ together with a Gaussian noise $\vg$ whose components are i.i.d. Gaussians of variance $\sigma^2$. That is, the channel  outputs $\vz = \vx+\vy+\vs+\vg$. The receiver is required to estimate $(m,w)$ given the received vector $\vz$. 
See Fig. \ref{fig:list_dec_obli-gaussian-avmac} for the system diagram of list decoding for oblivious Gaussian AVMACs.
\begin{figure}[htbp]
  \centering
  \includegraphics[width=0.95\textwidth]{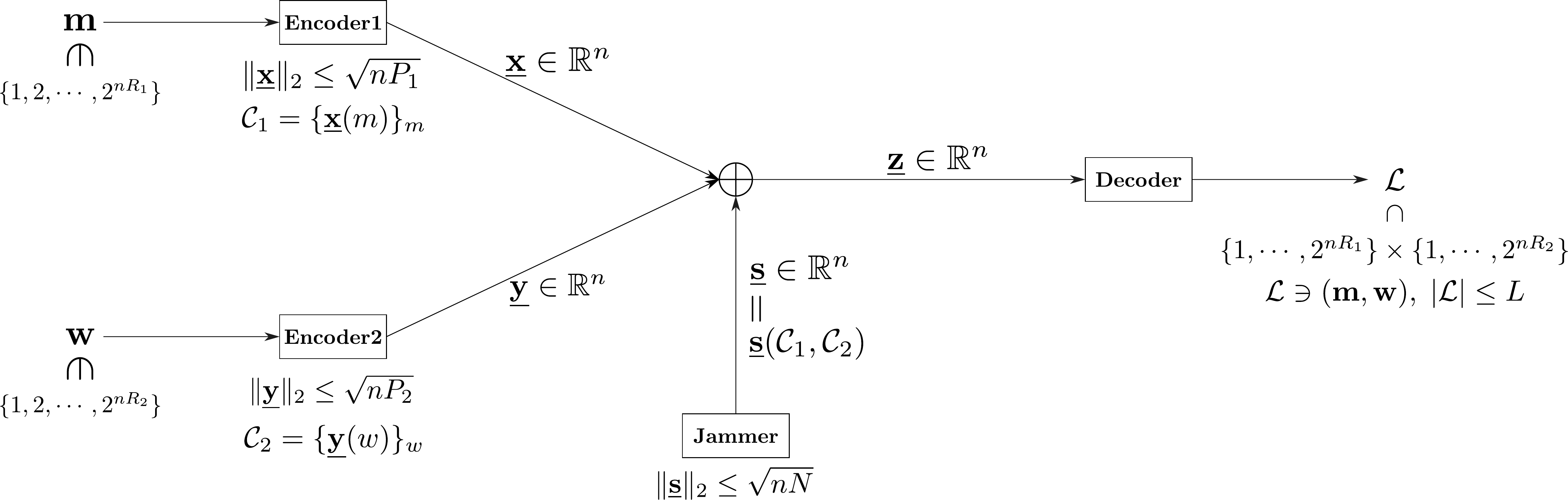}
  \caption{List decoding for oblivious Gaussian AVMACs.}
  \label{fig:list_dec_obli-gaussian-avmac}
\end{figure}

\subsection{Code design} 
The codebook $\cC_1 = \curbrkt{\vbfx_m}_{m\in[M]}$ of user one consists of $M$ i.i.d. random vectors uniformly distributed on the $(n-1)$-dimensional sphere $\cS^{n-1}(\vzero,\sqrt{nP_1})$ of radius $\sqrt{nP_1}$. The codebook $\cC_2 = \curbrkt{\vbfy_w}_{w\in[W]}$ of user two consists of $W$ i.i.d. random vectors uniformly distributed on $\cS^{n-1}(\vzero,\sqrt{nP_2})$.

Let $\phi_1$ and $\phi_2$ denote the encoding function of user one and two respectively. Let $\psi$  denote the decoding function of the receiver. 


The receiver uses a minimum distance decoder. Given the received vector $\vz$, the decoder finds codeword pairs $$(\vx_{m_1},\vy_{w_1}),\cdots,(\vx_{m_L},\vy_{w_L})$$ which are the first $L$ closest codeword pairs to $\vz$. That is
\begin{align}
\normtwo{\vz - \paren{\vx_{m_1} + \vy_{w_1}}}\le\cdots\le\normtwo{\vz - \paren{\vx_{m_L} + \vy_{w_L}}}\notag
\end{align}
and
\begin{align}
\normtwo{\vz - \paren{\vx_{m'} + \vy_{w'}}}\ge\normtwo{\vz - \paren{\vx_{m_L} + \vy_{w_L}}}\notag
\end{align}
for any other $m',w'$. The decoder then outputs $\psi(\vz) = \cL = \curbrkt{(m_1,w_1),\cdots,(m_L,w_L)}$. 

\subsection{Achievability} Along the lines of \cite{csiszar-narayan-1991-gavc}, it can be shown that  whenever $LP_1>N$ and $LP_2>N$, any rate pair $(R_1,R_2)$ satisfying
\begin{align}
R_1<&\frac{1}{2}\log\paren{1+\frac{P_1}{N+\sigma^2}},\label{eqn:gavc_r1}\\
R_2<&\frac{1}{2}\log\paren{1+\frac{P_2}{N+\sigma^2}},\label{eqn:gavc_r2}\\
R_1+R_2<&\frac{1}{2}\log\paren{1+\frac{P_1+P_2}{N+\sigma^2}}\label{eqn:gavc_r1r2}
\end{align}
can be achieved.

\subsection{Converse} 
When $ LP_1>N $ and $ LP_2>N $, the outer bound follows again from strong converse to list decoding (non-adversarial) Gaussian MACs, whose details we omit.

Our converse in the zero-rate regime pursues the geometric approach instead of reducing it to the discrete alphabet case by quantization. 
The argument is inspired by a novel bounding trick introduced in a recent work \cite{hosseinigoki-kosut-2018-oblivious-gaussian-avc-ld}.

When $LP_1>N$ and $LP_2>N$, by letting James transmit Gaussian noise $\vbfg'\sim\cN\paren{\vzero,(N-\eta)\bfI_n}$ for an arbitrarily small constant $\eta>0$, we can show that there exists no $L$-list decodable code of rate $(R_1,R_2)$ not satisfying Eqn. \eqref{eqn:gavc_r1}, \eqref{eqn:gavc_r2} and \eqref{eqn:gavc_r1r2}. This is because under such a jamming strategy, the channel is turned into a (non-adversarial) Gaussian MAC
$\vbfz = \vbfx+\vbfy+\vbfg''$,
where $\vbfg'' = \vbfg+\vbfg'\sim\cN\paren{\vzero,(N+\sigma^2-\eta)\bfI_n}$. The result follows from the converse for list decoding for Gaussian MACs.

We now show that when $LP_1<N$ and $LP_2<N$, no positive rate can be achieved. Suppose $LP_1(1+\delta_1)=N$ and $LP_2(1+\delta_2)=N$ for some constants $\delta_1>0,\delta_2>0$. We equip James with the following jamming strategy. Suppose a codebook pair $(\cC_1,\cC_2)$ is $L$-list decodable. To jam the communication, James first flips a fair coin. If the output is 1, then he samples $\vbfx_1,\cdots,\vbfx_L$ from $\cC_2$ uniformly and independently. He transmits $\vbfs = \vbfx_1+\cdots+\vbfx_L-L\vu$ if $\normtwo{\vbfs}\le\sqrt{nN}$ and transmits $\vzero$ otherwise. If the output is 0, then James samples   $\vbfy_1,\cdots,\vbfy_L$ uniformly and independently from $\cC_2$.  He transmits $\vbfs = \vbfy_1+\cdots+\vbfy_L-L\vv$ if $\normtwo{\vbfs}\le\sqrt{nN}$ and transmits $\vzero$ otherwise. Here $\vu$ and $\vv$ are two shift vectors that can be computed based purely on $\cC_1,\cC_2$. The construction of  $\vu,\vv$ is described below.

Let 
\begin{align}
\eta_*\coloneqq\inf\curbrkt{\eta\ge0\colon \liminf_{n\to\infty}\sup_{\vu\in\bR^n}\probover{\vbfx\sim\cC_1}{\normtwo{\vbfx - \vu}\le\sqrt{nP_1\eta}}>0}.\notag
\end{align}
Note that $\eta_*\le1$ since $\probover{\vbfx\sim\cC_1}{\normtwo{\vbfx}\le\sqrt{nP_1}} = 1$ for all $n$. 
Fix $\gamma>0$ such that
\begin{align}
\gamma \coloneqq& \min\curbrkt{ \frac{ \sqrt{ (\delta_1+2\eta_* - 4L\eta_* + 2L^2\eta_*)^2 + (2L^2-4L+1)\delta_1^2 } - (\delta_1+2\eta_*-4L\eta_*+2L^2\eta_*) }{ 2L^2 - 4L + 1 }, \eta_* }. \label{eqn:def_gamma}
\end{align}\tabularnewline
Let $\eta\coloneqq\eta_*+\gamma/2$. 
The first term on the RHS of Eqn. \eqref{eqn:def_gamma} is the unique positive root of the following equation
\[ \left\{\begin{array}{l}
\gamma/2+(L-1)\sqrt{\gamma\eta} = \delta_1/2 \\
\eta = \eta_*+\gamma/2
\end{array}.\right. \]
Define
\begin{align}
\eps\coloneqq \liminf_{n\to\infty}\sup_{\vu\in\bR^n}\probover{\vbfx\sim\cC_1}{\normtwo{\vbfx - \vu}\le\sqrt{nP_1\eta}}>0.\notag
\end{align}
Therefore, for sufficiently large $n$, there exists a $\vu\in\bR^n$ such that 
\begin{align}
\probover{\vbfx\sim\cC_1}{\normtwo{\vbfx - \vu}\le\sqrt{nP_1\eta}}\ge{\eps/2}.\label{eqn:prob_x_in_k}
\end{align}
This $\vu$ is what James uses in his jamming strategy.  $\vv$ can be found similarly.

It remains to show that under such a jamming strategy, the probability of error is non-vanishing in $n$ if the sizes of $\cC_1$ and $\cC_2$ are too large. For notational convenience, let $\vbfx_0\coloneqq\vbfx$ and $\vbfy_0\coloneqq\vbfy$. Note that if the coin flip is 1, Bob receives $\vbfz = \vbfx_0 + \vbfx_1 + \cdots + \vbfx_L - L\vu + \vbfy + \vbfg$. If for any size-$L$ subset $\cL\subset\curbrkt{0,1,\cdots,L}$, 
\[\normtwo{\sum_{i\in\cL}\vbfx_i-L\vu}\le\sqrt{nN},\]
then it is impossible for Bob to tell, among $L+1$ codewords,  which $L$-sized subset of codewords were forged by James and which one was transmitted by user one. Hence, conditioned on that $\vbfx_0,\vbfx_1,\cdots,\vbfx_L$ are distinct, even using the optimal decoder, the decoding error probability is at least $\frac{1}{L+1}$  since $\vbfx_0,\vbfx_1,\cdots,\vbfx_L$ appear indistinguishable to him. 
Even the decoder knew $\vbfy$ (the encoding of user two's message $w$) was transmitted, there is nothing that the he can do better than randomly guessing a $L$-sized subset $\cL = \curbrkt{i_1,\cdots,i_L}\subset\curbrkt{0,1,\cdots,L}$ and outputting the list
$(i_1,w),\cdots,(i_L,w)$.
Similarly, if the coin flip is 0, 
\[\normtwo{\sum_{i\in\cL}\vbfy_i-L\vv}\le\sqrt{nN},\]
for all $\cL\in\binom{\curbrkt{0,1,\cdots,L}}{L}$, and $\vbfy_0,\vbfy_1,\cdots,\vbfy_L$ are distinct, then the decoding error probability is at least $\frac{1}{L+1}$.

Given the above intuition, we proceed with the formal analysis as follows. Let $T$ denote the outcome of James' coin flip. 
The average error probability is at least
\begin{align}
&\frac{1}{2}\cdot\prob{\left.\forall\cL\in\binom{\curbrkt{0,1,\cdots,L}}{L},\;\normtwo{\sum_{i\in\cL}\vbfx_i - L\vu}\le\sqrt{nN};\;\vbfx_0,\vbfx_1,\cdots,\vbfx_L\text{ are distinct}\right|T=1}\cdot\frac{1}{L+1}\label{eqn:gavc_converse_case1}\\
+&\frac{1}{2}\cdot\prob{\left.\forall\cL\in\binom{\curbrkt{0,1,\cdots,L}}{L},\;\normtwo{\sum_{i\in\cL}\vbfy_i - L\vv}\le\sqrt{nN};\;\vbfy_0,\vbfy_1,\cdots,\vbfy_L\text{ are distinct}\right|T=0}\cdot\frac{1}{L+1}.\label{eqn:gavc_converse_case2}
\end{align}
By symmetry of the cases where $T=1$ and $T=0$, it suffices to bound term \eqref{eqn:gavc_converse_case1}. Note that conditioning on $T$ can be removed since the events in the probability are independent of $T$.
\begin{align}
&\prob{\forall\cL\in\binom{\curbrkt{0,1,\cdots,L}}{L},\;\normtwo{\sum_{i\in\cL}\vbfx_i - L\vu}\le\sqrt{nN};\;\vbfx_0,\vbfx_1,\cdots,\vbfx_L\text{ are distinct}}\notag\\
\ge&\prob{\forall\cL\in\binom{\curbrkt{0,1,\cdots,L}}{L},\;\normtwo{\sum_{i\in\cL}\vbfx_i - L\vu}\le\sqrt{nN}} - \prob{\vbfx_0,\vbfx_1,\cdots,\vbfx_L\text{ are \emph{not} distinct}}.\label{eqn:gavc_converse_two_terms}
\end{align}
The second term in Eqn. \eqref{eqn:gavc_converse_two_terms} equals
\begin{align}
1-\frac{M-1}{M}\frac{M-2}{M}\cdots\frac{M-L}{M} \eqcolon& \delta_n .\notag
\end{align}
Note that $\delta_n = o_n(1) $.

The first term in Eqn. \eqref{eqn:gavc_converse_two_terms} is more involved. 
Let $\cB\coloneqq\cB^n\paren{\vu,\sqrt{nP_1\eta}}$ denote the $n$-dimensional Euclidean ball centered around $\vu$ of radius $\sqrt{nP_1\eta}$.
Note that by Eqn. \eqref{eqn:prob_x_in_k}, $\prob{\vbfx\in\cB}\ge{\eps/2}$.
We claim that the first term can be bounded as follows.
\begin{align}
&\prob{\bigcap_{\cL\in\binom{\curbrkt{0,1,\cdots,L}}{L}}\curbrkt{\normtwo{\sum_{i\in\cL}(\vbfx_i-\vu)}^2\le nN}}\notag\\
\ge&\prob{ \bigcap_{i\in\curbrkt{0,1,\cdots,L}}\curbrkt{\vbfx_i\in\cB} \cap \bigcap_{\substack{i\ne j\\i,j\in\curbrkt{0,1,\cdots,L}}}\curbrkt{\inprod{\vbfx_i - \vu}{\vbfx_j-\vu}\le nP_1\sqrt{\gamma\eta}} }.\label{eqn:implies_bound_on_normsquared}
\end{align}
This follows due to the following  reasons. Assume the event in Eqn. \eqref{eqn:implies_bound_on_normsquared} holds. Then 
for any $\cL\in\binom{\curbrkt{0,1,\cdots,L}}{L}$, we have{}
\begin{align}
\normtwo{\sum_{i\in\cL}(\vbfx_i-\vu)}^2 = &\sum_{i\in\cL}\normtwo{\vbfx_i-\vu}^2+\sum_{\substack{i\ne j\\i,j\in\cL}}\inprod{\vbfx_i - \vu}{\vbfx_j - \vu}\notag\\
\le& LnP_1\eta + L(L-1)nP_1\sqrt{\gamma\eta}\notag\\
=&nLP_1\paren{ \eta_*+\gamma/2 + (L-1)\sqrt{\gamma\eta} }\label{eqn:by_choice_of_eta}\\
\le&nLP_1\paren{1 + \gamma/2 + (L-1)\sqrt{\gamma\eta}},\label{eqn:bound_on_etastar}\\
\le&nLP_1\paren{1+\delta_1/2}\label{eqn:by_choice_of_various_param}\\
<&nN,\label{eqn:by_gap_btwn_lp_n}
\end{align}
where Eqn. \eqref{eqn:by_choice_of_eta} is by definition of $\eta\coloneqq\eta_*+\gamma/2$, Eqn. \eqref{eqn:bound_on_etastar} follows since $\eta_*\le1$, Eqn. \eqref{eqn:by_choice_of_various_param} is by the choice of $\gamma$ (Eqn. \eqref{eqn:def_gamma}) such that $ \gamma/2+(L-1)\sqrt{\gamma\eta} \le \delta_1/2 $, and Eqn. \eqref{eqn:by_gap_btwn_lp_n} follows since $LP_1(1+\delta_1) = N$.

Continuing with Eqn. \eqref{eqn:implies_bound_on_normsquared}, it can be further lower bounded by
\begin{align}
\prob{ \bigcap_{i\in\curbrkt{0,1,\cdots,L}}\curbrkt{\vbfx_i\in\cB} } - \prob{ \bigcap_{i\in\curbrkt{0,1,\cdots,L}}\curbrkt{\vbfx_i\in\cB} \cap \bigcup_{\substack{i\ne j\\i,j\in\curbrkt{0,1,\cdots,L}}}\curbrkt{\inprod{\vbfx_i-\vu}{\vbfx_j-\vu}>nP_1\sqrt{\gamma\eta}} }.\label{eqn:another_two_terms_to_bd}
\end{align}
The first term of Eqn. \eqref{eqn:another_two_terms_to_bd} equals
\begin{align}
\prob{\vbfx\in\cB}^{L+1}\ge&({\eps/2})^{L+1}.
\end{align}
since each $\vbfx_i$ is i.i.d. The second term of Eqn. \eqref{eqn:another_two_terms_to_bd} is upper bounded as 
\begin{align}
& \prob{ \bigcap_{i\in\curbrkt{0,1,\cdots,L}}\curbrkt{\vbfx_i\in\cB} \cap \bigcup_{\substack{i\ne j\\i,j\in\curbrkt{0,1,\cdots,L}}}\curbrkt{\inprod{\vbfx_i-\vu}{\vbfx_j-\vu}>nP_1\sqrt{\gamma\eta}} }\notag\\
=&\prob{    \bigcup_{\substack{i\ne j\\i,j\in\curbrkt{0,1,\cdots,L}}} \paren{ \curbrkt{\inprod{\vbfx_i-\vu}{\vbfx_j-\vu}>nP_1\sqrt{\gamma\eta}} \cap  \bigcap_{i\in\curbrkt{0,1,\cdots,L}}\curbrkt{\vbfx_i\in\cB} } }\notag\\
\le&\frac{(L+1)L}{2}\prob{ \curbrkt{\inprod{\vbfx'-\vu}{\vbfx''-\vu}>nP_1\sqrt{\gamma\eta}}\cap\curbrkt{\vbfx'\in\cB}\cap\curbrkt{\vbfx''\in\cB} }.\label{eqn:bound_ip}
\end{align}
To upper bound the probability in Eqn. \eqref{eqn:bound_ip}, note that the event in the probability implies
\begin{align}
\normtwo{(\vbfx'-\vu) - \sqrt{\gamma/\eta}(\vbfx''-\vu)}^2 = &\normtwo{\vbfx'-\vu}^2 + \frac{\gamma}{\eta}\normtwo{\vbfx''-\vu}^2 - 2\sqrt{\gamma/\eta}\inprod{\vbfx'-\vu}{\vbfx''-\vu}\notag\\
<&nP_1\eta + \frac{\gamma}{\eta}nP_1\eta - 2\sqrt{\gamma/\eta}nP_1\sqrt{\gamma\eta}\notag\\
=&nP_1(\eta-\gamma)\notag\\
=&nP_1(\eta_*-\gamma/2).\notag
\end{align}
Therefore, we have
\begin{align}
&\prob{ \curbrkt{\inprod{\vbfx'-\vu}{\vbfx''-\vu}>nP_1\sqrt{\gamma\eta}}\cap\curbrkt{\vbfx'\in\cB}\cap\curbrkt{\vbfx''\in\cB} }\notag\\
\le&\prob{\normtwo{\vbfx' - \paren{\vu+\sqrt{\gamma/\eta}(\vbfx''-\vu)}}^2<nP_1(\eta_*-\gamma/2)}.\notag
\end{align}
By the definition of $\eta_*$, 
\begin{align}
\liminf_{n\to\infty}\sup_{\vu'\in\bR^n}\prob{ \normtwo{\vbfx' - \vu'}^2<nP_1(\eta_*-\gamma/2)  }= 0.\notag
\end{align}
In other words,  take $\vu'\coloneqq\vu+\sqrt{\gamma/\eta}(\vbfx''-\vu)$, then 
\begin{align}
\prob{ \normtwo{\vbfx'- \paren{ \vu+\sqrt{\gamma/\eta}(\vbfx''-\vu) }}^2<nP_1(\eta_*-\gamma/2) } = \delta_n' ,\notag
\end{align}
where $ \delta_n' = o_n(1) $.
Substituting it back to Eqn. \eqref{eqn:bound_ip}, the second term in Eqn. \eqref{eqn:another_two_terms_to_bd} is at most $\frac{(L+1)L}{2}\delta_n' $.

Finally, for $T=1$ case,  the average error probability (Eqn. \eqref{eqn:gavc_converse_case1}) is at least
\begin{align}
\frac{1}{2(L+1)}\paren{({\eps/2})^{L+1} - \frac{(L+1)L}{2}\delta_n' -\delta_n }.\notag
\end{align}
By similar calculations, the average probability of error when $T=0$ is also bounded away from 0. This finishes the proof for converse.

\section{Open questions and future directions}\label{sec:open}
We list several open questions and future directions.
\begin{itemize}
  \item In an ongoing work \cite{zhang-jaggi-2020-listdec_obli_avc_inprep}, Zhang and Jaggi managed to close the gap between upper and lower bounds on list sizes for list decoding for oblivious AVCs under input and state constraints. 
  This is achieved by introducing yet another new notion of symmetrizability named \emph{$ \cp $-symmetrizability} (where $\cp$ stands for \emph{completely positive}). It is believed that in the oblivious case $ \cp $-symmetrizability collapses to weak symmetrizability introduced by \cite{sarwate-gastpar-2012-list-dec-avc-state-constr}, which, if is true, will prove a conjecture left in \cite{sarwate-gastpar-2012-list-dec-avc-state-constr}. 
  This will be justified in a future version of \cite{zhang-jaggi-2020-listdec_obli_avc_inprep}. 
  In the AVMAC setting, it is natural to import ideas from \cite{zhang-jaggi-2020-listdec_obli_avc_inprep} and check how $ \cp $-symmetrizability should be defined properly and what it yields.
  This is left as one of our future directions.
  \item In a recent work \cite{pereg-steinberg-2019-obli-avmac-with-wo-constr} which dealt with \emph{unique} decoding for two-user AVMACs, the boundary case where exactly one user has capacity 0 was solved which was left as an open question in \cite{ahlswedecai-1999-obli-avmac-no-constr}. 
  This does not directly follow from single-user symmetrizability since the  user who transmits at zero-rate may use \emph{nonempty} codebook of subexponential size. 
  This increases the difficulty for James to jam.
  The boundary case for list decoding for AVMACs will be treated in a future version of this paper. 
\end{itemize}

\section{Acknowledgement}
YZ would like to thank Sidharth Jaggi for helpful discussions and encouragement. 

\bibliographystyle{alpha}
\bibliography{IEEEabrv,ref} 

\end{document}